\def\doi{9(3:01)2013}
\newcommand{\OMIT}[1]{}
\newcommand{\N}{\mathbb{N}}
\newcommand{\cal}[1]{\mathcal{#1}}
\newcommand{\A}{{\cal A}}
\newcommand{\R}{{\cal R}}
\newcommand{\LL}{{\cal L}}
\newcommand{\CC}{{\cal C}}
\renewcommand{\SS}{{\cal S}}
\newcommand{\K}{{\cal K}}
\renewcommand{\l}{\ell} 
\renewcommand{\phi}{\varphi}
\newcommand{\fth}{\hfill $\Box$}
\newtheorem{theorem}{Theorem}[section]
\newtheorem{claim}{Claim}[theorem]
\newtheorem{observation}[theorem]{Observation}
\newcommand{\PSPACE}{{\sc PSpace}}
\newcommand{\pspace}{\PSPACE}
\newcommand{\ptime}{{\sc Ptime}}
\newcommand{\nlog}{{\sc NLogSpace}}
\newcommand{\dlog}{{\sc DLogSpace}}
\newcommand{\nexp}{{\sc NExptime}}
\newcommand{\REC}{{\sf REC}}
\newcommand{\REG}{{\sf REG}}
\newcommand{\NFA}{NFA\xspace}
\newcommand{\lang}[1]{\mathcal{L}(#1)}
\newcommand{\RAT}{{\sf RAT}}
\newcommand{\SCR}{{\sf SCR}}
\newcommand{\Reg}{\REG} 
\newcommand{\Regk}[1]{\REG_{#1}} 
\newcommand{\Rat}{\RAT} 
\newcommand{\sigmas}{\Sigma^*}
\newcommand{\sigmabot}{\Sigma_\bot}
\newcommand{\gammabot}{\Gamma_\bot}
\newcommand{\sigbot}{\sigmabot}
\newcommand{\e}{\varepsilon}
\newcommand{\pref}{\preceq_{{\rm pref}}}
\newcommand{\suff}{\preceq_{{\rm suff}}}
\newcommand{\subw}{\preceq}
\newcommand{\subseq}{\sqsubseteq}
\newcommand{\set}[1]{\{#1\}}
\newcommand{\cum}{\textit{cum}\xspace}
\newcommand{\INTPI}[3]{({#1} \mathrel{\cap_{#3}} {#2}) \stackrel{\text{\tiny?}}{=}\emptyset}
\newcommand{\INTP}[2]{({#1} \cap {#2})\ensuremath{\stackrel{\text{\tiny ?}}{=}}\emptyset}
\newcommand{\INT}[2]{({#1} \cap {#2}) \stackrel{\text{\tiny ?}}{=}\emptyset}
\newcommand{\GENINT}{\text{\sc GenInt}}
\newcommand{\ecrpq}{{\sc ECRPQ}} 
\newcommand{\crpq}{{\sc CRPQ}}
\begin{document}

\title[Graph Logics with Rational Relations]{Graph Logics with Rational Relations\rsuper*}

\author[P.~Barcel{\'o}]{Pablo Barcel{\'o}\rsuper a}	
\address{{\lsuper a}Department of Computer Science, 
University of Chile}	
\email{pbarcelo@dcc.uchile.cl}  

\author[D.~Figueira]{Diego Figueira\rsuper b}	
\address{{\lsuper{b,c}}Laboratory for Foundations of Computer Science, 
University of Edinburgh}	
\email{\{dfigueir, libkin\}@inf.ed.ac.uk}  

\author[L.~Libkin]{Leonid Libkin\rsuper c}	

\thanks{{\lsuper{a,b,c}}Partial support provided by Fondecyt grant
  1110171 for Barcel\'o and EPSRC grants G049165 and J015377 for
  Figueira and Libkin.} 

\keywords{Regular relations; Rational relations; Recognizable relations; intersection problem; RPQ; graph databases; non primitive recursive}
\subjclass{F.4.3, H.2.3, F.2}

\ACMCCS{[{\bf Theory of computation}]: Formal languages and automata
  theory; Theory and algorithms for application domains---Database
  theory---Database query languages (principles); Design and analysis
  of algorithms}

\titlecomment{{\lsuper*}This is the full version of the conference paper \cite{bfl2012}.}

\begin{abstract}
We investigate some basic questions about the interaction of regular
and rational relations on words.  The primary motivation comes from
the study of logics for querying graph topology, which have recently
found numerous applications. Such logics use conditions on paths
expressed by regular languages and relations, but they often need to
be extended by rational relations such as subword or
subsequence. Evaluating formulae in such extended graph logics boils
down to checking nonemptiness of the intersection of rational
relations with regular or recognizable relations (or, more generally,
to the generalized intersection problem, asking whether some
projections of a regular relation have a nonempty intersection with a
given rational relation).

We prove that for several basic and commonly used rational relations,
the intersection problem with regular relations is either undecidable
(e.g., for subword or suffix, and some generalizations), or decidable
with non-primitive-recursive complexity (e.g., for subsequence
and its generalizations). These results are used to rule out many
classes of graph logics that freely combine regular and rational
relations, as well as to provide the simplest problem related to
verifying lossy channel systems that has non-primitive-recursive
complexity.  We then prove a dichotomy result for logics combining
regular conditions on individual paths and rational relations on
paths, by showing that the syntactic form of formulae classifies them
into either efficiently checkable or undecidable cases. We also give
examples of rational relations for which such logics are decidable
even without syntactic restrictions. 
\end{abstract}

\maketitle

\section{Introduction}
\label{sec:intro}
The motivation for the problems investigated in this paper comes from the
study of logics for querying graphs. Such logics form the basis of
query languages for graph databases, that have recently found
numerous applications in areas including biological networks, social
networks, Semantic Web, crime detection, etc.\ (see \cite{AG-survey}
for a survey) and led to multiple systems and prototypes. In such
applications, data is usually represented as a labeled graph. For
instance, in social networks, people are nodes, and labeled edges
represent different types of relationship between them;  in RDF -- the
underlying data model of the Semantic Web -- data is modeled as a graph,
with RDF triples naturally representing labeled edges.

The questions that we address are related to the interaction of
various classes of relations on words, for instance, rational
relations (examples of those include subword and subsequence) or
regular relations (such as prefix, or equality of words). An example
of a question we are interested in is as follows: is it decidable
whether a given regular relation contains a pair $(w,w')$ so that
$w$ is a subword/subsequence of $w'$?  Problems like this are very
basic and deserve a study on their own, but they are also
necessary to answer questions on the power and complexity of querying
graph databases. We now explain how they arise in that setting.

Logical languages for querying graph data have been developed since the
late 1980s (and some of them became precursors of languages later used
for XML).  They query the {topology}
of the graph, often leaving querying data that might be stored in the
nodes to a standard database engine. Such logics are quite
different in their nature and applications from another class of
graph logics based on spatial calculi  \cite{CGG02,DGG07}.
Their formulae combine various
reachability patterns. The simplest form is known as   
{\em regular path queries (RPQs)} \cite{CMW87,CM90}; they check the
existence of a path whose label belongs to a regular language.
Those are typically used as atoms and then closed under
conjunction and existential quantification, resulting in the class of 
{\em conjunctive regular path queries (\crpq s)}, which have
been the subject of much investigation~\cite{CGLV00,DT01,FLS98}. For
instance, a \crpq\ may ask for a node $v$ such that there 
exist nodes $v_1$ and $v_2$ and paths from $v$ to $v_i$ with the label
in  a regular language $L_i$, for $i=1,2$.

The expressiveness of these queries, however, became insufficient in
applications such as the Semantic Web or biological
networks due to their inability to {\em compare} paths. For instance,
it is a common requirement in RDF languages to 
 compare paths based on specific semantic
associations \cite{AS03}; biological sequences often need to
be compared for similarity, based, for example, on the edit distance.

To address this, an extension of \crpq s with relations on paths was
proposed \cite{pods10}. It used {\em regular} relations on
paths, i.e., relations given by synchronized automata
\cite{oldstuff,frenchreinvention}. Equivalently, these are the
relations definable in automatic structures on words
\cite{jacm2003,graedel2000,bruyere}. They include
prefix, equality, equal length of words, or fixed edit
distance between words. The extension of \crpq s with them, called \ecrpq s, was
shown to have acceptable complexity (\nlog\ with respect to
data, \pspace\ with respect to query).

However, the expressive power of \ecrpq s is 
still short of the expressiveness needed in many
applications. For instance, semantic associations between paths used in
RDF applications often deal with {subwords} or {subsequences},
but these relations are {\em not} regular. They are {\em rational}:
they are still accepted by automata, but those whose 
heads move asynchronously. Adding them to a query language must be
done with extreme care: simply replacing regular relations with
rational in the definition of \ecrpq s makes query evaluation
undecidable!

So we set out to investigate the following problem: given a class of
graph queries, e.g., \crpq s or \ecrpq s, what happens if one adds the
ability to test whether pairs of paths belong to a rational relation $S$,
such as subword or subsequence? We start by observing that this
problem is a generalization of the {\em intersection problem}:
given a regular relation
$R$, and a rational relation $S$, is $R\cap S\neq\emptyset$?  It is
well known that there exist rational relations $S$ for which it is
undecidable \cite{berstel}; however, we are not interested 
in artificial relations obtained by encoding PCP instances, but rather
in very concrete relations used in querying graph data.

The intersection problem captures the essence of graph logics 
\ecrpq s and \crpq s (for the latter, when restricted to the class of
recognizable relations \cite{berstel,choffrut-bulletin}). In fact,
query evaluation can be cast as the {\em generalized intersection
  problem}. Its input includes an $m$-ary 
regular
relation $R$, a binary rational relation $S$, and a set $I$ of pairs
from $\{1,\ldots,m\}$. It asks whether there is a tuple
$(w_1,\ldots,w_m)\in R$ so that $(w_i,w_j)\in S$ whenever $(i,j)\in
I$. For $m=2$ and $I=\{(1,2)\}$, this is the usual intersection
problem.

Another motivation for looking at these basic problems
comes from verification of lossy channel systems (finite-state
processes that communicate over unbounded, but lossy, FIFO
channels). Their reachability problem is known to be
decidable, although the complexity is not bounded by any
multiply-recursive function \cite{CS-lics08}. In fact, a
``canonical'' problem used in reductions showing this  enormous
complexity \cite{CS-fsttcs07,CS-lics08} can be restated
as follows: given a binary rational relation $R$, does it have a pair
$(w,w')$ so that $w$ is a subsequence of $w'$? This naturally leads to
the question whether the same bounds hold for the simpler instance of
the intersection problem when we use {regular} relations instead of
rational ones.  We actually show that this is true.

\smallskip
\subsubsection*{\underline{Summary of results}}
We start by showing that evaluating \crpq s and \ecrpq s extended with
a rational relation $S$ can be cast as the generalized intersection
problem for $S$ with recognizable and regular relations
respectively. Moreover, the complexity of the basic intersection
problem is a lower bound for the complexity of query
evaluation.

We then study the complexity of the intersection problem for fixed
relations $S$. For recognizable relations, it is well known to
be efficiently decidable for every rational $S$. For regular
relations, we show that if $S$ 
is the subword, or the suffix relation, then the problem is
undecidable. That is, it is undecidable to check, given a binary
regular relation $R$, whether it contains a pair $(w,w')$ so that $w$
is a subword of $w'$, or even a suffix of $w'$. We also present a 
generalization of this result. 

The analogous problem for the subsequence relation is known to be
decidable, and, if the input is a rational relation $R$, then the
complexity is non-multiply-recursive \cite{CS-fsttcs07}. We extend
this in two ways. First, we show that the lower bound remains true even for
regular relations $R$. Second, we extend decidability to the class of
all rational relations for which one projection is closed under
subsequence (the subsequence relation itself is trivially such,
obtained by closing the first projection of the equality relation).

In addition to establishing some basic facts about classes of
relations on words, these results tell us about the infeasibility of
adding rational relations to \ecrpq s: in fact adding subword makes
query evaluation undecidable, and while it remains decidable with
subsequence, the complexity is prohibitively high. 

So we then turn to the generalized intersection problem with
recognizable relations, corresponding to the evaluation of \crpq s
with an extra relation $S$. We show that the shape of the relation $I$
holds the key to decidability. If its underlying undirected graph is
acyclic, then the problem is decidable in \pspace\ for every rational
relation $S$ (and for a fixed formula the complexity drops to \nlog).
In the cyclic case, the problem is undecidable 
for some rational relation $S$. For relations generalizing
subsequence, we have decidability when $I$ is a DAG, and for
subsequence itself, as well as for suffix, query evaluation is
decidable regardless of the shape of \crpq s. 

Thus, under the mild syntactic restriction of
acyclicity of comparisons with respect to rational relations,
such relations can be added to the common class \crpq\ of
graph queries, without incurring a high complexity cost. 

\medskip
\subsubsection*{\underline{Organization}}
We give basic definitions in Section \ref{sec:prelim} and define the
main problems we study in Section \ref{sec:gip}. Section \ref{gl:sec}
introduces graph logics and establishes their connection with the
(generalized) intersection problem. Section \ref{intp:sec} studies
decidable and undecidable cases of the intersection problem. Section
\ref{restr:sec} looks at the case of recognizable relations and \crpq
s and establishes decidability results based on the intersection
pattern. 



\section{Preliminaries} 
\label{sec:prelim}
\newcommand{\wproj}{\pi}
Let $\N = \set{1,2,\dotsc}$, $[i..j] = \set{i, i+1, \dotsc, j}$ (if $i > j$, $[i..j]=\emptyset$), $[i] = [1..i]$. Given, $A,B \subseteq \N$, an \emph{increasing} function  $f : A \to B$ is one such that $f(i) \geq f(j)$ whenever $i>j$. If $f(i) > f(j)$ we call it \emph{strictly} increasing.

\subsubsection*{\underline{Alphabets, languages, and morphisms}} 

We shall use letters $\Sigma$, $\Gamma$
to denote finite alphabets. The set of all finite words
over an alphabet $\Sigma$ is denoted by $\sigmas$. We write $\e$ for
the empty word, $w\cdot w'$ for the concatenation of two words, and
$|w|$ for the length of a  word $w$. Given a word $w \in \Sigma^*$, $w[i..j]$ stands for the substring in positions $[i..j]$, $w[i]$ for $w[i..i]$, and $w[i..]$ for $w[i..|w|]$. Positions in the word start with $1$.

If $w=w' \cdot u \cdot w''$, then 
\begin{iteMize}{$\bullet$}
\item $u$ is a {\em subword} of $w$
(also called \emph{factor} in the literature, written as $u \subw w$),  
\item $w'$ is a {\em prefix} of $w$ (written as $w'
\pref w$), and 
\item $w''$ is a {\em suffix} of $w$ (written as $w'' \suff w$). 
\end{iteMize}

We say that $w'$ is a {\em subsequence} of $w$ (also called
\emph{subword embedding} or \emph{scattered subword} in the literature, written as $w' \subseq 
w$) if $w'$ is obtained by removing some letters (perhaps none) from
$w$, i.e., $w=a_1\ldots a_{n}$, and $w'=a_{i_1}a_{i_2}\ldots
a_{i_k}$, where $1 \leq i_1 < i_2 < \ldots < i_k\leq n$. 

If $\Sigma \subset \Gamma$ and $w\in\Gamma^*$, then by $w_\Sigma$ we
denote the projection of $w$ on $\Sigma$. That is, if $w=a_1\ldots
a_n$ and $a_{i_1},\ldots, a_{i_k}$ are precisely the letters from
$\Sigma$, with $i_1 < \ldots < i_k$, then $w_\Sigma=a_{i_1}\ldots
a_{i_k}$. 

\smallskip

Recall that a {\em monoid} $M=\langle U,\cdot,1\rangle$ has an
associative 
binary operation $\cdot$ and a neutral element $1$ satisfying
$1x=x1=x$ for all $x$ (we often write $xy$ for $x\cdot y$).
The set $\Sigma^*$ with the operation of concatenation and 
the neutral element $\e$ forms a {monoid} 
$\langle \Sigma^*,\cdot,\e\rangle$, 
the free monoid generated by $\Sigma$.
A  function $f: M\to M'$ between two monoids 
is a {\em morphism} if it sends the neutral element of $M$ to the
neutral element of $M'$, and if $f(xy)=f(x)f(y)$ for all
$x,y\in M$. Every morphism
$f: \langle \Sigma^*,\cdot,\e\rangle \to M$ is uniquely determined by
the values $f(a)$, for $a\in \Sigma$, as $f(a_1\ldots a_n)=f(a_1)\cdots
f(a_n)$. A morphism $f: \langle \Sigma^*,\cdot,\e\rangle \to \langle
\Gamma^*,\cdot,\e\rangle$ is called {\em alphabetic} if $f(a)\in
\Gamma\cup\{\e\}$, and {\em strictly alphabetic} if $f(a)\in \Gamma$
for each $a\in \Sigma$, see  \cite{berstel}. 

A language $L$ is a subset of $\Sigma^*$, 
for some finite alphabet $\Sigma$. 
It is {\em recognizable} if there is a
finite monoid $M$, a morphism $f: \langle \Sigma^*,\cdot,\e\rangle \to
M$, and a subset $M_0$ of $M$ such that $L=f^{-1}(M_0)$. 

A language $L$ is {\em regular} if there exists an NFA
(non-deterministic finite 
automaton) $\A=\langle Q,\Sigma,q_0,\delta,F\rangle$ such that
$L=\lang \A$, the
language of words accepted by $\A$. We use the standard notation for
NFAs, where $Q$ is the set of states, $q_0$ is the initial state, $F$
is the set of final states, and $\delta\subseteq Q\times\Sigma\times
Q$ is the transition relation.  

A language is {\em rational} if it is denoted by a regular expression; such
expressions are built from $\emptyset$, $\e$, and alphabet letters by
using operations of concatenation ($e\cdot e'$), union ($e \cup e'$),
and Kleene star ($e^*$). It is of course a classical result of
formal language theory that the classes of recognizable, regular, and
rational languages coincide.

\medskip
\subsubsection*{\underline{Recognizable, regular, and rational relations}}

While the notions of recognizability, 
regularity, and rationality
coincide over
languages $L\subseteq \Sigma^*$, they differ over 
{relations} over $\Sigma$, i.e., subsets of
$\Sigma^* \times \ldots \times \Sigma^*$. We now define
those (see
\cite{berstel,carton,choffrut-bulletin,oldstuff,frenchreinvention,thomas92}). 

Since $\langle \Sigma^*,\cdot,\e\rangle$ is a 
monoid, 
the product $(\Sigma^*)^n$ has the structure of a monoid too. We can
thus define {\em recognizable $n$-ary relations} over $\Sigma$ as
subsets $R\subseteq (\Sigma^*)^n$ so that there exists a finite monoid
$M$ and a morphism $f: (\sigmas)^n\to M$ such that $R=f^{-1}(M_0)$ for
some $M_0\subseteq M$.  The class of $n$-ary recognizable relations
will be denoted by $\REC_n$; when $n$ is clear or irrelevant, we write
just $\REC$.

It is well-known that a relation $R \subseteq (\sigmas)^n$ is in
$\REC_n$ iff it is a finite union of the sets of the form
$L_1\times\ldots\times L_n$, where each $L_i$ is a regular language
over $\Sigma$, see \cite{berstel,oldstuff}. 

Next, we define the class of regular relations. Let
$\bot\not\in\Sigma$ be a new alphabet letter, and let $\sigmabot$ be 
$\Sigma\cup\{\bot\}$. Each tuple $\bar w=(w_1,\ldots,w_n)$ of words from
$\sigmas$ can be viewed as a word over $\sigmabot^n$ as follows: pad
words $w_i$ with $\bot$ so that they all are of the same length, and
use as the $k$th symbol of the new word the $n$-tuple of the $k$th
symbols of the padded words. Formally, let $\l=\max_i |w_i|$. Then
$w_1\otimes \ldots \otimes w_n$ is a word of length $\l$ whose $k$th
symbol is $(a_1,\ldots,a_n)\in \sigmabot^n$ such that 
$$a_i = \begin{cases} \text{the }k\text{th letter of }w_i & \text{ if
  }|w_i| \geq k \\ \bot & \text{ otherwise.}\end{cases}
$$ 
We shall also write $\otimes\bar w$ for $w_1\otimes \ldots \otimes
w_n$. We define  $\wproj_i (u_1 \otimes \dotsb \otimes u_k) = u_i$ for all $i \in [k]$.
A relation $R\subseteq (\Sigma^*)^n$ is called a {\em regular $n$-ary
  relation} over $\Sigma$ if there is a finite automaton $\A$ over
$\sigbot^n$ that accepts $\{\otimes\bar w\ | \ \bar w\in R\}$. 
 The class of $n$-ary regular relations
is denoted by $\REG_n$; as before, we write $\REG$ when $n$ is clear
or irrelevant. 

Finally, we define rational relations. There are two equivalent ways
of doing it. One uses regular expressions, which are now built from
tuples $\bar a \in (\Sigma\cup\{\e\})^n$ using the same operations of
union, concatenation, and Kleene star. Binary relations $\suff$,
$\subw$, and $\subseq$ are all rational: the expression 
$\big(\bigcup_{a\in\Sigma}(\e,a)\big)^*\cdot
\big(\bigcup_{a\in\Sigma}(a,a)\big)^*$ defines 
$\suff$, the expression $\big(\bigcup_{a\in\Sigma}(\e,a)\big)^*\cdot
\big(\bigcup_{a\in\Sigma}(a,a)\big)^* \cdot
\big(\bigcup_{a\in\Sigma}(\e,a)\big)^*$ defines $\subw$, and the
expression $\big(\bigcup_{a\in\Sigma}(\e,a) \cup (a,a)\big)^*$ defines
$\subseq$. 

Alternatively, $n$-ary rational relations can be defined by means of $n$-tape
automata, that 
have $n$ heads for the tapes and one additional control; at every
step, based on the state and the letters it is reading, the automaton
can enter a new state and move some (but not necessarily all) tape
heads. The classes of $n$-ary relations so defined are called {\em
  rational $n$-ary relations}; we use the notation $\RAT_n$ or just
$\RAT$, as before. 

\medskip
\subsubsection*{\underline{Relationships between classes of relations}}
While it is well known that $\REC_1=\REG_1=\RAT_1$, we have strict
inclusions $$\REC_k \ \subsetneq\ \REG_k\ 
\subsetneq\ \RAT_k$$ for every $k>1$ (see for example \cite{berstel}). For instance, ${\pref} \in \REG_2 -
\REC_2$ and ${\suff}\in\RAT_2-\REG_2$. 

The classes of recognizable and regular relations are closed under
intersection; however the class of rational relations is not. In fact,
one can find $R\in\REG_2$ and $S\in\RAT_2$ so that $R\cap
S\not\in\RAT_2$. However, if $R\in\REC_m$ and $S\in\RAT_m$, then
$R\cap S\in\RAT_m$.

Binary rational relations can be characterized as follows
\cite{berstel,nivat68}. A relation $R\subseteq \Sigma^*\times\Sigma^*$
is rational iff there is a  
finite alphabet $\Gamma$, a regular language $L\subseteq \Gamma^*$ and
two alphabetic
morphisms $f,g: \Gamma^*\to\Sigma^*$ such that $R=\{(f(w),g(w)) \ |
\ w\in L\}$. If we require $f$ and $g$ to be strictly alphabetic
morphisms,  we get the class of {\em length-preserving} regular
relations, i.e., $R\in\REG_2$ so that $(w,w')\in R$ implies
$|w|=|w'|$. Regular binary relations are then finite
unions of relations of the form $\{(w\cdot u,w')\ | \ (w,w')\in R,
\ u\in L\}$ and $\{(w,w'\cdot u)\ | \ (w,w')\in R,
\ u\in L\}$, where $R$ ranges over length-preserving regular
relations, and $L$ over regular languages. 

\OMIT{
If $\Sigma \subset \Gamma$ and $w\in\Gamma^*$, then by $w_\Sigma$ we
denote the projection of $w$ on $\Sigma$. That is, if $w=a_1\ldots
a_n$ and $a_{i_1},\ldots, a_{i_k}$ are precisely the letters from
$\Sigma$, with $i_1 < \ldots < i_k$, then $w_\Sigma=a_{i_1}\ldots
a_{i_k}$. Using this, one can characterize rational relations as
relations of the form $\{(w_\Sigma, u_\Sigma) \ | \ (w,u) \in R\}$,
where $R$ ranges over $\REG_2$ over the alphabet $\sigmabot$. }

\medskip
\subsubsection*{\underline{Properties of classes of relations}} Since
relations in 
$\REC$ and $\REG$ are given by NFAs, they inherit all the
closure/decidability properties of regular languages. If $R\in\RAT$,
then each of its projections is a regular language, and can be
effectively constructed (e.g., from the description of $R$ as an
$n$-tape automaton). Hence, the nonemptiness problem is decidable for
rational relations. However, testing nonemptiness of the intersection
of two rational relations is undecidable \cite{berstel}. Also, for 
$R, R' \in \RAT$, the following are undecidable:
checking whether $R\subseteq R'$ or $R=R'$, universality
($R=\sigmas\times\sigmas$), and checking whether $R\in\REG$ or
$R\in\REC$ \cite{berstel,carton,lisovik}. 

\smallskip
\subsubsection*{Remark} We defined recognizable,
regular, and rational relations over the same alphabet, i.e., as
subsets of $(\sigmas)^n$. Of course it is possible to define them as
subsets of $\Sigma_1 \times \ldots \times \Sigma_n$, with the
$\Sigma_i$'s not necessarily distinct. Technically, there are no
differences and all the results will continue to hold. Indeed, one can
simply consider a new alphabet $\Sigma$ as the disjoint union of 
$\Sigma_i$'s, and enforce the condition that the $i$th projection only
use the letters from $\Sigma_i$ (this is possible for all the classes
of relations we consider). In fact, in the proofs we shall be using both
types of relations.

\medskip

\subsubsection*{\underline{Well-quasi-orders}}
  A well-quasi-order ${\leq} \subseteq A \times A$ is a reflexive and transitive relation such that for every infinite sequence $(a_i)_{i \in \N}$ over $A$ there are $i<j$ with $a_i \leq a_j$. We will make use of the following two lemmas.

\begin{lem}[Higman's Lemma \cite{higmanslem}]
For every alphabet $\Sigma$, the subsequence relation ${\subseq} \subseteq \Sigma^* \times \Sigma^*$ is a well quasi-order.
\end{lem}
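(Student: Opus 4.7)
My plan is to follow the classical Nash-Williams style minimal-bad-sequence argument. First, reflexivity and transitivity of $\subseq$ are immediate from the definition (compose the two embedding index sequences), so the real content is showing that every infinite sequence $(w_i)_{i\in\N}$ over $\Sigma^*$ contains indices $i<j$ with $w_i \subseq w_j$. Call an infinite sequence \emph{bad} if no such pair exists, and argue by contradiction, assuming a bad sequence exists.

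Next I construct a \emph{minimal} bad sequence $(w_i)$ by recursion on $i$: having chosen $w_1,\dotsc,w_{i-1}$ so that they can be extended to a bad sequence, pick $w_i$ of minimum length among all words that still permit such an extension. Since $\epsilon \subseq u$ for every $u \in \Sigma^*$, no $w_i$ can be $\epsilon$, so write $w_i = a_i \cdot v_i$ with $a_i \in \Sigma$ and $|v_i| < |w_i|$. Because $\Sigma$ is finite, some letter $a \in \Sigma$ equals $a_i$ for an infinite set of indices $I = \{i_1 < i_2 < \dotsb\}$.

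Now consider the hybrid sequence
\[
w_1,\ w_2,\ \dotsc,\ w_{i_1-1},\ v_{i_1},\ v_{i_2},\ v_{i_3},\ \dotsc
\]
Since $|v_{i_1}| < |w_{i_1}|$, by minimality of $w_{i_1}$ this sequence cannot be bad, so it contains a pair $x \subseq y$ with $x$ occurring before $y$. I case-split on where these two elements lie: (i) if both are among $w_1,\dotsc,w_{i_1-1}$, then the original sequence is already good, a contradiction; (ii) if $x = w_p$ with $p < i_1$ and $y = v_{i_q}$, then $w_p \subseq v_{i_q} \subseq a \cdot v_{i_q} = w_{i_q}$, again making the original sequence good; (iii) if $x = v_{i_p}$ and $y = v_{i_q}$ with $p<q$, then prepending $a$ to both yields $w_{i_p} \subseq w_{i_q}$, the same contradiction. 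Every case collapses, so no bad sequence can exist and $\subseq$ is a well-quasi-order.

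The only delicate step is the recursive construction of the minimal bad sequence, which implicitly uses dependent choice; the combinatorial heart of the argument is the pigeonhole step that forces an infinite subsequence starting with the same letter, and I expect the case analysis above to be the place where I need to be most careful to keep the three scenarios genuinely exhaustive.
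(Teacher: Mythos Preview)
Your argument is the standard Nash-Williams minimal-bad-sequence proof of Higman's Lemma, and it is correct as written; the three cases in the final analysis are exhaustive, and the minimality of $w_{i_1}$ is exactly what forces the hybrid sequence to be good. The paper itself does not give a proof of this lemma: it merely states it with a citation to Higman's original article and uses it as a black box (together with Dickson's Lemma) in the termination argument for the tree $\mathbf{t}$ in the proof of Theorem~\ref{genintp-subseq-thm}. So there is nothing to compare against beyond noting that you have supplied a complete proof where the paper only quotes the result.
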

\begin{lem}[Dickson's Lemma \cite{dicksonslem}]
  For every well-quasi-order ${\leq} \subseteq A \times A$, the product order ${\leq^k} \subseteq A^k \times A^k$ (where $(a_1, \dotsc, a_k) \leq^k (a'_1, \dotsc, a'_k)$ if{f} $a_i \leq a'_i$ for all $i\in [k]$) is a well-quasi-order.
\end{lem}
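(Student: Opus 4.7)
The plan is to prove Dickson's Lemma by first upgrading the given definition of wqo to the equivalent but more convenient statement that every infinite sequence over a wqo admits an infinite non-decreasing subsequence, and then applying that strengthening coordinate by coordinate to handle the product.

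The key preliminary step is the auxiliary claim: for any wqo $(A,\leq)$ and any infinite sequence $(a_i)_{i\in\N}$, there is a strictly increasing $\sigma:\N\to\N$ with $a_{\sigma(1)}\leq a_{\sigma(2)}\leq\dotsc$ I would prove this by calling an index $i$ \emph{bad} if $a_i\not\leq a_j$ for every $j>i$. If infinitely many indices were bad, enumerating them in order would yield an infinite sequence containing no pair that satisfies the wqo condition, a contradiction. Hence only finitely many bad indices exist, so beyond the last one every position admits some later dominator; a greedy construction $\sigma(n{+}1) = \min\{j > \sigma(n) : a_{\sigma(n)} \leq a_j\}$ then delivers the required non-decreasing subsequence.

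With this in hand, given an infinite sequence $(\bar a^{(i)})_{i\in\N}$ in $A^k$, I would apply the auxiliary claim iteratively. First apply it to the sequence of first coordinates to extract an infinite subsequence on which coordinate $1$ is non-decreasing. Then apply it to the second coordinates within that subsequence to refine it further, preserving the non-decrease on coordinate $1$. Continuing through all $k$ coordinates, after $k$ extractions the resulting infinite subsequence is non-decreasing in every coordinate simultaneously, so in particular any two consecutive indices $i<j$ within it witness $\bar a^{(i)}\leq^k \bar a^{(j)}$, as required.

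The only real obstacle is the auxiliary claim; after that the proof is a routine iterated extraction. Since the upgrade from ``some good pair'' to ``infinite non-decreasing subsequence'' is a short contrapositive argument, and the coordinate-wise refinement does not introduce new difficulties, the lemma follows without subtleties (this is the classical proof).
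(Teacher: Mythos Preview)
Your proof is correct and is the standard argument for Dickson's Lemma in its general wqo form: the ``bad index'' (Ramsey-style) argument upgrades the wqo condition to the existence of an infinite non-decreasing subsequence, and iterated coordinate-wise extraction then handles the product. The paper does not supply its own proof of this lemma; it simply states it with a citation to~\cite{dicksonslem}, so there is nothing to compare against beyond noting that your argument is the classical one.
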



\section{Generalized intersection problem}
\label{sec:gip}

We now formalize the main technical problem we study.
Let $\R$ be a class of relations over $\Sigma$, and $\SS$ a
class of binary relations over $\Sigma$. We use the notation
$[m]$ for $\{1,\ldots,m\}$. 
If $R$ is an $m$-ary relation, $S$ is a binary relation, and $I
\subseteq [m]^2$, we write $R \cap_I S$ for the set of tuples
$(w_1,\ldots, w_m)$ in $R$ such that $(w_i,w_j)\in S$ whenever
$(i,j)\in I$.

The {\em generalized
intersection problem} $\INTPI \R\SS I$ is defined as:

\medskip

\centerline{
\fbox{\!\!\!
\begin{tabular}{ll}
{\sc Problem:} & $\INTPI \R\SS I$\\
{\sc Input:} & an $m$-ary relation $R \in \R$, \\
            & a relation $S\in \SS$, and $I\subseteq [m]^2$\\
{\sc Question:} & is $R \cap_I S \neq \emptyset$?
\end{tabular}\!\!\!
}
}

\medskip

If $\SS=\{S\}$, we write $S$ instead of $\{S\}$. 
We write $\GENINT_S(\R)$ for the class of all problems 
$\INTPI \R S I$ where $S$ is fixed, i.e., the input consists of
$R\in\R$ and $I$. 
\OMIT{
We write $\GENINT(\R,\SS)$ 
for the class of all problems of the form 
 $\INTPI {\R_m}  S I$ where $I\subseteq [m]^2$ and  $\R_m$ is the
class of $m$-ary relations in $\R$. Most often we refer to such
problems when $\R$ is $\REC$ or $\REG$ or $\RAT$. For a fixed relation
$S$, we shall write $\GENINT_S(\R)$ for $\GENINT(\R,S)$; in that case,
the only input to the problem is $R\in\R$. 
}
As was explained in the introduction, this problem captures the
essence of evaluating queries in various graph logics, e.g., \crpq s
or \ecrpq s extended with rational relations $S$. The classes $\R$ will
typically be $\REC$ and $\REG$.

If $m=2$ and $I=\{(1,2)\}$, the generalized intersection problem
becomes simply  the {\em intersection problem} for
the classes $\R$ and $\SS$ of binary relations:

\medskip
\centerline{
\fbox{
\begin{tabular}{ll}
{\sc Problem:} & $\INT \R \SS$\\
{\sc Input:} & $R \in \R$ and $S\in \SS$\\
{\sc Question:} & is $R\cap S\neq \emptyset$?
\end{tabular}\!\!\!
}
}

\medskip

The problem $\INTP \REC S$ is decidable for every rational relation
$S$, simply by constructing $R\cap S$, which is a rational relation,
and testing its nonemptiness.  However, 
$\INTP \REG S$ could already be undecidable (we shall give one
particularly simple example later).


\section{Graph logics and the generalized intersection problem}
\label{gl:sec}

\newcommand{\Lto}[1]{\stackrel{#1}{\to}}
\newcommand{\Ltoo}[1]{\stackrel{#1}{\longrightarrow}}

\newcommand{\ggets}{\mbox{:--}}

In this section we show how the (generalized) intersection problems
provide us with upper and lower bounds on the complexity of
evaluating a variety of logical queries over graphs.  We start by
recalling the basic classes of logics used in querying graph data,
and show  that extending them with rational relations allows us to
cast the query evaluation problem as an instance of the generalized
intersection problem. The key observations are that:
\begin{iteMize}{$\bullet$}\itemsep=0pt
\item the complexity of $\GENINT_S(\REC)$ and  $\INTP \REC S$
  provide an upper and a lower bound for the complexity of evaluating
  \crpq($S$) queries; and
\item for \ecrpq($S$), these bounds are provided by the 
complexity of $\GENINT_S(\REG)$ and of $\INTP \REG S$.
\OMIT{
\item the complexity of $\GENINT_S(\REC)$ or  $\GENINT_S(\REG)$
  provides an upper bound on the 
complexity of evaluation \crpq($S$) or \ecrpq($S$) queries; and 
\item the complexity of $\INTP \REC S$ or $\INTP \REG S$ a lower bound
  for evaluating such queries. 
}
\end{iteMize}

\noindent The standard abstraction of graph databases \cite{AG-survey} is  
finite $\Sigma$-labeled graphs $G=\langle V,E\rangle$, where $V$ is a
finite set of nodes, or vertices, and $E\subseteq V \times \Sigma
\times V$ is a set of labeled edges. A {\em path} $\rho$ from $v_0$ to
$v_m$ in $G$ is a sequence of edges $(v_0,a_0,v_1)$, $(v_1,a_1,v_2), \cdots,
(v_{m-1},a_{m-1},v_m)$ from $E$, for some $m \geq 0$.
The {\em label} of
$\rho$, denoted by $\lambda(\rho)$, is the word $a_0 \cdots a_{m-1} \in
\Sigma^*$.

The main building
blocks for graph queries  are {\em regular path
  queries}, or {\em RPQ}s \cite{CMW87}; they are expressions of the form $x \Lto{L} y$,
where $L$ is a regular language. We normally assume that $L$ is
represented by a regular expression or an NFA. Given a $\Sigma$-labeled graph
$G=\langle V,E\rangle$, the answer to an RPQ above is the set of pairs
of nodes $(v,v')$ such that there is a path $\rho$ from $v$ to $v'$
with $\lambda(\rho)\in L$.

{\em Conjunctive RPQs}, or {\em \crpq s} \cite{CGLV00,CGLV00b,CM90} are
the closure of RPQs under conjunction and existential
quantification. Formally, they are expressions of the form 
\begin{equation}
\label{crpq-eq}
\phi(\bar x) \ \ =\ \ \exists \bar y\ \bigwedge_{i=1}^m (u_i
\Ltoo{L_i} u_i')
\end{equation}
where variables $u_i,u_i'$s  come from $\bar x, \bar y$. The
semantics naturally extends the semantics of RPQs: $\phi(\bar a)$ is
true in $G$ iff there is a tuple $\bar b$ of nodes such that for every
$i\leq m$ and every $v_i,v_i'$ interpreting $u_i$ and $u_i'$,
respectively, we have a path $\rho_i$ between $v_i$ and $v_i'$ whose
label $\lambda(\rho_i)$ is in $L_i$. 

\crpq s can further be extended to {\em compare}
paths. For that, we need to name path variables, and choose a class of
allowed relations on paths. The simplest such extension is 
the class of \crpq$(S)$ queries, where 
$S$ is a binary relation over $\Sigma^*$. Its formulae are of the form
\begin{equation}
\label{crpqs-eq}
\phi(\bar x) \ \ =\ \ \exists \bar y\ \Big( \bigwedge_{i=1}^m (u_i
\Ltoo{\chi_i:L_i} u_i') \ \  \wedge \ \  \bigwedge_{(i,j)\in I}
S(\chi_i,\chi_j)\Big) 
\end{equation}
where $I\subseteq [m]^2$. We use variables
$\chi_1,\ldots,\chi_m$ to denote paths; these are quantified
existentially. That is, the semantics of $G\models \phi(\bar a)$ is
that there is a tuple $\bar b$ of nodes and paths $\rho_k$, for $k\leq
m$, between
$v_k$ and $v_k'$ (where, as before,   $v_k,v_k'$ are elements of $\bar
a, \bar b$ interpreting $u_k,u_k'$) 
such that 
$(\lambda(\rho_i),\lambda(\rho_j))\in S$ whenever $(i,j)\in I$. 
For instance, the query $$\exists y, y'\ \big( (x \Ltoo{\chi:\Sigma^*a} y)
\wedge (x\Ltoo{\chi':\Sigma^*b} y') \wedge \chi \subseq \chi'\big)$$
finds nodes $v$ so that there are two paths starting from $v$, one
ending with an $a$-edge, whose label is a subsequence of the other one,
that ends with a $b$-edge.

The input to the {\em query evaluation problem} consists of a graph
$G$, a tuple $\bar v$ of nodes, and a query $\phi(\bar x)$; the
question is whether $G\models\phi(\bar v)$. This corresponds to the
{\em combined complexity} of query evaluation. In the context of query
evaluation, one is often interested in {\em data complexity}, when the
typically small formula $\phi$
is fixed, and the input consists of the typically large graph $(G,\bar
v)$. We now relate it to the complexity of $\GENINT_S(\REC)$.

\begin{lem}
\label{crpqs-lemma-one}
Fix a \crpq($S$) query $\phi$ as in (\ref{crpqs-eq}). Then there is a
\dlog\ algorithm that, given a 
graph $G$ and a tuple $\bar v$ of nodes, constructs an $m$-ary
relation $R \in 
\REC$ so that the answer to the generalized
intersection problem $\INTPI R S I$ is `yes' iff $G\models\phi(\bar v)$. 
\end{lem}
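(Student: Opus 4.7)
The idea is to translate the existential satisfaction of $\phi(\bar v)$ into the nonemptiness of $R\cap_I S$ for a recognizable relation $R$ whose tuples exactly enumerate the possible $m$-tuples of path labels $(\lambda(\rho_1),\dotsc,\lambda(\rho_m))$ compatible with the topological part of $\phi$ at $\bar v$. First I would unfold the semantics of (\ref{crpqs-eq}): $G\models\phi(\bar v)$ holds iff there is some assignment $\bar b$ of nodes to the existential node variables $\bar y$ and paths $\rho_1,\dotsc,\rho_m$ in $G$ such that, writing $v_i,v_i'$ for the nodes assigned to $u_i,u_i'$ by $\bar v\cup\bar b$, each $\rho_i$ goes from $v_i$ to $v_i'$ with $\lambda(\rho_i)\in L_i$, and moreover $(\lambda(\rho_i),\lambda(\rho_j))\in S$ whenever $(i,j)\in I$.

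The key step is the construction of $R$. For a fixed assignment $\bar b$ and a fixed $i\in[m]$, the set
\[
K_i^{\bar b} \ = \ \{\lambda(\rho)\mid \rho \text{ is a path from } v_i \text{ to } v_i' \text{ in } G\}\ \cap \ L_i
\]
is a regular language, recognised by the usual product NFA between $G$ (viewed as an NFA with $v_i$ initial and $v_i'$ final) and an NFA for $L_i$. Since the number of tuples $\bar b\in V^{|\bar y|}$ is finite, the relation
\[
R \ =\ \bigcup_{\bar b\in V^{|\bar y|}} K_1^{\bar b}\times\cdots\times K_m^{\bar b}
\]
is a finite union of products of regular languages, hence $R\in\REC_m$ by the characterisation recalled in Section~\ref{sec:prelim}. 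By construction, a tuple $(w_1,\dotsc,w_m)$ lies in $R$ iff there exists a witness $\bar b$ together with paths $\rho_i$ from $v_i$ to $v_i'$ with $\lambda(\rho_i)=w_i\in L_i$; adding the condition $(w_i,w_j)\in S$ for $(i,j)\in I$ is exactly the definition of $R\cap_I S$, so $R\cap_I S\neq\emptyset$ iff $G\models\phi(\bar v)$.

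It remains to verify the \dlog\ bound. Since $\phi$ is fixed, the quantities $m$, $|\bar y|$, and the NFAs for $L_1,\dotsc,L_m$ are all constants; the only input is $(G,\bar v)$. Enumerating the tuples $\bar b\in V^{|\bar y|}$ takes $O(\log|V|)$ space because $|\bar y|$ is constant, and for each $\bar b$ the product NFA recognising $K_i^{\bar b}$ has state set $V\times Q_i$ and can be written out with a logspace transducer (for each pair of states and each letter, one checks whether the corresponding edge exists in $G$ and whether the transition exists in the NFA for $L_i$). Emitting the description of $R$ as the resulting union of products is thus a routine logspace task, with the whole working space bounded by $O(\log|V|)$. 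The main thing to be careful about is simply that one must output $R$ as a union of products of regular languages (matching the recognizable normal form), rather than as a single automaton on $\Sigma_\bot^m$, which would lie in $\REG$ but not necessarily in $\REC$; no genuine obstacle arises beyond this bookkeeping.
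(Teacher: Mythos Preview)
Your proposal is correct and follows essentially the same approach as the paper: define, for each assignment $\bar b$ of the existential variables, the product $K_1^{\bar b}\times\cdots\times K_m^{\bar b}$ of the languages $\mathcal{L}(\mathcal{A}(G,v_i,v_i'))\cap L_i$, take the union over all $\bar b$, and argue that this is recognizable and computable in \dlog\ since $m$ and $|\bar y|$ are constants. Your additional remark about outputting $R$ in the union-of-products normal form (rather than as a single synchronous automaton) is a nice clarification not made explicit in the paper.
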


\OMIT{
\begin{proof}[Proof idea]
 Given a $\Sigma$-labeled graph
  $G=\langle V, E\rangle$ and two nodes $v,v'$, we write $\A(G,v,v')$
  for $G$ viewed as an NFA with the initial state $v$ and the final
  state $v'$.  Now consider a \crpq($S$) query $\phi(\bar x)$ given by
  (\ref{crpqs-eq}). Let $\bar v$ be a tuple of nodes of $G$, of the
  same length as $\bar x$.

  The algorithm first enumerates all tuples $\bar b$ of nodes of $G$
  of the same length as $\bar y$.
  Let $n_i$ and $n_i'$ be the interpretations of $u_i$ and $u_i'$,
  when $\bar x$ is interpreted as $\bar v$ and $\bar y$ as $\bar
  b$. Define $R_{\bar b} \ = \ \prod_{i=1}^m (\LL(\A(G,n_i,n_i')) \cap
  L_i)$; this is a relation in $\REC_m$. Hence, $R=\bigcup_{\bar b}
  R_{\bar b}$ in $\REC_m$ too. It is now easy to see that $R\cap_I
  S\neq \emptyset$ iff $G\models \phi(\bar v)$.  
\end{proof}} 

\begin{proof}
  Given a $\Sigma$-labeled graph $G=\langle V, E\rangle$ and two nodes
$v,v'$, we write $\A(G,v,v')$ for $G$ viewed as an NFA with the
initial state $v$ and the final state $v'$ (that is, the set of states
is $V$, the transition relation is $E$, and the alphabet is
$\Sigma$). The language of such an automaton, $\LL(\A(G,v,v'))$, is
the set of labels of all paths between $v$ and $v'$.

Now consider a \crpq($S$) query $\phi(\bar x)$ given by $$\exists \bar
y\ \Big( \bigwedge_{i=1}^m (u_i \Ltoo{\chi_i:L_i} u_i') \ \ \wedge
\ \ \bigwedge_{(i,j)\in I} S(\chi_i,\chi_j)\Big),$$ as in
(\ref{crpqs-eq}). Suppose we are given a graph $G$ as above and a
tuple of nodes $\bar v$, of the same length as the length of $\bar
x$. The \dlog\ algorithm works as follows.

First we enumerate all tuples $\bar b$ of nodes of $G$ of the same
length as $\bar y$; since $\phi$ is fixed, this can be done in
\dlog. 
For each $\bar b$, we construct an $m$-ary relation $R_{\bar b}$ in $\REC$ as
follows. 
Let $n_i$ and $n_i'$ be the interpretations of $u_i$ and $u_i'$, when
$\bar x$ is interpreted as $\bar v$ and $\bar y$ as $\bar b$. Then
$$R_{\bar b} \ = \ \prod_{i=1}^m (\LL(\A(G,n_i,n_i')) \cap L_i).$$ 
Note that it can be constructed in \dlog; indeed each coordinate of
$R_{\bar b}$ is simply a product of the automaton $\A(G,n_i,n_i')$ and
a fixed automaton defining $L_i$. Next, let $R=\bigcup_{\bar b}
R_{\bar b}$. This is constructed in \dlog\ too. Now it follows
immediately from the construction that $R\cap_I S\neq \emptyset$
iff for some $\bar b$, there exist paths $\rho_i$ between $n_i,n_i'$,
for $i\leq m$, such that $(\lambda(\rho_l),\lambda(\rho_j))\in S$
whenever $(l,j)\in I$, i.e., iff $G\models \phi(\bar v)$. 
\end{proof}
\medskip

Conversely, the intersection problem for recognizable relations
and $S$ can be encoded as answering \crpq($S$) queries. 

\begin{lem}
\label{crpqs-lemma-two}
For any given binary relation $S$, there is a \crpq($S$) query $\phi(x,x')$
and a \dlog\ algorithm that,
given a relation $R\in\REC_2$, constructs a graph $G$ and two nodes
$v,v'$ so that $G\models\phi(v,v')$ iff $R\cap S\neq \emptyset$.
\end{lem}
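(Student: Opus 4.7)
My approach exploits the standard characterization of $\REC_2$ as rectangles: every $R \in \REC_2$ is a finite union $R = \bigcup_{i=1}^{k} L_1^i \times L_2^i$ with each $L_j^i$ regular, and this representation (with NFAs $A_j^i$ having unique initial/final states $s_j^i, f_j^i$) is obtainable in \dlog\ from the standard input formats for $\REC_2$. I will build a graph $G$ that encodes each rectangle as a sequentially chained ``gadget'', together with a single fixed \crpq($S$) query $\phi$ that, from a path through one gadget, extracts a pair $(\chi_1,\chi_2) \in L_1^i \times L_2^i$ and tests $S(\chi_1,\chi_2)$.

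The construction. Extend $\Sigma$ by a fresh marker letter $\$$; this does not affect $S \subseteq \Sigma^* \times \Sigma^*$. Taking pairwise disjoint copies of the $2k$ NFAs, let $G$ consist of a source $v$, a sink $v'$, all the NFA transitions, and the marker edges $v \Ltoo{\$} s_1^i$, $f_1^i \Ltoo{\$} s_2^i$, and $f_2^i \Ltoo{\$} v'$ for every $i \in [k]$. By disjointness of the gadgets and the absence of $\$$ on any NFA transition, every $v$-to-$v'$ path in $G$ has label $\$\, w_1\, \$\, w_2\, \$$ for a single $i$ with $w_1 \in L_1^i$ and $w_2 \in L_2^i$. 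The fixed query $\phi(x,x')$ existentially quantifies over nodes $y_1,y_2,y_3,y_4$ and path variables $\chi_1,\ldots,\chi_5$ and asserts
\[ (x \Ltoo{\chi_3:\$} y_1) \wedge (y_1 \Ltoo{\chi_1:\Sigma^*} y_2) \wedge (y_2 \Ltoo{\chi_4:\$} y_3) \wedge (y_3 \Ltoo{\chi_2:\Sigma^*} y_4) \wedge (y_4 \Ltoo{\chi_5:\$} x') \wedge S(\chi_1,\chi_2). \]
Given any $(w_1,w_2) \in R \cap S$ lying in rectangle $i$, the canonical path $\$ w_1 \$ w_2 \$$ through gadget $i$ yields a match of $\phi(v,v')$; conversely, any match witnesses some $(\chi_1,\chi_2) \in R \cap S$.

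The main obstacle is the converse direction, namely the ``trapping'' argument that any match of $\phi$ at $(v,v')$ forces $\chi_1$ and $\chi_2$ into the NFAs of a common gadget. The initial $\$$-atom pins $y_1 = s_1^i$ for some $i$; since $\chi_1 : \Sigma^*$ forbids $\$$-edges and the gadgets are disjoint, the $\chi_1$-subpath stays inside $A_1^i$ and must terminate at a node carrying an outgoing $\$$-edge, which is uniquely $f_1^i$, giving $y_2 = f_1^i$ and $\chi_1 \in L_1^i$; the next $\$$-atom then forces $y_3 = s_2^i$ for the same $i$, and symmetric reasoning fixes $y_4 = f_2^i$ and $\chi_2 \in L_2^i$. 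Hence $(\chi_1,\chi_2) \in L_1^i \times L_2^i \subseteq R$, so the atom $S(\chi_1,\chi_2)$ delivers $R \cap S \neq \emptyset$. The \dlog\ bound is then immediate, since $G$ is essentially the disjoint union of the input NFAs together with $3k$ marker edges, producible by a log-space transducer scanning the representation of $R$.
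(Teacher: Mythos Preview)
Your proof is correct and follows the same strategy as the paper: decompose $R$ into rectangles, take the disjoint union of the component NFAs with fresh marker edges, and write a fixed query that picks out one rectangle and tests $S$ on the extracted pair. The difference is only in the gadget wiring. The paper uses three fresh symbols $\#,\$,!$ and a \emph{parallel} layout: two independent $\Sigma^*$-paths from a node \emph{start} into the NFAs for $L_i$ and $K_i$, tied together by a single $\$$-edge between their initial states to force a common index $i$, with $!$-edges from all final states to a node \emph{end}. You use one fresh symbol and a \emph{sequential} layout, chaining $A_1^i$ and $A_2^i$ in series between $v$ and $v'$.

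Your version is a bit leaner, but it leans on each NFA having a \emph{unique} final state, which you assert is obtainable in \dlog\ but do not justify; this is actually awkward when $\varepsilon$ is in some $L_j^i$ and $\varepsilon$-transitions are not available in the graph model. The paper sidesteps the issue by putting outgoing marker edges on \emph{every} final state. The same fix works for you: drop the uniqueness assumption and add $\$$-edges from every final state of $A_1^i$ to $s_2^i$ and from every final state of $A_2^i$ to $v'$. Your trapping argument then goes through verbatim with ``$y_2$ is some final state of $A_1^i$'' in place of ``$y_2 = f_1^i$'', since all such $\$$-edges still point to the same $s_2^i$.
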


\OMIT{
\begin{proof}[Proof idea]
 Let $R\in\REC_2$ be given as
  $\bigcup_{i=1}^n (L_i \times K_i)$, where the $L_i, K_i\subseteq
  \sigmas$ are regular languages for every $i$. Let $\langle V_i,
  E_i\rangle$ be the underlying graph of the NFA defining $L_i$, such
  that $v^i_0$ is the initial state, and $F_i$ is the set of final
  states. Likewise we define $\langle W_i, H_i\rangle$, nodes $w^i_0$
  and sets $C_i\subseteq W_i$ for NFA defining $K_i$.

  We now construct the graph $G$. Its labeling alphabet is the union
  of $\Sigma$ and $\{\#, \$, !\}$. Its set of vertices is the disjoint
  union of all the $V_i$s, $W_i$s, as well as two distinguished nodes
  {\em start} and {\em end}. Its edges include all the edges from
  $E_i$s and $H_i$s, and the following:
  \begin{iteMize}{$\bullet$}
  \item $\#$-labeled edges from {\em start} to each initial state,
    i.e., to each $v_i^0$ and $w_i^0$ for all $i \leq n$.
  \item $\$$-labeled edges between the initial states of automata with
    the same index, i.e., edges $(v^i_0,\$,w^i_0)$ for all $i \leq n$.
  \item $!$-labeled edges from final states to {\em end}, i.e., edges
    $(v,!,\text{\em end})$, where $v \in \bigcup_{i\leq n} F_i \cup
    \bigcup_{i \leq n} C_i$.
  \end{iteMize}

  The \crpq($S$) query $\phi(x,y)$ is given below; it omits path
  variables for paths that are not used in comparisons:
$$\exists x_1, x_2, z_1, z_2\ \left(
  \begin{array}{cccc} & x \Lto{\#} x_1 & \wedge & x \Lto{\#} x_2\\
    \wedge & x_1 \Lto{\chi:\sigmas} z_1 & \wedge &  x_2 \Lto{\chi':\sigmas}
    z_2\\ 
    \wedge & z_1 \Lto{!} y & \wedge &  z_2 \Lto{!} y\\ 
    \wedge &  x_1 \Lto{\$} x_2 & \wedge & S(\chi,\chi')
  \end{array}
\right)
$$ 
It is routine to verify that $G\models\phi(\text{\em start},\text{\em
  end})$ iff $R\cap S\neq\emptyset$.  
\end{proof}} 

\begin{proof}
  Let $R$ be in $\REC_2$. It is given as $\bigcup_{i=1}^n (L_i \times
  K_i)$, where the $L_i$s and the $K_i$s are regular languages over
  $\Sigma$. These languages are given by their NFAs which we can view
  as $\Sigma$-labeled graphs. Let $\langle V_i, E_i\rangle$ be the
  underlying graph of the NFA defining $L_i$, such that $v^i_0$ is the
  initial state, and $F_i$ is the set of final states. Likewise, let
  $\langle W_i, H_i\rangle$ be the underlying graph of the NFA
  defining $K_i$, such that $w^i_0$ is the initial state, and $C_i$ is
  the set of final states.

  We now construct the graph $G$. Its labeling alphabet is the union
  of $\Sigma$ and $\{\#, \$, !\}$. Its set of vertices is the disjoint
  union of all the $V_i$s, $W_i$s, as well as two distinguished nodes
  {\em start} and {\em end}. Its edges include all the edges from
  $E_i$s and $H_i$s, and the following:
  \begin{iteMize}{$\bullet$}
  \item $\#$-labeled edges from {\em start} to each initial state,
    i.e., to each $v_i^0$ and $w_i^0$ for all $i \leq n$.
  \item $\$$-labeled edges between the initial states of automata with
    the same index, i.e., edges $(v^i_0,\$,w^i_0)$ for all $i \leq n$.
  \item $!$-labeled edges from final states to {\em end}, i.e., edges
    $(v,!,\text{\em end})$, where $v \in \bigcup_{i\leq n} F_i \cup
    \bigcup_{i \leq n} C_i$.
  \end{iteMize}

  We now define a \crpq($S$) query $\phi(x,y)$ (omitting path
  variables for paths that are not used in comparisons):
$$\exists x_1, x_2, z_1, z_2\ \left(
  \begin{array}{cccc} & x \Lto{\#} x_1 & \wedge & x \Lto{\#} x_2\\
    \wedge & x_1 \Lto{\$} x_2 & & \\
    \wedge & x_1 \Lto{\chi:\sigmas} z_1 & \wedge & x_2
    \Lto{\chi':\sigmas}
    z_2\\
    \wedge & z_1 \Lto{!} y & \wedge &  z_2 \Lto{!} y\\
    \wedge & S(\chi,\chi')
  \end{array}
\right)
$$ 

The query says that from {\em start}, we have $\#$-edges to the
initial states $v^i_0$ and $w^i_0$: they must have the same index
since there is a $\$$-edge between them. From there we have two paths,
$\rho$ and $\rho'$, corresponding to the variables $\chi$ and $\chi'$,
which are $\Sigma$-labeled, and thus are paths in the automata for
$L_i$ and $K_i$, respectively. From the end nodes of those paths we
have $!$-edges to {\em end}, so they must be final states; in
particular, $\lambda(\rho)\in L_i$ and $\lambda(\rho')\in K_i$. We
finally require $(\lambda(\rho),\lambda(\rho'))\in S$, i.e.,
$(\lambda(\rho),\lambda(\rho'))\in (L_i\times K_i)\cap S$. Hence, if
$G\models\phi(\text{\em start},\text{\em end})$ then for some $i\leq
n$ we have two words $(w,w')$ that belong to $(L_i\times K_i)\cap S$,
i.e., $R\cap S\neq \emptyset$. Conversely, if $R\cap S\neq \emptyset$,
then $(L_i\times K_i)\cap S\neq \emptyset$ for some $i\leq n$, and the
witnessing paths of the nonemptiness of $(L_i\times K_i)\cap S$ will
witness the formula $\phi(\text{\em start},\text{\em end})$ (together
with initial states of the automata of $L_i$ and $K_i$ and some of
their final states). 
\end{proof}

\medskip

Combining the lemmas, we obtain:

\begin{thm}\hfill
\label{crpq-thm}
Let $\K$ be a complexity class closed under \dlog\ reductions. Then:
\begin{enumerate}[\em(1)]\itemsep=0pt
\item If the problem  $\GENINT_S(\REC)$ is in $\K$, then data
  complexity of \crpq($S$) queries is in $\K$; and
\item If the problem $\INT \REC S$ is hard for $\K$, then so is data
  complexity of \crpq($S$) queries.
\end{enumerate}
\end{thm}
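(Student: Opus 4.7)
The plan is to prove the theorem as a direct consequence of Lemmas \ref{crpqs-lemma-one} and \ref{crpqs-lemma-two}, observing that each lemma gives precisely the reduction needed in each direction. The key point is that in both statements the query (and hence the set $I$ coming from it) is fixed, while the graph-theoretic input is what varies, which is exactly the scenario of data complexity.

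For part (1), I fix a \crpq($S$) query $\phi$ as in (\ref{crpqs-eq}); this fixes both the arity $m$ and the index set $I\subseteq [m]^2$. By Lemma \ref{crpqs-lemma-one}, there is a \dlog\ procedure that on input $(G,\bar v)$ produces an $m$-ary recognizable relation $R$ with $G\models\phi(\bar v)$ iff $R\cap_I S\neq\emptyset$. That is exactly a \dlog\ reduction from data complexity of $\phi$ to $\GENINT_S(\REC)$. Since $\K$ is assumed closed under \dlog\ reductions, membership of $\GENINT_S(\REC)$ in $\K$ transfers to the data complexity of $\phi$. As $\phi$ was an arbitrary \crpq($S$) query, this gives the desired upper bound.

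For part (2), I invoke Lemma \ref{crpqs-lemma-two}: for the fixed \crpq($S$) query $\phi(x,x')$ that it constructs, there is a \dlog\ procedure that on input $R\in\REC_2$ produces a graph $G$ and nodes $v,v'$ with $G\models\phi(v,v')$ iff $R\cap S\neq\emptyset$. This is a \dlog\ reduction from $\INT \REC S$ to the data complexity of \crpq($S$) (the query $\phi$ is fixed, so only $(G,v,v')$ is varying input). Consequently, if $\INT \REC S$ is $\K$-hard under \dlog\ reductions, closure of $\K$ under \dlog\ reductions yields $\K$-hardness of evaluating $\phi$ on input $(G,v,v')$, which is a special case of the data-complexity problem for \crpq($S$).

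There is essentially no obstacle beyond bookkeeping: both lemmas have already done the real work, and all that remains is to note that ``fixed query'' on the logic side matches ``fixed $S$ and fixed $I$'' on the generalized-intersection side, and to invoke the hypothesis that $\K$ is closed under \dlog\ reductions to compose the reductions. I would write the proof as two short paragraphs, one per item, each simply citing the relevant lemma and concluding.
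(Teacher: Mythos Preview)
Your proposal is correct and matches the paper's own approach exactly: the paper simply states that the theorem follows by ``combining the lemmas,'' and your two paragraphs spell out precisely how Lemmas~\ref{crpqs-lemma-one} and~\ref{crpqs-lemma-two} give the required \dlog\ reductions in each direction. The only nitpick is that in part~(2) what you really use is composability of \dlog\ reductions (to transfer hardness), not closure of $\K$ under them, but this is a wording issue and the argument is sound.
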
 

\noindent We now consider {\em extended CRPQs}, or {\em ECRPQs}, which enhance
CRPQs with regular relations \cite{pods10}, and prove a similar result
for them, with the role of $\REC$ now played by $\REG$.  Formally,
\ecrpq s are 
expressions of the form 
\begin{equation}
\label{ecrpq-eq}
\phi(\bar x) \ \ =\ \ \exists \bar y\ \Big(\bigwedge_{i=1}^m (u_i
\Ltoo{\chi_i:L_i} u_i') \ \ \wedge\ \  \bigwedge_{j=1}^k R_j(\bar\chi_j)\Big)
\end{equation}
where each $R_j$ is a relation from $\REG$, and $\bar\chi_j$ a tuple
from $\chi_1,\ldots,\chi_m$ of the same arity as $R_j$. The semantics
of course extends the semantics of \crpq s: the witnessing paths
$\rho_1,\ldots,\rho_m$ should also satisfy the condition that for every
atom $R(\rho_{i_1},\ldots,\rho_{i_l})$ in (\ref{ecrpq-eq}), the tuple 
$(\lambda(\rho_{i_1}),\ldots,\lambda(\rho_{i_l}))$ is in $R$. 

Finally, we obtain \ecrpq($S$) queries by adding comparisons with
respect to a relation 
$S\in\RAT$, getting a class of queries $\phi(\bar x)$ of the
form 
\begin{equation}
\label{ecrpqs-eq}
\exists \bar y\ \Big(\!\bigwedge_{i=1}^m (u_i
\Ltoo{\chi_i:L_i} u_i')   \wedge  \bigwedge_{j=1}^k R_j(\bar\chi_j)
 \wedge\! \bigwedge_{(i,j)\in I} S(\chi_i,\chi_j)\Big)
\end{equation}
Similarly to the case of \crpq s, we can establish a
connection between data complexity of \ecrpq($S$) queries and the
complexity of the generalized intersection problem: 

\begin{thm}
\label{ecrpq-thm}
Let $\K$ be a complexity class closed under \dlog\ reductions. Then:
\begin{enumerate}[\em(1)]\itemsep=0pt
\item If the problem  $\GENINT_S(\REG)$ is in $\K$, then data
  complexity of \ecrpq($S$) queries is in $\K$; and
\item If the problem $\INT \REG S$ is hard for $\K$, then so is data
  complexity of \ecrpq($S$) queries.
\end{enumerate}
\end{thm}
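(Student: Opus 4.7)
The theorem would be proved by the same overall strategy as Theorem~\ref{crpq-thm}: establish two lemmas mirroring Lemmas~\ref{crpqs-lemma-one} and~\ref{crpqs-lemma-two} with $\REG$ replacing $\REC$, and combine them via the \dlog-closure of $\K$.

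For the upper bound in~(1), I would fix an \ecrpq($S$) query $\phi(\bar x)$ of the form~(\ref{ecrpqs-eq}) and extend the construction of Lemma~\ref{crpqs-lemma-one}. Given $(G,\bar v)$, enumerate in \dlog\ all interpretations $\bar b$ of $\bar y$ and, for each, form the recognizable product $P_{\bar b} = \prod_{i=1}^m(\LL(\A(G,n_i,n_i'))\cap L_i)$. Then lift each fixed regular atom $R_j(\bar\chi_j)$ of $\phi$ to an $m$-ary regular relation $\widetilde{R_j}\in\REG$ that acts trivially on the coordinates outside $\bar\chi_j$, and set $R_{\bar b} := P_{\bar b}\cap\bigcap_j \widetilde{R_j}$. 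Because every $R_j$ is a constant of the fixed $\phi$, this intersection is still a synchronized product of $P_{\bar b}$ with fixed-size NFAs and is computable in \dlog. Finally $R:=\bigcup_{\bar b} R_{\bar b}\in\REG$ satisfies $R\cap_I S\neq\emptyset$ iff $G\models\phi(\bar v)$, which is the analogue of Lemma~\ref{crpqs-lemma-one}.

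For the lower bound in~(2), the plan is to reduce $\INT \REG S$ to data complexity of a single fixed \ecrpq($S$) query. Given a binary $R\in\REG$ with NFA $\A_R$ over $\sigbot^2$, construct a graph $G$ that embeds $\A_R$ in such a way that every accepting run producing a pair $(w,w')$ is witnessed by two \emph{separate} $\Sigma$-labeled paths in $G$ carrying $w$ and $w'$, plus a third ``bookkeeping'' path over a fresh alphabet $\Gamma$ of transition identifiers, one symbol per transition of $\A_R$. Concretely, in the length-preserving case, for every transition $(q,(a,b),q')$ of $\A_R$ put into $G$ three parallel edges between copies of $q$ and $q'$: one labeled $a$ in a first $\Sigma$-track, one labeled $b$ in a second $\Sigma$-track, and one labeled by its identifier in the $\Gamma$-track, all glued via delimiter edges $\#,!$ to distinguished nodes $\mathit{start}$ and $\mathit{end}$. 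The general case follows from the characterization recalled in Section~\ref{sec:prelim} of binary regular relations as finite unions of prefix- or suffix-extensions of length-preserving ones: add one subgraph per piece of the decomposition. The fixed query existentially quantifies three path variables $\chi,\chi',\tau$ from $\mathit{start}$ to $\mathit{end}$ over the appropriate tracks, imposes a single fixed regular atom $R_{\mathrm{sync}}(\chi,\chi',\tau)\in\REG$ stating that $\tau$ encodes a composable sequence of transitions from $q_0$ to a final state and that $\chi,\chi'$ agree position by position with the first and second letter components named by the transitions in $\tau$, and finally demands $S(\chi,\chi')$. Then $G\models\phi(\mathit{start},\mathit{end})$ iff $R\cap S\neq\emptyset$.

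The main obstacle I foresee is arranging for $R_{\mathrm{sync}}$ to live in $\REG$ rather than only in $\RAT$. A tempting shortcut that convolves $(w,w')$ onto a single path label and then ``strips padding and projects'' cannot work, since that extraction relation is rational but not regular. Threading the identifier track $\tau$ through $G$ is precisely the device that restores regularity: once $\tau[k]$ fixes the transition taken at step $k$, the synchronous NFA has no residual nondeterminism about how $\chi$ and $\chi'$ should advance, so all required alignment checks are length-preserving and regular. With both lemmas in hand, Theorem~\ref{ecrpq-thm} follows by the same \dlog-reduction argument used to derive Theorem~\ref{crpq-thm}.
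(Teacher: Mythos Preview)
Your upper-bound argument for~(1) is fine and coincides with the paper's Lemma~\ref{ecrpqs-lemma-one}: intersect the recognizable product $P_{\bar b}$ with the fixed regular atoms of the query, and take the union over~$\bar b$.

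The lower-bound construction for~(2), however, has a real gap. For a data-complexity reduction the query $\phi$ must be \emph{fixed}, independent of the input $R\in\REG_2$. Your alphabet $\Gamma$ of transition identifiers has one symbol per transition of $\A_R$, so its size grows with the input; and your atom $R_{\mathrm{sync}}(\chi,\chi',\tau)$, which must recover from each identifier $\tau[k]$ the pair $(a,b)$ it names, necessarily encodes the transition table of $\A_R$. Hence neither the alphabet nor $R_{\mathrm{sync}}$ is fixed, and you have not produced a single \ecrpq($S$) query witnessing hardness.

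You also dismiss too quickly the approach that does work. You say that convolving $(w,w')$ onto one path and then ``stripping padding and projecting'' yields only a rational relation. That would be true if $\bot$'s could occur anywhere, but in the $\otimes$-encoding the padding symbols occur only as a suffix. The paper exploits exactly this: the graph carries the NFA for $R$ with edge labels in the fixed alphabet $\sigbot\times\sigbot$, a single self-loop vertex supplies arbitrary $\Sigma^*$-paths for $\chi_1,\chi_2$, and the two fixed relations
\[
R_1=\{(w_1\otimes w_2,\,w_1)\mid w_1,w_2\in\Sigma^*\},\qquad
R_2=\{(w_1\otimes w_2,\,w_2)\mid w_1,w_2\in\Sigma^*\}
\]
are genuinely in $\REG$, because checking that the first (resp.\ second) component of $\chi$ equals $\chi_1$ (resp.\ $\chi_2$) followed only by trailing $\bot$'s is a synchronous, length-preserving test. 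If you replace your per-transition identifiers by the pair $(a,b)\in\sigbot\times\sigbot$ itself, your $\tau$ becomes the paper's $\chi$, your $R_{\mathrm{sync}}$ collapses to $R_1\wedge R_2$, and the construction becomes fixed---which is precisely Lemma~\ref{ecrpqs-lemma-two}.
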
 

\noindent Similarly to the proof of Theorem \ref{crpq-thm}, the result
will be an immediate consequence of two lemmas. First, evaluation of
\ecrpq($S$) queries is reducible to the generalized intersection
problem for regular relations.

\begin{lem}
\label{ecrpqs-lemma-one}
Fix an \ecrpq($S$) query $\phi$ as in (\ref{ecrpqs-eq}). Then there is a
\dlog\ algorithm that, given a 
graph $G$ and a tuple $\bar v$ of nodes, constructs an $m$-ary relation $R \in
\REG$ so that the answer to the generalized
intersection problem $\INTPI R S I$ is `yes' iff $G\models\phi(\bar
v)$. 
\end{lem}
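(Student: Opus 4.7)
The plan is to follow the same template as the proof of Lemma \ref{crpqs-lemma-one}, but to build an $m$-ary \emph{regular} relation instead of a recognizable one, so as to absorb the extra regular-relation atoms $R_j(\bar\chi_j)$ of the \ecrpq($S$) query. As before, since $\phi$ is fixed, I would enumerate in \dlog\ all tuples $\bar b$ of nodes of $G$ of the length of $\bar y$, and let $n_i,n_i'$ be the nodes interpreting $u_i,u_i'$ once $\bar x \mapsto \bar v$ and $\bar y \mapsto \bar b$ are fixed.

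For each such $\bar b$, I would construct an $m$-ary regular relation $R_{\bar b}$ that captures exactly the $m$-tuples of path labels that \emph{could} be witnesses for $G\models\phi(\bar v)$ under this choice of $\bar b$, leaving only the $S$-constraints to be verified later by $\cap_I S$. Concretely, $R_{\bar b}$ will be the intersection of the following regular relations over $\sigmabot^m$:
\begin{iteMize}{$\bullet$}
\item For each $i\in[m]$, a ``cylindrified'' copy of the $1$-ary language $\LL(\A(G,n_i,n_i'))\cap L_i$ placed on coordinate $i$ (and unconstrained on the other coordinates), where $\A(G,n_i,n_i')$ is $G$ viewed as an NFA with initial state $n_i$ and final state $n_i'$, exactly as in the proof of Lemma~\ref{crpqs-lemma-one}.
\item For each $j\in[k]$, the cylindrification of the fixed regular relation $R_j$ along the coordinates indicated by $\bar\chi_j$, yielding an $m$-ary regular relation $\tilde R_j$ whose restriction to the coordinates of $\bar\chi_j$ equals $R_j$ and is arbitrary on the remaining coordinates.
\end{iteMize}
Both operations preserve regularity: cylindrification of a regular relation simply pads the automaton with free $\bot$-aware transitions on the new tracks, and $\REG$ is closed under intersection (synchronous product of NFAs over $\sigmabot^m$). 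Finally, I would set $R = \bigcup_{\bar b} R_{\bar b}$, using closure of $\REG$ under union.

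It remains to argue correctness and the \dlog\ complexity bound. For correctness, unwinding the constructions shows that a tuple of words $(\lambda_1,\ldots,\lambda_m)$ lies in $R_{\bar b}$ iff there exist paths $\rho_i$ from $n_i$ to $n_i'$ in $G$ with $\lambda(\rho_i)=\lambda_i\in L_i$ and $(\lambda_{i_1},\ldots,\lambda_{i_l})\in R_j$ for every atom $R_j(\chi_{i_1},\ldots,\chi_{i_l})$ of $\phi$; intersecting with $S$ on the pairs in $I$ adds exactly the remaining comparisons, so $R\cap_I S\neq\emptyset$ iff $G\models\phi(\bar v)$. For the complexity bound, each fixed $\tilde R_j$ has constant size (since $\phi$, hence $R_j$ and the chosen coordinates, is fixed), each cylindrified $\LL(\A(G,n_i,n_i'))\cap L_i$ is a product of an automaton of size $O(|G|)$ with a constant-size automaton for $L_i$, the overall synchronous product has a polynomial number of states with polynomially bounded transitions, and the union over the polynomially many tuples $\bar b$ is likewise polynomial; all these constructions are standard NFA operations that can be carried out in \dlog.

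The only mildly delicate point is the cylindrification step, ensuring that the $\bot$-padding on the ``free'' coordinates is handled uniformly across all coordinates so that the intersection over $\sigmabot^m$ really encodes the desired $m$-ary regular relation; this is a routine but not entirely trivial bookkeeping on which coordinates may carry $\bot$ at a given position, and it is the one place where one must be careful that the length mismatches among the $\lambda_i$'s do not introduce spurious acceptances. Once this is set up correctly, the rest of the proof is a direct analogue of Lemma~\ref{crpqs-lemma-one}.
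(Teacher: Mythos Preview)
Your proposal is correct and follows essentially the same approach as the paper: enumerate tuples $\bar b$, build the recognizable relation $R_{\bar b}$ of path-label tuples as in Lemma~\ref{crpqs-lemma-one}, then intersect with (cylindrifications of) the fixed regular relations $R_j$ from the query to obtain an $m$-ary regular relation, and take the union over~$\bar b$. The paper's write-up is terser (it simply says ``take the product of this recognizable relation with regular relations mentioned in the query''), whereas you spell out the cylindrification and the $\bot$-padding bookkeeping explicitly, but the underlying construction and the \dlog\ analysis are the same.
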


Conversely, the intersection problem for regular relations
and $S$ can be encoded as answering \ecrpq($S$) queries. 

\begin{lem}
\label{ecrpqs-lemma-two}
For each binary relation $S$, there is an \ecrpq($S$) query $\phi(x,x')$
and a \dlog\ algorithm that,
given a relation $R\in\REG_2$, constructs a graph $G$ and two nodes
$v,v'$ so that $G\models\phi(v,v')$ iff $(R\cap S)\neq \emptyset$.
\end{lem}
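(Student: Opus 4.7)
The plan is to parallel the proof of Lemma~\ref{crpqs-lemma-two}, but exploit the regular-relation atoms available in an \ecrpq($S$) query: these let us encode $R$ directly as the underlying graph of its defining NFA, without any union-of-products decomposition (which only recognizable $R$ admit). Given $R$ presented by an NFA $\A_R=(Q,\sigmabot^2,q_0,\delta,F)$ (WLOG with no $(\bot,\bot)$-transitions), I build $G$ over the alphabet $\Sigma_G=\Sigma\cup\sigmabot^2\cup\{\#_0,\#_1,!_0,!_1\}$ as the disjoint union of three gadgets: (i) the underlying graph of $\A_R$, with an edge $q\to q'$ carrying the single symbol $(a,b)\in\sigmabot^2$ for every transition $(q,(a,b),q')\in\delta$; (ii) a ``bouquet'' node $s$ with one self-loop labelled $a$ for each $a\in\Sigma$; and (iii) two fresh nodes $v,v'$ joined by a $\#_0$-edge from $v$ to $q_0$, a $\#_1$-edge from $v$ to $s$, an $!_0$-edge from each $f\in F$ to $v'$, and an $!_1$-edge from $s$ to $v'$.

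The fixed query $\phi(x,y)$ existentially quantifies three path variables $\chi_0,\chi_1,\chi_2$: it insists that $\chi_0$ runs between $\#_0$- and $!_0$-neighbours of $x,y$ with label in $(\sigmabot^2)^*$, and that $\chi_1,\chi_2$ run between $\#_1$- and $!_1$-neighbours of $x,y$ with labels in $\Sigma^*$. Because the distinct structural symbols partition the three tracks, these constraints force $\chi_0$ to traverse an accepting run of $\A_R$ (so $\lambda(\chi_0)=w_1\otimes w_2$ for some $(w_1,w_2)\in R$) while $\chi_1,\chi_2$ range freely over $\Sigma^*$ via the self-loops at $s$. The body of $\phi$ is completed by $S(\chi_1,\chi_2)$ and by the regular-relation atom $D(\chi_0,\chi_1,\chi_2)$, where
\[
  D \;=\; \bigl\{\,(w_1\otimes w_2,\,w_1,\,w_2)\;:\;w_1,w_2\in\Sigma^*\,\bigr\}
\]
is a fixed ternary relation over $\Sigma_G$ that does not depend on $R$.

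The only nontrivial step is to verify that $D\in\REG_3$: after padding the three components to the common length $|\chi_0|=\max(|\chi_1|,|\chi_2|)$, the condition reduces to the position-wise test that the pair $(a,b)=\chi_0[i]$ equals $(\chi_1[i],\chi_2[i])$, where out-of-range positions of $\chi_1$ or $\chi_2$ are read as $\bot$; this is recognised by a one-state DFA. Correctness of the reduction then follows in both directions: a satisfying assignment forces $(\chi_1,\chi_2)\in R$ (because $\chi_0$ encodes an accepting run of $\A_R$) together with $(\chi_1,\chi_2)\in S$, so $R\cap S\neq\emptyset$; conversely, any $(w,w')\in R\cap S$ supplies $\chi_0$ as an accepting run of $\A_R$ on $w\otimes w'$ together with $\chi_1=w$ and $\chi_2=w'$ read off the bouquet. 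Since $G$ is a direct syntactic copy of $\A_R$ plus a constant-size bouquet and start/end gadget, the translation is computable from $R$ in \dlog.
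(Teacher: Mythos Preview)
Your proposal is correct and follows essentially the same construction as the paper: encode the NFA for $R$ as a graph, add a self-loop gadget generating all of $\Sigma^*$, and use fixed regular relations to project the convolution $w_1\otimes w_2$ back onto its two components before applying $S$. The only cosmetic difference is that you pack the decoding into a single ternary relation $D=\{(w_1\otimes w_2,w_1,w_2)\}$, whereas the paper uses two binary relations $R_1,R_2$ extracting the first and second coordinates separately; your extra structural symbols $\#_0,\#_1,!_0,!_1$ likewise play the same role as the paper's single $\#$ together with the alphabet disjointness of $\sigbot^2$ and $\Sigma$.
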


The proof of Lemma \ref{ecrpqs-lemma-one} is almost the same as the
proof of  Lemma \ref{crpqs-lemma-one}: as before, we enumerate tuples
$\bar b$, construct relations $R_{\bar b}$ and $R=\bigcup_{\bar b}
R_{\bar b}$, but this time we take the product of this recognizable
relation with regular relations mentioned in the query. Since the
query is fixed, and hence we take a product with a fixed number of
fixed automata, such a product construction can be done in \dlog. The
result is now a regular $m$-ary relation. The rest of the proof is
exactly the same as in Lemma \ref{crpqs-lemma-one}. 

We now prove Lemma \ref{ecrpqs-lemma-two}. Let $R\in\REG_2$ be given
by an NFA over $\sigbot\times\sigbot$ whose underlying graph is
$G_R=\langle V_R, E_R\rangle$, where $E_R\subseteq V_R \times
(\sigbot\times\sigbot) \times V_R$. Let $v_0$ be its initial state,
and let $F$ be the set of final states. 

We now define the graph $G$. Its labeling alphabet $\Gamma$ is the
disjoint union of $\sigbot\times\sigbot$, the alphabet $\Sigma$
itself, and a new symbol $\#$. Its nodes $V$ include all nodes in
$V_R$ and two extra nodes, $v_f$ and $v'$. The edges are:
\begin{iteMize}{$\bullet$}
\item all the edges in $E_R$;
\item edges $(v,\#,v_f)$ for every $v\in F$;
\item edges $(v',a,v')$ for every $a\in\Sigma$. 
\end{iteMize}
 We now define two regular relations over $\Gamma$. The first, $R_1$,
 consists of pairs $(w, w')$, where $w \in
 (\sigbot\times\sigbot)^*$ and $w'\in \sigmas$. Furthermore, $w$ is of
 the form $w' \otimes w''$ for some $w''\in\sigmas$. It is straightforward to
 check that this relation is regular. The second one, $R_2$, is the same
 except $w$ is of the form $w''\otimes w'$.  In other words, the first
 component is $w_1\otimes w_2$, and the second is either $w_1$
 or $w_2$, for $R_1$ or $R_2$, respectively. 

Next, we define the \ecrpq($S$)  $\phi(x,y)$:
$$\exists x_1, y_1, x_2, y_2, z\ 
\left(
 \begin{array}{cccccc}
& x \Lto{\chi: \sigbot\times\sigbot} z & \wedge & z \Lto{\#} y && \\
\wedge & x_1 \Lto{\chi_1:\sigmas} y_1 & \wedge & x_2
\Lto{\chi_2:\sigmas} y_2 &&\\
\wedge & R_1(\chi,\chi_1) & \wedge & R_2(\chi,\chi_2) & \wedge &
S(\chi_1,\chi_2) 
\end{array}
\right)
$$ Note that when this formula is evaluated over $G$, with $x$
interpreted as $v_0$ and $y$ interpreted as $v_f$, the paths $\chi_1$
and $\chi_2$ can have arbitrary labels from $\Sigma^*$. Paths $\chi$
can have arbitrary labels over $\sigbot\times\sigbot$; however, since
they start in $v_0$ and 
must be followed by an $\#$-edge, they end in a final state of
the automaton for $R$, and hence labels of these paths are precisely
words in $\sigbot\times\sigbot$ of the form
$w_1\otimes w_2$, where $(w_1,w_2)\in R$. Now $R_1$ ensures that the
label of $\chi_1$ is $w_1$ and that the label of $\chi_2$ is $w_2$.
Hence the labels of $\chi_1$ and $\chi_2$ are precisely the pairs of
words in $R$, and the query asks whether such a pair belongs to
$S$. Hence, $G\models \phi(v_0,v_f)$ iff $R\cap S\neq \emptyset$. It
is straightforward to check that the construction of $G$ can be
carried out in \dlog. This proves the lemma and the theorem.

\bigskip

Thus, our next goal is to understand the behaviors of the generalized
intersection problem for various rational relations $S$ which are of
interest in graph logics; those include subword, suffix,
subsequence. In fact to rule out many undecidable or infeasible cases
it is often sufficient to analyze the intersection problem. We do this
in the next section, and then analyze the decidable cases to come up
with graph logics that can be extended with rational relations.


\section{The intersection problem: decidable and undecidable cases} 
\label{intp:sec}
\newcommand{\rpep}{\textup{PEP}^{\textit{reg}}}

We now study the problem $\INTP \REG S$ for binary rational relations
$S$ such as subword and subsequence, and for classes of relations
generalizing them. The input is a binary regular relation $R$ over
$\Sigma$, given by an \NFA over $\sigbot\times\sigbot$.  The question
is whether $R\cap S\neq\emptyset$.  We also derive results about the 
complexity of \ecrpq($S$) queries. 
For all
lower-bound results in this section, we assume that the alphabet
contains at least two symbols.

As already mentioned, there exist rational relations $S$ such that
$\INTP \REG S$ is undecidable. However, we are interested in relations
that are useful in graph querying, and that are among the most
commonly used rational relations, and for them the status of the
problem was unknown. 

Note that the problem $\INTP\REC S$ is tractable:
given $R\in\REC$, the relation $R\cap S$ is rational, can be
efficiently constructed, and checked for nonemptiness.

\subsection{Undecidable cases: subword and relatives}
\label{subword-subsec}

We now show that even for such simple relations as subword
and suffix, the intersection problem is undecidable. That is, given an
NFA over $\sigbot\times\sigbot$ defining a regular relation $R$, the
problem of checking for the existence of a pair $(w,w')\in R$ with
$w \suff w'$ or $w\subw w'$ is undecidable.

\begin{thm}
\label{suffix-thm}
The problems \mbox{$\INTP \REG \suff$} and  \mbox{$\INTP \REG \subw$}
are undecidable.
\end{thm}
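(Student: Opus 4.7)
The plan is to reduce from Post's Correspondence Problem (PCP). Given an instance $\{(u_i, v_i)\}_{i=1}^n$ over $\Sigma$, I will build, in polynomial time, a regular binary relation $R$ over an enlarged alphabet $\Gamma \supseteq \Sigma$ such that $R \cap {\suff} \neq \emptyset$ iff the PCP has a solution. Once the $\suff$ case is established, the $\subw$ case follows by the standard end-marker trick: pick a fresh symbol $\$\notin\Gamma$, form $R' = \{(\alpha\$, \beta\$) : (\alpha,\beta) \in R\}$, and observe that $\alpha\$ \subw \beta\$$ iff $\alpha\$ \suff \beta\$$ (since $\$$ occurs only as the terminal character of each word), so $R \cap {\suff} \ne \emptyset$ iff $R' \cap {\subw} \ne \emptyset$, and $R'$ is clearly regular when $R$ is.

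For the suffix reduction, a candidate solution $(i_1, \ldots, i_k)$ is encoded as a pair $(\alpha, \beta)$ in which $\alpha$ and $\beta$ carry the information required to recover the $u$-concatenation $u_{i_1} \cdots u_{i_k}$ and the $v$-concatenation $v_{i_1} \cdots v_{i_k}$, decorated with markers so that a synchronous two-tape automaton can enforce in bounded state that the two sides use the same index sequence. A unique prefix marker $\$$ is prepended to both components, so that any pair with $\alpha \suff \beta$ must satisfy $|\alpha|=|\beta|$ and hence $\alpha = \beta$. The remainder of $R$ is designed so that this forced equality is equivalent to the PCP equation $u_{i_1}\cdots u_{i_k} = v_{i_1}\cdots v_{i_k}$.

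The central technical difficulty is to find an encoding that simultaneously (i) is checkable by a synchronous two-tape automaton over $\sigbot\times\sigbot$ with finitely many states, and (ii) forces PCP equality rather than a strictly weaker or strictly stronger condition. A naive encoding that lists the index sequence at the end of both words requires unbounded memory to verify the declared sequence against the chosen decomposition, violating (i). A naive encoding where every letter is tagged with its block index preserves regularity but forces the much stronger block-wise equality $u_{i_j} = v_{i_j}$, which is easy to decide and therefore violates (ii). The right construction must interleave auxiliary markers with the $u$- and $v$-letters in a way that lets the automaton track the current block (pair index $i$, position within $u_i$, position within $v_i$) with bounded state, exploiting the fact that $|u_i|, |v_i|$ are bounded by a fixed constant $L$, while arranging the structure so that string equality $\alpha=\beta$ corresponds exactly to concatenation equality of the underlying $u$- and $v$-strings rather than to a block-wise identity.

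The main obstacle is precisely this balancing act: since the two tapes must be read in lock-step, consistency of the index sequences on the two sides must be verified locally (no unbounded sequence can be remembered), yet the suffix-based equality test, when combined with the regular constraints, must not collapse into pointwise block equality. Once a suitable encoding is fixed, verifying the reduction is a matter of a case analysis on the positions of the $\$$ marker and on the block lengths, showing that $R \cap \suff$ is nonempty exactly when the PCP instance admits a solution; by Theorem~\ref{ecrpq-thm} this also immediately yields undecidability of data complexity for \ecrpq($\suff$) and \ecrpq($\subw$) queries.
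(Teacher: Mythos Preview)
Your reduction of $\subw$ to $\suff$ via a fresh end-marker is fine and matches what the paper does. The core $\suff$ argument, however, has a genuine gap that cannot be repaired along the lines you sketch.

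You prepend a unique symbol $\$$ to both components so that $\alpha\suff\beta$ forces $\alpha=\beta$. But then the question ``does $R$ contain a pair $(\alpha,\beta)$ with $\alpha\suff\beta$?'' becomes ``does $R$ contain a pair on the diagonal?''. The diagonal $\{(w,w)\mid w\in\Gamma^*\}$ is itself a regular relation, regular relations are closed under intersection, and nonemptiness of a regular relation is decidable. Hence no choice of regular $R$ can make this question undecidable; whatever encoding you pick, the two paragraphs in which you discuss interleaving markers and bounded block lengths are trying to solve an unsolvable balancing problem. Put differently: if $\suff$ is used only to simulate equality, you have thrown away exactly the non-regularity that is needed for undecidability.

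The paper's proof avoids this trap by making the two components have \emph{different} lengths. It reduces from emptiness of linearly bounded automata: the first component is $w_{C_0}w_{C_1}\cdots w_{C_m}$ and the second is $w_{C'_1}\cdots w_{C'_m}$, where all configuration encodings have the same fixed length. The regular relation checks, in lock-step, that each $C'_{i+1}$ is an immediate successor of $C_i$; this is possible precisely because the two tapes are offset by one configuration block. The suffix constraint then forces $C'_i=C_i$ for all $i$, turning the locally-checked successor condition into a full accepting run. The essential point is that $\suff$ supplies the one-block \emph{shift} that a synchronous automaton cannot produce on its own; your construction, by forcing equal lengths, eliminates exactly this shift.
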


As an immediate consequence of this, we obtain:

\begin{cor}
The query evaluation problem for \ecrpq($\suff$) and \ecrpq($\subw$)
is undecidable.
\end{cor}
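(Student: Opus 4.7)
The plan is to deduce this corollary directly from Theorem \ref{suffix-thm} by invoking the lower-bound half of Theorem \ref{ecrpq-thm}, that is, the reduction encapsulated in Lemma \ref{ecrpqs-lemma-two}. Concretely, for each $S \in \{\suff,\subw\}$, Lemma \ref{ecrpqs-lemma-two} supplies a single \emph{fixed} \ecrpq$(S)$ query $\phi_S(x,x')$ together with a \dlog\ algorithm that, on input $R \in \REG_2$, constructs in logarithmic space a graph $G_R$ and two nodes $v_0, v_f$ such that
\[
G_R \models \phi_S(v_0,v_f) \quad\Longleftrightarrow\quad R \cap S \neq \emptyset.
\]
Hence the intersection problem $\INTP{\REG}{S}$ reduces (even in \dlog, hence in particular computably) to the evaluation problem for \ecrpq$(S)$ on a fixed query.

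Next I would invoke Theorem \ref{suffix-thm}, which asserts that $\INTP{\REG}{\suff}$ and $\INTP{\REG}{\subw}$ are both undecidable. Composing this with the reduction above, any algorithm deciding whether $G \models \phi_S(\bar v)$ (for the fixed $\phi_S$ provided by the lemma) would yield an algorithm deciding $\INTP{\REG}{S}$, contradicting Theorem \ref{suffix-thm}. Therefore the query-evaluation problems for \ecrpq$(\suff)$ and \ecrpq$(\subw)$ are undecidable.

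Two small remarks worth making explicit in the write-up. First, because the queries $\phi_S$ produced by Lemma \ref{ecrpqs-lemma-two} are fixed, the undecidability we obtain is in fact for \emph{data complexity} (input is only the graph and the tuple of distinguished nodes), and therefore a fortiori for combined complexity; no further strengthening is needed. Second, there is no genuine obstacle here: all the work is done by Theorem \ref{suffix-thm} and by the generic reduction of Lemma \ref{ecrpqs-lemma-two}. The proof is essentially a one-line citation composing these two results, and for that reason I would present it as a direct corollary rather than as a separate construction.
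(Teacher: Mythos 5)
Your proposal is correct and takes exactly the paper's route: the corollary is stated there as an immediate consequence of Theorem~\ref{suffix-thm}, with the transfer to query evaluation provided by the fixed-query reduction of Lemma~\ref{ecrpqs-lemma-two} (the lower-bound half of Theorem~\ref{ecrpq-thm}), precisely as you compose them. Your added remark that the fixed queries $\phi_S$ yield undecidability already for data complexity is a valid and accurate strengthening consistent with the paper's setup.
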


Thus, some of the most commonly used rational relations cannot be
added to \ecrpq s without imposing further restrictions. 

\OMIT{
\noindent
{\em Proof idea}. 
We sketch the idea of the proof for $\suff$. We encode nonemptiness
for linearly bounded automata (LBA). The alphabet $\Sigma$ is the
disjoint union of the tape alphabet of the LBA, its states, and the 
designated symbol $\$$. 
Each configuration $C$ with the tape content $a_0\ldots a_n$, where $a_0$
and $a_n$ are the left and 
right markers, the state is $q$, and the head points at $a_i$ is
encoded as a word $w_C=\$a_0\ldots a_{i-1}qa_i\ldots a_n\$$. Note that
the relation $\{(w_C,w_{C'})\ | \ C' \text{ is an immediate successor of
}C\}$ is regular and hence so is the relation $R=\{(w_{C_0}w_{C_1}\ldots
w_{C_m},w_{C_1'}\ldots w_{C_m'}) \ | \ C_{i+1}'$ is an immediate successor of
$C_i\text{ for } i<m\}$, since all configuration encodings are of the
same length. In fact, taking product with a regular language, we can
also assume that $R$ 
enforces $C_0$ to be an initial configuration, and $C_m$ to be a
final configuration. If $R\cap\mbox{$\suff$}$ is nonempty, it contains
a pair $ (w_{C_0}w_{C_1}\ldots w_{C_m},w_{C_1}\ldots
w_{C_m})$ such that $C_{i+1}$ is an immediate successor of 
$C_i$  for  all $i<m$, i.e., iff there is an accepting computation of the
LBA. This proves undecidability. The proof for $\subw$ is very
similar. \fth

\medskip

Note that the relation $R$ constructed in the proof is definable in
first-order logic, so the intersection problem for suffix and subword
is undecidable even if the input relation comes from the class of
star-free regular relations. 

\smallskip} 

\smallskip 

We skip the proof of Theorem \ref{suffix-thm} for the time being and
concentrate first on how to obtain a more general undecidability
result out of it.  As we will see below, the essence of the
undecidability result is that relations such as $\suff$ and $\subw$
can be decomposed in a way that one of the components of the
decomposition is a graph of a nontrivial strictly alphabetic
morphism. More precisely, let $R\cdot R'$ be the binary relation
$\{(w\cdot w', u\cdot u') \ | \ (w,u)\in R \text{ and }(w',u')\in
R'\}$. Let $\text{Graph}(f)$ be the graph of a function
$f:\Sigma^*\to\Sigma^*$, i.e., $\{(w,f(w))\ |\ w\in\sigmas\}$.

\begin{prop}
\label{gen-subw-cor}
Let $R_0,R_1$ be binary relations on $\Sigma$ such that $R_0$ is
recognizable and its second
projection is $\Sigma^*$. Let $f$ be a strictly alphabetic
morphism that is not constant (i.e. 
the image of $f$ contains at least two letters). Then, for $S=R_0\cdot\text{\rm Graph}(f)\cdot
R_1$, the problem $\INTP \REG S$ is undecidable. 
\end{prop}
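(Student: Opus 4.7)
The plan is to reduce from the undecidable problem $\INTP \REG \suff$ (Theorem~\ref{suffix-thm}), adapting the LBA-style encoding of that theorem to the more general relation $S$. Pick $a_1, a_2 \in \Sigma$ with $b_i := f(a_i)$ distinct (possible by non-constancy of $f$), and let $h : \{b_1,b_2\}^* \to \{a_1,a_2\}^*$ be the length-preserving section $b_i \mapsto a_i$, so that $f \circ h$ restricts to the identity on $\{b_1,b_2\}^*$. I may assume $R_1 \ne \emptyset$ (otherwise $S = \emptyset$ and the problem is trivial), and fix $(p_0, q_0) \in R_1$; write $R_0 = \bigcup_{i=1}^n L_i \times K_i$, so that $\mathrm{proj}_2(R_0) = \Sigma^*$ amounts to $\bigcup_i K_i = \Sigma^*$.

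The reduction turns a regular $R$ over $\{b_1,b_2\}$ (an instance of $\INTP \REG \suff$) into a regular $R'$ over $\Sigma$ whose pairs encode candidate witnesses via the tripartite form $(x,y) = (u \cdot h(w_1) \cdot p_0,\; v \cdot w_1 \cdot q_0)$, with $u \in L_i, v \in K_i$ for some common index $i$ and with $|u|=|v|$. The decomposition naturally witnessed by this shape, namely $x_0 = u$, $x_g = h(w_1)$, $x_1 = p_0$ on the first tape and $y_0 = v$, $y_g = w_1$, $y_1 = q_0$ on the second, satisfies $(u,v) \in R_0$ by the choice of $i$, $y_g = f(h(w_1)) = w_1 = f(x_g)$ by the definition of $h$, and $(x_1, y_1) = (p_0, q_0) \in R_1$. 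The tripartite structure built into $R'$ is designed so that \emph{any} $S$-decomposition of a pair in $R'$ must coincide with the intended one, whence $R' \cap S \ne \emptyset$ iff a suffix-witness $(w_1, v w_1) \in R$ exists.

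The main obstacle is realizing $R'$ as a regular relation, i.e., as a synchronous two-tape NFA that additionally verifies $(w_1, v \cdot w_1) \in R$. I organize this automaton in three phases: a prefix phase that nondeterministically guesses $i$ and reads $(u, v) \in L_i \times K_i$ in lockstep (which automatically enforces $|u| = |v|$ and so aligns the cores); a core phase reading $(h(w_1), w_1)$ in lockstep and checking the $a_j \leftrightarrow b_j$ correspondence position by position; and a suffix phase reading $(p_0, q_0)$. The $R$-simulation has to bridge an offset: the NFA for $R$ consumes $(w_1[k], (v w_1)[k])$ at its step $k$, where $(v w_1)[k]$ is available on tape~2 at global step $k$ but $w_1[k]$ appears on tape~1 only at global step $|u|+k = |v|+k$. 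This is handled by a product construction that runs a delayed copy of $R$'s NFA on the second component (available from tape~2 throughout) and defers feeding the first component until the core phase, using the identity $(v w_1)[|v|+k] = w_1[k]$ (readable directly off tape~2 during the core phase and verifiable against tape~1's $f$-decoded middle) to fold the two streams into a single synchronous read; this keeps the overall state space finite since only $R$'s finite state, not the contents of $v$, is recorded. The hypothesis $\bigcup_i K_i = \Sigma^*$ is what guarantees that every prefix $v$ witnessing a positive $\INTP \REG \suff$-instance arises as the second coordinate of some pair in $R_0$ with a matching-length partner $u \in L_i$, so no witnesses are lost.

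Correctness then follows by checking both directions: a witness $(w_1, v w_1) \in R \cap \suff$ lifts to a pair $(x, y) \in R' \cap S$ by instantiating the encoding (choosing $i$ with $v \in K_i$ and $u \in L_i$ of length $|v|$, supplied by the hypothesis on $R_0$), while any $(x, y) \in R' \cap S$ yields a witness by reading off $(f(x_g), y_0 \cdot f(x_g))$ from the forced $S$-decomposition. Hence $\INTP \REG S$ inherits the undecidability of $\INTP \REG \suff$.
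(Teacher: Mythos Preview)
Your reduction has a concrete gap in the forward direction. You claim that for every $v \in \Sigma^*$ the hypothesis on $R_0$ supplies an index $i$ with $v \in K_i$ and some $u \in L_i$ of length $|v|$; but the hypothesis only asserts $\bigcup_i K_i = \Sigma^*$ and says nothing about which lengths occur in $L_i$. For $R_0 = \{\e\} \times \Sigma^*$ (recognizable, second projection $\Sigma^*$) the only choice is $u = \e$, forcing $v = \e$, so a witness $(w_1, v w_1) \in R \cap {\suff}$ with $v \ne \e$ cannot be lifted to $R' \cap S$. This already breaks the reduction for the very relation $S = {\suff}$ you are reducing from. A second, independent problem is the automaton for $R'$: to simulate the synchronous NFA for $R$ on $(w_1, v w_1)$ you need, at $R$'s step $k$, both $w_1[k]$ and $(vw_1)[k]$; on your tapes these appear at global steps $|u|+k$ and $k$ respectively, an unbounded offset that no finite product construction can absorb (your ``deferring the first component'' is tantamount to buffering $|v|$ letters). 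Finally, you assert that any $S$-decomposition of a pair in $R'$ is forced to be the intended one, but nothing in your construction prevents $R_0$ or $R_1$ from claiming a different split of $x$ and $y$.

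The paper sidesteps all of this by not reducing from ${\suff}$: it redoes the LBA-emptiness encoding directly for $S$, working over two letters $a,b$ with $f(a) \ne f(b)$, fixing \emph{one} word $w_0 \in L_j$ and \emph{one} pair in $R_1$ as constants of the reduction, and using marker blocks chosen distinct from these fixed words to pin down the decomposition. Because $w_0$ is a fixed constant rather than a word length-matched to a variable $v$, neither of your obstacles arises.
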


Note that 
both $\suff$ and $\subw$ are of the required shape:
suffix is $(\{\e\}\times\Sigma^*) \cdot \text{Graph(id)} \cdot
(\{\e\}\times\{\e\})$, and 
subword is $(\{\e\}\times\Sigma^*) \cdot \text{Graph(id)} \cdot
(\{\e\}\times\Sigma^*)$, where id is the identity alphabetic morphism.

\begin{proof}[Proofs of Theorem \ref{suffix-thm} and Proposition
  \ref{gen-subw-cor}]
  We present the proof for the suffix relation $\suff$. The proofs for
  the subword relation, and more generally, for the relations
  containing the graph of an alphabetic morphism follow the same idea
  and will be explained after the proof for $\suff$. The proof is by
  encoding nonemptiness for linearly bounded automata (LBA). Recall
  that an LBA $\A$ has a tape alphabet $\Gamma$ that contains two
  distinguished symbols, $\alpha$ and $\beta$, which are the left and
  the right marker. The input word $w\in (\Gamma-\{\alpha,\beta\})^*$
  is written between them, i.e., the content of the input tape is
  $\alpha \cdot w\cdot \beta$. The LBA behaves just like a Turing
  machine, except that when it is reading $\alpha$ or $\beta$, it
  cannot rewrite them, and it cannot move left of $\alpha$ or right of
  $\beta$. The problem of checking whether the language of a given LBA
  is nonempty is undecidable.

  We encode this as follows.  The alphabet $\Sigma$ is the disjoint
  union of the tape alphabet $\Gamma$ of the LBA $\A$, the set of its
  states $Q$, and the designated symbol $\$$ (we assume, of course,
  that these are disjoint). A configuration $C$ of the LBA consists of
  the tape content $a_0\ldots a_n$, where $a_0=\alpha$ and
  $a_n=\beta$, and all the $a_i$s, for $0< i < n$, are letters from
  $\Gamma-\{\alpha,\beta\}$, the state $q$, and the position $i$, for
  $0 \leq i \leq n$, that the head is pointing to. We encode this as a
  word
$$w_C\ =\ \$a_0\ldots a_{i-1}qa_i\ldots a_n\$ \in \Sigma^*$$
of length $n+4$. Of course if the head is pointing to $\alpha$, the
configuration is $\$qa_0\ldots a_n\$$. Note that if we have a run of
the LBA with configurations $C_0, C_1, \ldots$, then the lengths of
all the $w_{C_i}$s are the same.

Next, note that the relation $$R^\A_{{\rm imm}}\ = \ \{(w_C,w_{C'})\ |
\ C' \text{ is an immediate successor of }C\}$$ is regular (in fact
such a relation is well-known to be regular even for arbitrary Turing
machines \cite{jacm2003,graedel2000,bruyere}). Since all
configurations are of the same length, we obtain that the relation
$$R_{\A}' \ =\ \{(w_{C_0}w_{C_1}\ldots
w_{C_m},w_{C_1'}\ldots w_{C_m'}) \ | \ C_{i+1}'\ \text{is an immediate
  successor of }C_i\text{ for } i<m\}$$ is regular too (since only one
configuration in the first projection does not correspond to a
configuration in the second projection). By taking the product with a
regular language that ensures that the first symbol from $Q$ in a word
is $q_0$, and the last such symbol is from $F$, we have a regular
relation
$$R_{\A} \ =\ \biggl\{(w_{C_0}w_{C_1}\ldots
w_{C_m},w_{C_1'}\ldots w_{C_m'}) \ \biggl|
\begin{array}{l} C_{i+1}'\ \text{is an immediate successor of
  }C_i\text{ for } i<m;\\
  C_0 \text{ is an initial configuration };\\
  C_m \text{ is a final configuration }
\end{array}
\biggr\}$$ which can be effectively constructed from the description
of the LBA.

Now assume that $R_{\A}\cap\mbox{$\suff$}$ is nonempty. Then, since
all encodings of configurations are of the same length, it must
contain a pair $ (w_{C_0}w_{C_1}\ldots w_{C_m},w_{C_1}\ldots w_{C_m})$
such that $C_{i+1}$ is an immediate successor of $C_i$ for all
$i<m$. Since $C_0$ is an initial configuration and $C_m$ is a final
configuration, this implies that the LBA has an accepting
computation. Conversely, if there is an accepting computation with a
sequence of configurations $C_0,C_1,\ldots,C_m$ of the LBA, then the
pair $ (w_{C_0}w_{C_1}\ldots w_{C_m},w_{C_1}\ldots w_{C_m})$ is both
in $R_{\A}$ and in the suffix relation.  Hence,
$R_{\A}\cap\mbox{$\suff$}$ is nonempty iff there is an accepting
computation of the LBA, proving undecidability.

The proof for the subword relation is practically the same. We change
the definition of relation $R_{\A}$ so that there is an extra \$
symbol inserted between $w_{C_0}$ and $w_{C_1}$, and two extra \$
symbols after $w_{C_m}$ in the first projection; in the second
projection we insert extra two \$ symbols before $w_{C_1'}$ and after
$w_{C_m'}$. Note that the relation remains regular: even if the
components are not fully synchronized, at every point there is a
constant delay between them (either 2 or 1), and this can be captured
by simply encoding one or two alphabet symbols into the state. Since
in each word there are precisely two places where the subword \$\$\$
appears, the subword relation in this case becomes the suffix
relation, and the previous proof applies.

The same proof can be applied to deduce Proposition
\ref{gen-subw-cor}. Note that we can encode letters of alphabet
$\Sigma$ within the alphabet $\{0,1\}$ so that the encodings of each
letter of $\Sigma$ will have the same length, namely $\lceil \log_2
(|\Gamma|+|Q|+1)\rceil$.  Then the same proof as before will apply to
show undecidability over the alphabet $\{0,1\}$, since the encodings
of configurations still have the same length.

Since $R_0$ is regular, it is of the form $\bigcup_i L_i \times K_i$,
and by the assumption, $\bigcup_i K_i=\sigmas$. Thus, the encoding of
the initial configuration will belong to one of the $K_i$s, say
$K_j$. We then take a fixed word $w_0\in L_j$ and assume that the
second component of the relation starts with $w_0$ (which can be
enforced by the regular relation). Likewise, we take a fixed pair
$(w_1,w_2)\in R_1$, and assume that $w_1$ is the suffix of the first
component of the relation, and $w_2$ is the suffix of the second. This
too can be enforced by the regular relation.

Now if we have a non-constant alphabetic morphism $f$, we have two
letters, say $a$ and $b$, so that $f(a)\neq f(b)$. We now simply use
these letters, with $a$ playing the role of $0$, and $b$ playing the
role of $1$ in the first projection of relation $R$, and $f(a), f(b)$
playing the roles of $0$ and $1$ in the second projection, to encode
the run of an LBA as we did before. The only difference is that
instead of a sequence of \$ symbols to specify the positions of the
encoding we use a (fixed-length) sequence that is different from
$w_0,w_1,w_2$ above, to identify its position uniquely. Then the proof
we have presented above applies verbatim.
\end{proof}

\subsection{Decidable cases: subsequence and relatives}
\label{subseq-subsec}

We now show that the intersection problem is decidable for the
subsequence relation $\subseq$ and, much more generally, for a class
of relations that do not, like the relations considered in the
previous section, have a ``rigid'' part. More precisely, the problem is also decidable for any relation so that its projection on the first component is closed under subsequence. However, the complexity bounds are extremely
high. In fact we show that the complexity of checking whether
$(R \cap \mbox{$\subseq$}) \neq \emptyset$, when $R$ ranges over
$\REG_2$, is not bounded by any multiply-recursive function. This was
previously known for $R$ ranging over $\RAT_2$, and was viewed as the
simplest problem with non-multiply-recursive
complexity \cite{CS-fsttcs07}. We now push 
it further and show that this high complexity is already achieved with
regular relations. 

Some of the ideas for showing this come from a decidable relaxation of
the Post Correspondence Problem (PCP), namely
the \emph{regular Post Embedding Problem}, or $\rpep$, introduced
in \cite{CS-fsttcs07}.
An instance of this problem consists of two morphisms $\sigma,\sigma': \Sigma^* \to \Gamma^*$ and a regular language $L \subseteq \Sigma^*$; the question is whether there is some $w \in L$
such that 
$\sigma(w) \subseq \sigma'(w)$ (recall that in the case of the PCP the
question is whether $\sigma(w) = \sigma'(w)$ with $L = \Sigma^+$). 
We call $w$ a \emph{solution} to the instance $(\sigma,\sigma',L)$. 
The $\rpep$ problem is known to be decidable, and as hard as the reachability
problem for lossy channel systems \cite{CS-fsttcs07} which
cannot be bounded by any primitive-recursive function ---in fact, by any
multiply-recursive function (a generalization of primitive recursive functions with hyper-Ackermannian complexity, see \cite{rose}). More precisely, it is shown in \cite{schsch} to be precisely at the level $\textup F_{\omega^\omega}$ of the fast-growing hierarchy of recursive functions \cite{fast,rose}.\footnote{In this hierarchy---also known as the Extended Grzegorczyk Hierarchy---, the classes of functions $\textup F_\alpha$ are closed under elementary-recursive reductions, and are indexed by ordinals. Ackermannian complexity corresponds to level $\alpha = \omega$, and level $\alpha = \omega^\omega$ corresponds to some hyper-Ackermannian complexity. }
 
The problem $\rpep$ is just a reformulation of the problem $\INTP \RAT
{\mbox{$\subseq$}}$. Indeed, relations of the form 
$\{(f(w), g(w))\ | \ w\in L\}$, where $L\subseteq \sigmas$ ranges over
regular 
languages and $f,g$ over morphisms
$\Sigma^*\to\Gamma^*$ are precisely the relations in $\RAT_2$ \cite{berstel,nivat68}. 
Hence, $\INTP \RAT {\mbox{$\subseq$}}$ is decidable, with
non-multiply-recursive complexity.  
\begin{prop}[\cite{CS-fsttcs07}]\label{prop:subseq-rat-dec-nmr}
$\INTP \RAT {\mbox{$\subseq$}}$ is decidable, non-multiply-recursive.
\end{prop}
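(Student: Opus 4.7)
The plan is to exhibit a straightforward two-way effective reduction between $\INTP \RAT {\mbox{$\subseq$}}$ and the regular Post Embedding Problem $\rpep$, thereby transferring both the decidability and the non-multiply-recursive lower bound of \cite{CS-fsttcs07} to our setting.

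For the upper bound I would invoke the Nivat--Berstel characterization recalled in Section \ref{sec:prelim}: every binary rational relation $R \subseteq \sigmas \times \sigmas$ can be effectively written as $R = \{(f(u),g(u)) \mid u \in L\}$ for some regular $L \subseteq \Gamma^*$ and alphabetic morphisms $f,g : \Gamma^* \to \sigmas$, with $(L,f,g)$ computed from, say, a two-tape automaton accepting $R$. Then
\[
R \cap {\mbox{$\subseq$}} \ \neq\ \emptyset \quad\Longleftrightarrow\quad \exists u \in L \text{ such that } f(u) \subseq g(u),
\]
which is exactly an instance of $\rpep$ (alphabetic morphisms being a special case of the arbitrary morphisms allowed there). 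Decidability of $\rpep$ therefore yields decidability of $\INTP \RAT {\mbox{$\subseq$}}$.

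For the lower bound I would go in the opposite direction. Given an $\rpep$ instance $(\sigma,\sigma',L)$ with $\sigma,\sigma' : \Sigma^* \to \Gamma^*$ and $L \subseteq \Sigma^*$, the relation $R_{\sigma,\sigma',L} = \{(\sigma(u),\sigma'(u)) \mid u \in L\}$ is rational: it is accepted by a two-tape automaton that simulates an \NFA\ for $L$ while outputting the block $\sigma(a)$ on the first tape and $\sigma'(a)$ on the second tape for every transition on letter $a$. By construction the $\rpep$ instance admits a solution iff $R_{\sigma,\sigma',L} \cap {\mbox{$\subseq$}} \neq \emptyset$, so the non-multiply-recursive lower bound of \cite{CS-fsttcs07} for $\rpep$ transfers to $\INTP \RAT {\mbox{$\subseq$}}$.

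I do not anticipate any real obstacle: the only bookkeeping step is translating between arbitrary morphisms (as used in $\rpep$) and the alphabetic morphisms produced by the characterization, which is routine via an alphabet refinement that spells out the blocks $\sigma(a)$ and $\sigma'(a)$ one letter at a time, intersected with a regular language that enforces the correct block structure. Once this is noted, both directions of the equivalence are immediate and the proposition follows directly from \cite{CS-fsttcs07}.
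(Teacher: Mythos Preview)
Your proposal is correct and matches the paper's approach exactly: the paper simply observes that, by the Nivat--Berstel characterization of $\RAT_2$ as $\{(f(w),g(w))\mid w\in L\}$ for regular $L$ and (alphabetic) morphisms $f,g$, the problem $\INTP \RAT {\mbox{$\subseq$}}$ is a reformulation of $\rpep$, and then cites \cite{CS-fsttcs07} for both decidability and the non-multiply-recursive lower bound. Your two reductions spell out precisely this identification in both directions.
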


We show that the lower bound
already applies to regular relations. 

\begin{thm}
\label{subseq-dec}
The problem $\INTP \REG
{\mbox{$\subseq$}}$ is decidable, and its complexity is not bounded by
any multiply-recursive function.
\end{thm}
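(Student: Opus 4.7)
The statement has two parts, which I would treat separately. Decidability is immediate: since $\REG_2 \subseteq \RAT_2$, any instance of \INTP{\REG}{\subseq} is also an instance of \INTP{\RAT}{\subseq}, whose decidability is given by Proposition~\ref{prop:subseq-rat-dec-nmr}.

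For the non-multiply-recursive lower bound, the plan is to reduce from \rpep{} (equivalently \INTP{\RAT}{\subseq}, known to be non-multiply-recursive by Proposition~\ref{prop:subseq-rat-dec-nmr}) to \INTP{\REG}{\subseq}. Given an \rpep{} instance $(\sigma, \sigma', L)$ with $\sigma, \sigma' : \Sigma^* \to \Gamma^*$ and $L \subseteq \Sigma^*$ regular, I would construct a regular binary relation $R'$ over an enlarged alphabet $\Gamma \cup \{\#\}$ such that $R' \cap {\subseq} \neq \emptyset$ iff some $w \in L$ satisfies $\sigma(w) \subseq \sigma'(w)$. The guiding idea is to ``synchronize'' the inherently asynchronous rational relation by introducing a fresh padding symbol $\#$ and aiming for morphisms $f, g$ with $|f(a)| = |g(a)|$ for every $a \in \Sigma$, so that $\{(f(w), g(w)) : w \in L\}$ is length-preserving---hence definable by a synchronous automaton and therefore regular---while arranging for the subsequence check on the padded words to still capture $\sigma(w) \subseq \sigma'(w)$.

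The main obstacle will be placing the $\#$'s so that they do not introduce spurious constraints. The naive attempt $f(a) = \sigma(a)\, \#^{|\sigma'(a)|}$, $g(a) = \#^{|\sigma(a)|}\, \sigma'(a)$ is length-preserving, and the $\Gamma$-letters of $f(w), g(w)$ are exactly $\sigma(w), \sigma'(w)$; but the $\#$-counts come out to $|\sigma'(w)|$ in $f(w)$ and $|\sigma(w)|$ in $g(w)$, so embedding the $\#$'s forces $|\sigma'(w)| \leq |\sigma(w)|$, which together with $\sigma(w) \subseq \sigma'(w)$ collapses the problem to PCP---$\Sigma^0_1$-hard, but not non-multiply-recursive. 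The correct encoding must therefore equalise the $\#$-counts across the two components (so $\#$-embedding is automatic) while still forcing a \emph{global} $\Gamma$-subsequence comparison, rather than the overly strong per-letter condition $\sigma(a_i) \subseq \sigma'(a_i)$ that any straightforward block-separator scheme would impose. Should this padded-morphism route prove too rigid, the fallback is a direct encoding of the reachability problem for lossy counter machines (or lossy channel systems, whose reachability is itself non-multiply-recursive): a run $(q_0, c_0), \ldots, (q_m, c_m)$ is represented by a pair $(u, v)$ where $u$ lists the actual lossy configurations and $v$ lists their perfect (non-lossy) successors; unique state labels force block alignment under $\subseq$ and hence yield the lossiness conditions $c_i \subseq c^*_i$, while the regular relation enforces the shift-by-one perfect-transition constraint via controlled padding within each block.
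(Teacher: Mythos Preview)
Your decidability argument is correct and matches the paper's.

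For the lower bound, you have correctly identified the strategy (reduce from $\rpep$ via a $\#$-padding that turns the asynchronous relation into a synchronous one) and correctly diagnosed why the naive padding fails. But you have not actually solved the obstacle you name; both branches of your plan stop at ``the correct encoding must \ldots'' or ``controlled padding within each block'' without saying how. The missing idea is this: the paper first shows (Lemma~\ref{lemma:instrumental}) that $\rpep$ is inter-reducible with the problem of finding a \emph{strict codirect} solution---one where the witnessing embedding $\sigma(w)\subseq\sigma'(w)$ can be realised so that for every proper prefix $\sigma(w[1..k])$, the portion of $\sigma'(w)$ consumed is strictly shorter than $|\sigma(w[1..k])|$. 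This ``first component always ahead'' invariant is exactly the structural property that makes the $\#$-cascade work: in the regular relation $R'$ (length-preserving on $\Gamma\cup\{\#\}$, with $(u_\Gamma,v_\Gamma)\in R$) followed by a tail $(\e,\#^*)$, each $\#$ forced in the first component can be matched to a strictly later $\#$ in the second, and the chain of demands is eventually absorbed by the tail. Without first restricting to codirect solutions, there is no reason the $\#$'s can be placed consistently, which is precisely the wall you hit.

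Your fallback via lossy channel systems does not escape this difficulty. Channel contents have unbounded and varying length, so the ``shift-by-one perfect-transition constraint'' between successive configurations is inherently asynchronous; turning it into a regular relation again requires padding blocks of different lengths to a common length, and the phrase ``controlled padding within each block'' is the same unsolved problem in different clothing. (Unique state labels give block alignment for the $\Gamma$-letters, but not for the $\#$'s.) To make either route go through you need an a priori shape constraint on solutions---something playing the role of strict codirectness---that forces the padding demands to cascade in one direction only.
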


The proof of the theorem above will be shown further down, after some preparatory definitions and lemmas are introduced.

It is worth noticing that one cannot solve the problem $\INTP \REG
{\mbox{$\subseq$}}$ by simply reducing to nonemptiness of rational
relations due to the following.

\begin{prop}\label{prop:reg-subseq-nonrat} There is a binary
regular relation $R$ such that $(R \cap {\subseq})$ is not rational.
\end{prop}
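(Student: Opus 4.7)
The plan is to exhibit a specific regular relation whose intersection with $\subseq$ has a non-regular projection; since every projection of a rational relation is regular, this will settle the claim. Concretely, I take
\[
R \ = \ \{(a^p \# a^q,\, a^{p+q+1} \# a^r) : p,q,r \geq 0\}
\]
over the alphabet $\{a,\#\}$. The point of this choice is that the constraint defining $R$ --- the first $a$-block of the second component has length exactly one more than the total number of $a$'s in the first --- is \emph{position-local} on the padded encoding and hence regular, whereas the constraint induced by $\subseq$ will turn out to be \emph{global} and non-rational.

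First I would verify $R \in \REG_2$ by exhibiting a small NFA over $\sigbot^2$ that reads padded tuples of the form $(a,a)^p (\#,a)(a,a)^q(\bot,\#)(\bot,a)^r$; at each position the admissible pair is fixed by a finite-state record of the current ``phase'', with no counting of $p,q,r$ needed. Next I would compute $R \cap \subseq$: since both components contain exactly one $\#$, any subsequence embedding must send the $\#$ of the first to the (unique) $\#$ of the second, after which the $p$ leading $a$'s trivially embed into the leading $a$-block of the second, while the $q$ trailing $a$'s must embed into the $r$ trailing $a$'s. Hence
\[
R \cap \subseq \ = \ \{(a^p \# a^q,\, a^{p+q+1} \# a^r) : p,q \geq 0,\; r \geq q\}.
\]

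To rule out rationality, I would intersect with the recognizable product $\#a^* \times a^+ \# a^*$, which forces $p = 0$. Since rational relations are closed under intersection with recognizable ones, rationality of $R \cap \subseq$ would imply rationality of
\[
D \ = \ \{(\# a^q,\, a^{q+1} \# a^r) : q \geq 0,\; r \geq q\}.
\]
The second projection of $D$ equals $\{a^m \# a^r : m \geq 1,\, r \geq m-1\}$, and a pumping argument on the word $a^n \# a^{n-1}$ shows this language is not regular: any pumpable factor sits inside the leading $a$-block, so iterating it enlarges $m$ while leaving $r$ fixed, violating $m \leq r+1$. This contradicts the fact that every projection of a rational relation is a regular language.

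The main obstacle is the choice of $R$: most natural candidates either fail to be regular (because enforcing their defining constraint synchronously would require counting, as happens for $\{(a^n b^m, a^m b^n)\}$) or yield intersections with $\subseq$ that collapse back to rational relations (e.g.\ whenever one component is always a prefix of the other). The key point is to cut out $R$ by a constraint that is position-local on the padded tuples, yet such that the $\#$-to-$\#$ alignment forced by $\subseq$ produces the non-local inequality $r \geq q$, which becomes visible as a non-regular projection after fixing $p = 0$.
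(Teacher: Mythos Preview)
Your proof is correct. The example $R$ is indeed regular (the padded language $(a,a)^*(\#,a)(a,a)^*(\bot,\#)(\bot,a)^*$ makes this transparent), the computation of $R\cap{\subseq}$ via the forced $\#$-to-$\#$ alignment is sound, and the reduction through intersection with a recognizable product and projection correctly reduces non-rationality to a routine pumping argument on a unary-style language.

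The paper takes a somewhat different route. Its example is the simpler relation $R=\{(a^m,b^m a^{m'})\}$ over $\{a,b\}$, whose intersection with $\subseq$ is $\{(a^m,b^m a^{m'}):m'\geq m\}$; it then argues non-rationality \emph{directly} by pumping on a hypothetical NFA over $(\Sigma\cup\{\e\})^2$ recognising $R\cap{\subseq}$, shrinking the trailing $a$-block of the second component while keeping the first component at length $n+1$. Your approach instead routes the contradiction through closure properties (intersection with $\REC$ preserves rationality; projections of rational relations are regular), so the final step is a textbook pumping-lemma argument on a language rather than on a transducer. The paper's example is more economical --- no auxiliary parameter $p$ to eliminate, no need to invoke closure under $\REC$-intersection --- whereas your argument is more modular and makes the non-rationality visible already at the level of a single regular-language projection, which some readers may find cleaner.
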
 \begin{proof} Let $\Sigma = \set{ a, b}$, and
consider the following regular relation, \[ R = \set{(a^m,b^m \cdot
a^{m'}) \mid m, m' \in \N} .  \]%
\newcommand{\Aut}{\+A}%
\newcommand{\run}{\rho_\Aut}%
Note that the relation $R \cap {\subseq}$
is then $\set{(a^m,b^m \cdot a^{m'} ) \mid m, m' \in \N, m' \geq
m}$. We show that $R \cap {\subseq}$ is not rational by means of
contradiction.  Suppose that it is, and let $\Aut$ be an NFA over
$\set{a,b,\e} \times \set{a,b,\e}$ that recognizes $R \cap
{\subseq}$. Suppose $Q$ is the set of states of $\Aut$, and
$|Q|=n$. 

Consider the following pair \[ (a^{n+1}, b^{n+1} \cdot
a^{n+1}) \quad\in\quad R \cap {\subseq}. \] Then there must be some
$u \in (\set{a,b,\e} \times \set{a,b,\e})^*$ such that
\[(\wproj_1(u),\wproj_2(u)) = (a^{n+1}, b^{n+1} \cdot
a^{n+1})\] and $u \in \lang\Aut$.  Let $\run : [0..|u|] \to Q$ be the
accepting run of $\Aut$ on $u$, and let $1 \leq i_1< \dotsb < i_{n+1}
\leq |u|$ be such that $\wproj_2(u[i_j]) = a$ for all $j \in
[n+1]$. Clearly, among $\run(i_1), \dotsc, \run(i_{n+1})$ there must
be two repeating elements by the pigeonhole principle. Let $1\leq
j_1<j_2\leq n+1$ be such elements, where
$\run(i_{j_1})=\run(i_{j_2})$. Hence $u' = u[1..i_{j_1}-1] \cdot
u[i_{j_2}..] \in \lang \Aut$, and therefore \[
\big(\wproj_1(u'), \wproj_2(u')\big) \quad\in\quad R
\cap {\subseq}.  \] Notice that $\wproj_2(u') = b^{n+1}
\cdot a^{n+1 - (j_2 - j_1)}$. But by definition of $R \cap {\subseq}$ we have that $\wproj_1(u') = a^{n+1}$ with $n+1 - (j_2 - j_1) \geq n+1$, which is clearly false. The contradiction comes from the assumption that $R \cap
{\subseq}$ is rational.  
\end{proof}

\smallskip

As already mentioned, the decidability part of Theorem~\ref{subseq-dec} follows from Proposition~\ref{prop:subseq-rat-dec-nmr}.  We prove the lower bound by reducing $\rpep$ into $\INTP \REG {\mbox{$\subseq$}}$.

This reduction is done in two phases. 
First, we show that there is a reduction from $\rpep$ into the problem of finding solutions of $\rpep$ with a certain shape, which we call a \emph{strict codirect solutions} (Lemma~\ref{lemma:instrumental}). Second, we show that there is a reduction from the problem of finding strict codirect solutions of a $\rpep$ instance into $\INTP \REG {\mbox{$\subseq$}}$ (Proposition~\ref{prop:reducRPEP-inter-REG-subseq}).
Both reductions are elementary and thus the hardness result of Theorem~\ref{subseq-dec} follows.

In the next section we define the strict codirect solutions for $\rpep$, showing that we can restrict to this kind of solutions. In the succeeding section we show how to reduce the problem into $\INTP \REG {\mbox{$\subseq$}}$.


\subsubsection{Codirect solutions of $\rpep$}
\newcommand{\interemptypb}[2]{\INTP{#1}{#2}}
\newcommand{\interemptypbdir}[2]{{({#1} \cap {#2})^{\text{dir}}} \stackrel{?}{=} \emptyset}
\newcommand{\interemptypbsyn}[2]{{({#1} \cap {#2})^{\text{syn}}} \stackrel{?}{=} \emptyset}
\newcommand{\interemptypbcodir}[2]{{({#1} \cap {#2})^{\text{codir}}} \stackrel{?}{=} \emptyset}
\newcommand{\interemptypbgen}[1]{{({#1})} \stackrel{?}{=} \emptyset}

There are some variations of the $\rpep$ problem that result being equivalent problems. These variations restrict the solutions to have certain properties.
Given a $\rpep$ instance $(\sigma,\sigma',L)$, we say that $w \in L$ with $|w|=m$ is a \emph{codirect solution} if there are (possibly empty) words $v_1, \dotsc, v_m$  such that
  \begin{enumerate}[1.]
    \item \label{item:codir:1}
    $v_k \subseq \sigma'(w[k])$ for all $1 \leq k \leq m$,
    \item \label{item:codir:2}
    $\sigma(w[1..m]) = v_1 \dotsb v_m$, and
    \item \label{item:codir:3}
    $|\sigma(w[1..k])| \geq |v_1 \dotsb v_k|$ for all $1 \leq k \leq m$.
    \newcounter{enumi_saved}
    \setcounter{enumi_saved}{\value{enumi}}
  \end{enumerate}
If furthermore
  \begin{enumerate}[1.]
      \setcounter{enumi}{\value{enumi_saved}}
    \item \label{item:codir:4}
    $|\sigma(w[1..k])| > |v_1 \dotsb v_k|$ for all $1 \leq k < m$,
  \end{enumerate}
we say that it is a \emph{strict codirect solution}. In this case we say that the solution $w$ is \emph{witnessed by} $v_1, \dotsc, v_m$.
In \cite{CS-fsttcs07} it has been shown that the problem of whether an instance of the $\rpep$ problem has a codirect solution is equivalent to the problem of whether it has a solution. Moreover, it can be shown that this also holds for strict codirect solutions.
\begin{lem} \label{lemma:instrumental} 
  The problem of whether a $\rpep$ instance has a strict codirect solution is as hard as whether a $\rpep$ instance has a solution.
\end{lem}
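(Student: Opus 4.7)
\medskip

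\noindent\textbf{Proof plan.} One direction is immediate: every strict codirect solution is a codirect solution, so an instance that admits a strict codirect solution certainly admits a codirect one, which by the result of \cite{CS-fsttcs07} is equivalent to admitting an arbitrary $\rpep$ solution. This gives the easy direction of the hardness claim.

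For the converse, the plan is to reduce the codirect $\rpep$ problem to the strict codirect $\rpep$ problem by an elementary transformation. Given an instance $(\sigma,\sigma',L)$, I construct a new instance $(\hat\sigma,\hat\sigma',\hat L)$ over enlarged alphabets $\hat\Sigma\supsetneq\Sigma$ and $\hat\Gamma\supsetneq\Gamma$: a fresh letter $c$ is added to $\Sigma$ and a fresh symbol $\#$ to $\Gamma$, and the morphisms on the new symbols together with $\hat L$ (obtained from $L$ by a regular insertion/padding of $c$'s) are chosen so as to manufacture a permanent slack of at least one symbol between $|\hat\sigma(\hat w[1..k])|$ and $|\hat v_1\dotsb\hat v_k|$ at every intermediate position $k<|\hat w|$. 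The $c$'s are designed so that the $\#$'s they produce on the $\hat\sigma$-side are consumed on the $\hat\sigma'$-side only with a one-step delay, which is exactly what rules out the intermediate balance points $L_k=\ell_k$ allowed by a codirect but forbidden by a strict codirect witness.

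Correctness of the reduction splits into two directions. Given a codirect solution $(w,v_1,\ldots,v_m)$ of $(\sigma,\sigma',L)$, I build $\hat w\in\hat L$ by interspersing $c$'s in $w$ and extend the witness to $\hat v_1,\dotsc,\hat v_{|\hat w|}$ by letting the $c$-positions absorb the $\#$'s; a direct calculation then shows that the original inequality $L_k\geq\ell_k$ gets boosted by the padding to a strict inequality at every intermediate step. Conversely, starting from a strict codirect solution of $(\hat\sigma,\hat\sigma',\hat L)$, the only way to match the pattern of $\#$'s on the two sides is to use the $c$-positions for the $\#$-symbols, so erasing the $c$'s from $\hat w$ and the corresponding empty contributions from the witness yields a codirect solution of $(\sigma,\sigma',L)$.

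The hard part, and the place where the construction must be tuned carefully, is ensuring that the padding is neither too rigid nor too permissive. A naïve attempt such as setting $\hat\sigma(a)=\sigma(a)\#$ and $\hat\sigma'(a)=\sigma'(a)\#$ forces the pointwise alignment $v_k=\sigma(w[k])$, collapsing the problem to the much more restrictive requirement $\sigma(w[k])\subseteq\sigma'(w[k])$ per letter; likewise, simply appending a trailing block of $c$'s fails to break intermediate balance points. The key is to let the $\#$'s appear flexibly inside the $\hat v_k$'s while fixing their total count via $\hat L$, so that the combinatorial freedom of codirect decompositions is preserved but any balanced prefix in the original forces a violation of strictness in the new instance. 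Since the transformation is clearly computable by an elementary function and elementary reductions preserve $\textup{F}_{\omega^\omega}$-hardness, the non-multiply-recursive lower bound for $\rpep$ transfers to the strict codirect variant, completing the proof.
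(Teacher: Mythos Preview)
Your plan takes a genuinely different route from the paper, but it leaves the decisive construction unspecified, and that is a real gap.

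The paper does not touch the morphisms $\sigma,\sigma'$ at all. Instead it fixes a \emph{minimal-length} codirect solution $w$ of $(\sigma,\sigma',L)$, witnessed by $v_1,\dots,v_m$, together with an accepting run $\rho$ of an NFA $\mathcal A=(Q,\Sigma,q_0,\delta,F)$ for $L$. The balance indices $0=k_1<\dots<k_t=m$ where $|\sigma(w[1..k])|=|v_1\dotsb v_k|$ are at most $|Q|$ in number: two balance points with the same state $\rho(k_i)=\rho(k_j)$ would let one splice out the middle and obtain a shorter codirect solution. Between consecutive balance points the inequality is strict by definition, so each segment $w[k_i{+}1..k_{i+1}]$ is itself a \emph{strict} codirect solution of the instance $(\sigma,\sigma',L[\rho(k_i),\rho(k_{i+1})])$, where $L[q,q']$ is the language of $\mathcal A$ with initial state $q$ and final state $q'$. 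The reduction is therefore: $(\sigma,\sigma',L)$ has a codirect solution iff there exist $q_1=q_0,\dots,q_t\in Q$ with $q_t\in F$ and $t\le|Q|$ such that each $(\sigma,\sigma',L[q_i,q_{i+1}])$ has a strict codirect solution. This is an exponential (not even polynomial) reduction, but that is harmless at the $\textup F_{\omega^\omega}$ level.

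Your padding idea, by contrast, tries to produce a single modified instance $(\hat\sigma,\hat\sigma',\hat L)$. You correctly diagnose why the two obvious choices fail, but you never exhibit one that works; the phrase ``the $\#$'s they produce on the $\hat\sigma$-side are consumed on the $\hat\sigma'$-side only with a one-step delay'' is precisely the point that needs a concrete definition and a proof. The obstruction is structural: if $\#$ is fresh in $\Gamma$ and $\hat\sigma(c)=\#$, then every such $\#$ must be matched by a $\#$ produced by $\hat\sigma'$, and wherever you place those $\#$'s in $\hat\sigma'$ they constrain the embedding of $\sigma(w)$ into $\sigma'(w)$. Putting $\#$ after each $\hat\sigma'(a)$ forces the first $\#$ of $\hat\sigma(\hat w)$ to skip all of $\sigma'(a_1)$, destroying solutions; putting $\#$'s only at the ends (leading $c$ with $\hat\sigma(c)=\#$, trailing $d$ with $\hat\sigma'(d)=\#$) makes $\hat\sigma(\hat w)\subseq\hat\sigma'(\hat w)$ fail outright; and a single leading $c$ with $\hat\sigma(c)=\#$, $\hat\sigma'(c)=\#\#$ creates slack only at position $1$ and leaves every interior balance point of the original intact. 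Until you write down an explicit $(\hat\sigma,\hat\sigma',\hat L)$ and verify both directions, the reduction is not established. The paper's decomposition sidesteps all of this by changing only the initial/final states of the automaton for $L$ and accepting a disjunction over polynomially many state sequences rather than a single new instance.
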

\begin{proof}
We only show how to reduce from finding a codirect solution problem to finding a strict codirect solution problem. The other direction is trivial, since a strict codirect solution is in particular a  solution.
  Let $(\sigma, \sigma', L)$ be a $\rpep$ instance, and $w \in L$  be a  codirect solution with $|w|=m$, minimal in size, and witnessed by $v_1, \dotsc, v_k$. Let $\+A=(Q,\Sigma,q_0,\delta,F)$ be an \NFA representing $L$, where $|Q|=n$. Let $\rho : [0..m] \to Q$ be an accepting run of $\+A$ on $w$. Let $0 \leq k_1 < \dotsb < k_{t} \leq m$ be all the elements of $\set{ s \geq 0 : |\sigma(w[1..s])| = |v_1 \dotsb v_{s}|}$. Observe that $k_1=0$, and $k_t = m$ by condition~\ref{item:codir:2}. It is not difficult to show that by minimality of $m$ there cannot be more than $n$ indices.
\begin{claim}\label{claim:bound-codir-sol}
   $t \leq n$.
\end{claim}
\begin{proof}
  Suppose ad absurdum that $t \geq n+1$. Then, there must be two $k_l < k_{l'}$ such that   $\rho(k_l)=\rho(k_{l'})$. Hence,  $w' = w[1..k_l] \cdot w[k_{l'}+1..]  \in L$ is also a codirect solution, contradicting  that $w$ is a minimal size solution.
\end{proof}

Let $L[q,q']$ be  the regular language denoted by the \NFA $(Q,\Sigma,q,\delta,\set{q'})$.

\begin{claim}
For every $i<t$,  $(\sigma,
    \sigma', L[\rho(k_i),\rho(k_{i+1})])$ has a strict codirect
    solution.
\end{claim}
\begin{proof}
We show that for every $i<t$, $w[k_i+1 .. k_{i+1}]$ is a solution for $(\sigma,  \sigma', L[\rho(k_i),\rho(k_{i+1})])$, witnessed by $v_{k_i+1}, \dotsc, v_{k_{i+1}}$. 

Clearly, condition \ref{item:codir:1} still holds. 
Further, since
\[
  |\sigma(w[1..k_i])| = |v_1 \dotsb v_{k_i}| 
\qquad\text{and}\qquad
  |\sigma(w[1..k_{i+1}])| = |v_1 \dotsb v_{k_{i+1}}|,
\]
we have that $|\sigma(w[k_i+1..k_{i+1}])| = |v_{k_i+1} \dotsb v_{k_{i+1}}|$ and then 
\[\sigma(w[k_i+1..k_{i+1}]) = v_{k_i+1} \dotsb v_{k_{i+1}},\] 
verifying condition \ref{item:codir:2}. 

Finally, by the fact that $k_i$ and $k_{i+1}$ are consecutive indices we cannot have some $k'$ with $k_i+1 < k' < k_{i+1}$ so that $|\sigma(w[k_i +1..k'])| = |v_{k_i+1} \dotsb v_{k'}|$ since it would imply  $|\sigma(w[1..k'])| = |v_1 \dotsb v_{k'}|$ and in this case $k' \geq k_{i+1}$. Then, conditions \ref{item:codir:3} and \ref{item:codir:4} hold.
\end{proof}

Therefore, we obtain the following reduction.
\begin{claim}
  $(\sigma,  \sigma', L)$ has a codirect solution if, and only if, there exist $\set{q_1, \dotsc, q_t} \subseteq Q$ with $q_1 = q_0$ and $q_t \in F$, such that for every $i$, $(\sigma,\sigma',L[q_i,q_{i+1}])$ has a strict codirect solution.
\end{claim}
This reduction being exponential is outweighed by the fact that we are dealing with a much harder problem.
\end{proof}

With the help of Lemma \ref{lemma:instrumental} 
we prove Theorem \ref{subseq-dec} in the next section.  

\subsubsection{Proof of Theorem \ref{subseq-dec}} 
Since decidability follows from Proposition~\ref{prop:subseq-rat-dec-nmr}, 
we only show the lower bound.
To this end, we show how to
code the existence of a strict codirect solution as an instance of
$\INTP \REG {\mbox{$\subseq$}}$.

\begin{prop}\label{prop:reducRPEP-inter-REG-subseq}
There is an elementary reduction from the existence of strict codirect solutions of $\rpep$ into $\INTP \REG {\mbox{$\subseq$}}$.
\end{prop}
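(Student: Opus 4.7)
The plan is to construct, from a given $\rpep$ instance $I = (\sigma, \sigma', L)$, an elementary-sized regular binary relation $R$ over an extended alphabet $\Gamma'$ such that $R \cap {\subseq}$ is nonempty if and only if $I$ admits a strict codirect solution; combined with Lemma~\ref{lemma:instrumental} this furnishes the non-multiply-recursive lower bound for $\INTP \REG {\subseq}$. The target encoding of a candidate solution $(w = a_1 \cdots a_m,\ v_1,\dotsc,v_m)$ is the pair
\[
  u_1 \;=\; v_1\,\#\,v_2\,\#\,\dotsb\,v_m\,\#,
  \qquad
  u_2 \;=\; \sigma'(a_1)\,\#\,\sigma'(a_2)\,\#\,\dotsb\,\sigma'(a_m)\,\#,
\]
over $\Gamma \cup \{\#\}$, where $\#$ is a fresh marker. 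Because $\#$ does not appear in $\Gamma$ and both words contain exactly $m$ copies of $\#$, any subsequence embedding $u_1 \subseq u_2$ is forced to send the $k$-th $\#$ of $u_1$ to the $k$-th $\#$ of $u_2$, so that $v_k \subseq \sigma'(a_k)$ for every $k$; the codirect embedding part of the condition thus comes for free from $\subseq$.

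What $R$ must additionally enforce is that $w \in L$, that the $\Gamma$-content of $u_1$ is exactly $\sigma(w)$, and that the strict inequality $|\sigma(w[1..k])| > |v_1 \cdots v_k|$ holds for all $k < m$. The main technical obstacle is to make $R$ \emph{regular}: a naive NFA reading the convolution $u_1 \otimes u_2$ symbol-by-symbol would need to remember an unbounded queue of letters of $w$, because the $\sigma$-letter currently emitted on tape~1 and the $\sigma'$-letter currently emitted on tape~2 may correspond to arbitrarily distant positions in $w$. I would address this by enriching $\Gamma'$ with explicit $\Sigma$-markers placed on both tapes at the start of each block, so that both tapes carry an independent copy of $w$. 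These $\Sigma$-markers, being disjoint from $\Gamma \cup \{\#\}$, are forced by the subsequence embedding to be matched one-to-one in order, and this forced matching enforces agreement of the two copies of $w$ \emph{via $\subseq$ itself}, sparing the NFA from buffering letters. Combined with appropriate intra-block padding, this allows the NFA to process $w$ one letter at a time in fixed-size super-blocks of length depending only on the instance (namely $|\sigma(a)| + |\sigma'(a)| + O(1)$), so that at every synchronized position its state is bounded by a tuple $(\+A\text{-state},\ a \in \Sigma,\ \text{intra-block position},\ \text{strictness flag})$.

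Correctness is then routine in each direction. In the forward direction, the canonical encoding of any strict codirect solution produces a pair in $R$ with $u_1 \subseq u_2$. In the converse direction, given $(u_1, u_2) \in R \cap {\subseq}$, the forced one-to-one alignment of $\Sigma$- and $\#$-markers allows one to read off $w$ and the $v$-decomposition, and the local constraints baked into $R$ (including the strictness flag) guarantee that the resulting tuple is a strict codirect solution. Because the NFA for $R$ has size polynomial in $|\+A| + |\Sigma| \cdot \max_{a \in \Sigma}(|\sigma(a)| + |\sigma'(a)|)$, the reduction runs in polynomial time and is \emph{a fortiori} elementary, as required.
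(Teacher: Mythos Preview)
Your encoding has a real gap in the step where $R$ is supposed to check that ``the $\Gamma$-content of $u_1$ is exactly $\sigma(w)$''. In your scheme the $\Gamma$-content of block $k$ on tape~1 is $v_k$, not $\sigma(a_k)$. Hence verifying $v_1\cdots v_m=\sigma(a_1)\cdots\sigma(a_m)$ means matching two factorisations of the same string whose block boundaries are shifted against one another by the drift $d_k=|\sigma(w[1..k])|-|v_1\cdots v_k|$. An NFA reading tape~1 block by block must carry, at the boundary after block $k$, the residual suffix of $\sigma(a_1)\cdots\sigma(a_k)$ that has not yet been consumed by $v_1\cdots v_k$; this residual has length $d_k$, which is unbounded. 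Your proposed fix---placing a $\Sigma$-marker $a_k$ at the start of each block on both tapes---tells the automaton which letter $a_k$ labels the \emph{current} block, but the $\Gamma$-letters it is currently reading on tape~1 belong to some $\sigma(a_j)$ with $j$ possibly much smaller than $k$, and knowing $a_k$ does not help recover $a_j$. So the state tuple you list, $(\A\text{-state},a,\text{intra-block position},\text{flag})$, is missing exactly this unbounded buffer. For the same reason a single ``strictness flag'' cannot track whether $d_k>0$: updating the flag when moving from block $k$ to $k{+}1$ requires the actual value of $d_k$, not just its sign. If instead you drop the check and let the $\Gamma$-content of $u_1$ be arbitrary, the reduction becomes unsound (take all $v_k=\e$).

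The paper avoids this by putting $\sigma(a_k)$, not $v_k$, into block $k$ on tape~1, so that the $\Gamma$-content of $u$ is $\sigma(w)$ \emph{by construction} and no cross-block buffering is needed for $R$. The symbol $\#$ is used as free padding inside blocks (not as a separator), with each block having the same length on both tapes; a trailing $\#^*$ is appended to tape~2. The subsequence test then directly yields $\sigma(w)\subseq\sigma'(w)$. The price is that the forward direction---building a concrete pair $(u,u')\in R$ with $u\subseq u'$ from a strict codirect solution---becomes nontrivial: one must place the $\#$'s so that every $\#$ in $u$ finds a later $\#$ in $u'$, and this is where the strict-codirect hypothesis is actually used, via a recursive definition of the blocks $u_i,u'_i$ driven by the function $g(i)=\min\{j:|v_1\cdots v_j|=|\sigma(w[1..i])|\}$. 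Your encoding pushes all the difficulty into making $R$ regular, where it cannot be discharged; the paper pushes it into the combinatorics of the witness pair, where it can.
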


Given a $\rpep$ instance $(\sigma,\sigma',L)$, remember that the presence of a strict codirect solution  enforces that if there is a pair $(u,v)=(\sigma(w),\sigma'(w))$ with $w \in L$ and $u \subseq v$, it is such that for every proper prefix $u'$ of $u$ the smallest prefix $v'$ of $v$ such that $u' \subseq v'$ must be so that $|v'| > |u'|$. 
In the proof, we convert the \emph{rational} relation $R = \set{(\sigma(w),\sigma'(w)) \mid w \in L}$ into a length-preserving \emph{regular} relation $R'$ over an extended alphabet $\Gamma\cup \set \#$, defined as the set of all pairs $(u,v) \in (\Gamma\cup \set \#)^* \times (\Gamma\cup \set \#)^*$ so that $|u|=|v|$ and $(u_\Gamma, v_\Gamma) \in R$. If we now let $R''$ to be the regular relation
$R' \cdot \set{(\e,v) \mid v \in \set{\#}^*}$, we obtain that:
\begin{enumerate}[(i)]
\item if $w \in {R'' \cap {\subseq}}$ then $w' \in {R \cap {\subseq}}$,
where $w'$ is the projection of $w$ onto $\Gamma^* \times \Gamma^*$;
and 
\item if there is some strict codirect solution $w' \in R \cap {\subseq}$, then there is some $w \in R'' \cap {\subseq}$ such that $w'$ is the
projection of $w$ onto $\Gamma^* \times \Gamma^*$. 
\end{enumerate}
Whereas (i) is trivial, (ii) follows from the fact that $w'$ is a
strict codirect solution. If $w' = (u,v) \in R''$, where $f(w)
=(u)_\Gamma$, $g(w)= (v)_\Gamma$, the complication is now that, since
$u \in \Gamma\cup \set \#$, it could be that $u \not\subseq v$ just because
there is some $\#$ in $u$ that does not appear in $v$. But we show how to build $(u,v)$ such that whenever $u[i] = \#$ forces $v[j] = \#$ with $j
> i$ then we also have that $u[j] = \#$. This repeats, forcing $v[k]
= \#$ for some $k>j$ and so on, until we reach the tail of $v$ that
has sufficiently many $\#$'s to satisfy all the accumulated demands
for occurrences of $\#$.

\begin{proof}[Proof of Proposition~\ref{prop:reducRPEP-inter-REG-subseq}]
\newcommand{\rpepc}{\rpep_\textit{\!codir}}
Let $(\sigma,\sigma',L)$ be a $\rpep$ instance. For every $a \in \Sigma$, consider the binary relation $R_a$ consisting of all pairs $(u,u') \in (\Gamma\cup\set\#)^*\times (\Gamma\cup\set\#)^*$ such that $u_\Gamma = \sigma(a)$, $u'_\Gamma = \sigma'(a)$ and $|u|=|u'|$. Note that $R_a$ is a length-preserving regular relation. Let $R'$ be the set of pairs $(u_1 \dotsb u_m, u'_1 \dotsb u'_m)$ such that there exists $w \in L$ where $|w|=m$ and $(u_i,u'_i) \in  R_{w[i]}$ for all $i$. Note that $R'$ is still a length-preserving regular relation. Finally, we define $R$ as the set of pairs $(u,u' \cdot u'')$ such that $(u,u') \in R'$ and $u'' \in \set{\#}^*$. $R$ is no longer a length-preserving relation, but it is regular. Observe that if $R \cap {\subseq} \neq \emptyset$, then $(\sigma,\sigma',L)$ has a solution. Conversely, we show that if $(\sigma,\sigma',L)$ has a strict codirect solution, then $R \cap {\subseq} \neq \emptyset$.

Suppose that the $\rpep$ instance $(\sigma, \sigma', L)$ has a strict codirect solution $w \in L$ with $|w|=m$, witnessed by $v_1, \dotsc, v_m$. Assume, without any loss of generality, that $\sigma$ and $\sigma'$ are alphabetic morphisms and that $m>1$. We exhibit a pair $(u,u') \in R$ such that $u \subseq u'$.
We define $(u,u') = (u_1 \dotsb u_m, u'_1 \dotsb u'_m \cdot u'_{m+1})$, where $(u_i,u'_i) \in R_{w[i]}$ for every $i \leq m$, and $u'_{m+1} \in \set{\#}^*$. In order to give the precise definition of $(u,u')$, we need to introduce some concepts first. 

Let $\sigma_\#(a) \in \Gamma\cup\set\#$ be $\#$ if $\sigma(a) = \epsilon$, or $\sigma(a)$ otherwise; likewise for $\sigma'_\#$. 
By definition of strict codirect solution, we have the following.
\begin{claim}\label{claim:fst-elem-B}
  $\sigma(w[1]) \in \Gamma$.
\end{claim}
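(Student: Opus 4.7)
The plan is to derive the claim directly from the definition of a strict codirect solution, together with the standing assumption that $m > 1$ and $\sigma$ is alphabetic.

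First I would recall that since $\sigma$ is alphabetic, $\sigma(w[1])$ lies in $\Gamma \cup \{\varepsilon\}$, so the claim reduces to showing $\sigma(w[1]) \neq \varepsilon$, i.e.\ $|\sigma(w[1])| \geq 1$. I would then apply condition~\ref{item:codir:4} of the definition of a strict codirect solution with $k=1$: because $m>1$, the index $k=1$ satisfies $1 \leq k < m$, so we have
\[
|\sigma(w[1..1])| \;>\; |v_1|.
\]
Since $|v_1| \geq 0$, this forces $|\sigma(w[1])| \geq 1$, and hence $\sigma(w[1]) \in \Gamma$.

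The argument is essentially a one-line consequence of strict codirectness, so there is no real obstacle; the only point worth flagging is that the strict inequality in condition~\ref{item:codir:4} is exactly what rules out the degenerate case $\sigma(w[1]) = \varepsilon$, which an ordinary (non-strict) codirect solution would allow. The hypothesis $m>1$ assumed just before the claim is essential here, as it ensures $k=1$ is a legal index in condition~\ref{item:codir:4}.
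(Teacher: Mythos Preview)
Your proof is correct and follows essentially the same approach as the paper: both invoke condition~\ref{item:codir:4} at $k=1$ to rule out $\sigma(w[1])=\varepsilon$, using that $\sigma$ is alphabetic. Your version is slightly more explicit about the role of the assumption $m>1$, but the argument is the same.
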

\begin{proof}
Indeed, if $\sigma(w[1]) \neq \Gamma$, then $\sigma(w[1])=\e$ and $|\sigma(w[1])|=0$, and then condition \ref{item:codir:4} of strict codirectness stating that $|\sigma(w[1])| > |v_1|$, would be falsified.
\end{proof}

Let us define the function $g: [m] \to [m]$ so that $g(i)$ is the minimum $j$ such that $v_1 \dotsb v_j = \sigma(w[1..i])$.  Note that there is always such a $j$, since $|\sigma(w[1..i])| > 0$ by Claim~\ref{claim:fst-elem-B}. Now we show some easy properties of $g$, necessary to correctly define the witnessing pair $(u,u') \in R$ such that $u \subseq u'$.
\begin{claim}\label{claim:g}
  $g(i) > i$ for all $1 \leq i<m$, and $g(m)=m$.
\end{claim}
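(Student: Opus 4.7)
My plan is to prove the two assertions in turn, reading off everything from the length bounds built into the definition of a strict codirect solution. Throughout, write $\ell_i = |\sigma(w[1..i])|$ and $m_j = |v_1 \cdots v_j|$ (with $m_0 = 0$). The key facts I will use repeatedly are: (a) $\sigma$ is a morphism, so $\ell_i \leq \ell_{i'}$ whenever $i \leq i'$; (b) condition~\ref{item:codir:2} gives $\ell_m = m_m$; (c) condition~\ref{item:codir:3} gives $\ell_k \geq m_k$; and (d) condition~\ref{item:codir:4} of strict codirectness gives $\ell_k > m_k$ for every $k < m$.

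For the equality $g(m) = m$: condition~\ref{item:codir:2} immediately witnesses $j = m$, so $g(m) \leq m$. For the reverse, suppose for contradiction that $g(m) = j < m$. Then $v_1 \cdots v_j = \sigma(w[1..m]) = v_1 \cdots v_m$, so $v_{j+1} = \cdots = v_m = \e$ and hence $m_j = m_m = \ell_m$. Since $j < m$, strict codirectness gives $\ell_j > m_j = \ell_m$, contradicting $\ell_j \leq \ell_m$.

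For the strict inequality $g(i) > i$ when $1 \leq i < m$: let $j = g(i)$, so by definition $v_1 \cdots v_j = \sigma(w[1..i])$ and in particular $m_j = \ell_i$. There are two cases to rule out. If $j = i$, then $m_i = \ell_i$, directly contradicting condition~\ref{item:codir:4} (which applies since $i < m$). If $j < i$, then $j < m$ too, so condition~\ref{item:codir:4} gives $\ell_j > m_j = \ell_i$; but $j < i$ forces $\ell_j \leq \ell_i$ by monotonicity of $\sigma$, again a contradiction. Hence $g(i) > i$.

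I do not anticipate any real obstacle: the claim is essentially an unpacking of the strict‐codirect length inequalities together with the minimality in the definition of $g$. The only subtle point is remembering that condition~\ref{item:codir:4} is available precisely when the index is strictly less than $m$, which is why the two halves of the claim must be handled separately.
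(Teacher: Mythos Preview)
Your proof is correct and follows essentially the same approach as the paper: both arguments extract the claim from the length inequalities of strict codirectness (condition~\ref{item:codir:4}) together with the definition of $g$. The only cosmetic difference is that the paper first establishes $j\geq i$ in one stroke via condition~\ref{item:codir:3} and then sharpens to $j>i$ for $i<m$, whereas you rule out $j=i$ and $j<i$ separately (applying condition~\ref{item:codir:4} at index $j$ rather than $i$ in the latter case); your treatment of $g(m)=m$ is also slightly more explicit than the paper's appeal to the codomain.
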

\begin{proof}
Let $g(i)=j$ and hence $|\sigma(w[1..i])| = |v_1 \dotsb v_j|$. First, notice that $|v_1 \dotsb v_j| = |\sigma(w[1..i])| \geq |v_1 \dotsb v_i|$ by condition \ref{item:codir:3} of codirectness, and then that $j \geq i$.  If $i < m$,  $|v_1 \dotsb v_i| < |\sigma(w[1..i])|$ by condition \ref{item:codir:4}, and thus $|v_1 \dotsb v_i| < |v_1 \dotsb v_j|$ which implies $i < j$. If $i = m$, then $j=i$ by the fact that $j \geq i=m$.
\end{proof}
\begin{claim}\label{claim:g-monotone}
$g$ is increasing:  $g(i) \geq g(j)$ if $i \geq j$.
\end{claim}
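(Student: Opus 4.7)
The plan is to unwind the definition of $g$ and use the minimality clause to rule out the wrong ordering. Assume $i \geq j$. Then $\sigma(w[1..j])$ is a prefix of $\sigma(w[1..i])$; by definition of $g$ these two words equal $v_1 \dotsb v_{g(j)}$ and $v_1 \dotsb v_{g(i)}$ respectively. So $v_1 \dotsb v_{g(j)}$ is a prefix of $v_1 \dotsb v_{g(i)}$.

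Now I argue by contradiction: suppose $g(i) < g(j)$. Then $v_1 \dotsb v_{g(i)}$ is itself an initial segment of $v_1 \dotsb v_{g(j)}$ in the sequence of partial concatenations, so it is also a prefix of $v_1 \dotsb v_{g(j)} = \sigma(w[1..j])$. Combined with the previous observation, we get $v_1 \dotsb v_{g(i)} = v_1 \dotsb v_{g(j)} = \sigma(w[1..j])$. But then $g(i)$ is an index witnessing the condition ``$v_1 \dotsb v_k = \sigma(w[1..j])$'' that defines $g(j)$, which by the minimality of $g(j)$ forces $g(j) \leq g(i)$, contradicting $g(i) < g(j)$. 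Hence $g(i) \geq g(j)$.

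I do not anticipate any real obstacle: the result is a direct prefix-chasing argument, and the only subtlety is that the $v_k$'s may be empty (so $v_1 \dotsb v_k$ and $v_1 \dotsb v_{k+1}$ might coincide), which is exactly why one must reason in terms of equality of words and invoke minimality of $g(j)$ rather than comparing lengths directly.
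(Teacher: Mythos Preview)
Your proof is correct and follows essentially the same route as the paper's: the paper compares lengths, deriving $|v_1 \dotsb v_{g(i)}| \geq |v_1 \dotsb v_{g(j)}|$ from $i \geq j$ via the definition of $g$, and then concludes $g(i) \geq g(j)$. Your prefix-chasing is an equivalent formulation, and your explicit appeal to the minimality of $g(j)$ is precisely what justifies the paper's final implication (which is needed because the $v_k$'s may be empty).
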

\begin{proof}
Given $m \geq i \geq j \geq 1$, we have that 
\begin{align*}
 |v_1 \dotsb v_{g(i)}| & =|\sigma(w[1..i])| \tag{\text{by definition of $g$}}\\
                              & \geq |\sigma(w[1..j])| \tag{\text{since $i\geq j$}}\\
                              & =|v_1 \dotsb v_{g(j)}| \tag{\text{by definition of $g$}}
\end{align*}
which implies that $g(i) \geq g(j)$.
\end{proof}

\begin{observation} \label{rem:1}
For all $i \leq m$, if $\sigma(w[i]) \in \Gamma$ then $\sigma(w[i]) =
  \sigma'(w[g(i)])$.
\end{observation}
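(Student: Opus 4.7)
The plan is to unpack the definition of $g(i)$ and use the alphabetic morphism assumption to pin down $v_{g(i)}$ exactly.

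First I would set $j = g(i)$ and record the two defining properties of $j$: by the definition of $g$, $v_1 \cdots v_j = \sigma(w[1..i])$, and by minimality of $j$, no shorter prefix achieves this equality. The next step is to rule out the degenerate case $v_j = \varepsilon$: if $v_j$ were empty, then $v_1 \cdots v_{j-1} = v_1 \cdots v_j = \sigma(w[1..i])$ would contradict the minimality of $j$ (provided $j \geq 1$, which holds as $\sigma(w[i]) \in \Gamma$ forces $|\sigma(w[1..i])| \geq 1$ and hence $j \geq 1$).

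Second, I would use the fact that $\sigma$ and $\sigma'$ can be assumed alphabetic (as stipulated in the proof of Proposition \ref{prop:reducRPEP-inter-REG-subseq}). Since $v_j \subseteq \sigma'(w[j])$ by condition \ref{item:codir:1} of codirectness, and $\sigma'(w[j]) \in \Gamma \cup \{\varepsilon\}$, the word $v_j$ has length at most one. Combined with $v_j \neq \varepsilon$, we conclude $v_j = \sigma'(w[j])$ and that this is a single letter of $\Gamma$.

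Finally, I would read off the last letter of $\sigma(w[1..i]) = v_1 \cdots v_j$ in two ways. On one hand, since $\sigma$ is alphabetic and $\sigma(w[i]) \in \Gamma$, the last letter of $\sigma(w[1..i]) = \sigma(w[1..i-1]) \cdot \sigma(w[i])$ is $\sigma(w[i])$ itself. On the other hand, the last letter of the concatenation $v_1 \cdots v_j$ coincides with $v_j$, because $v_j$ is a single letter. Equating these yields $\sigma(w[i]) = v_j = \sigma'(w[g(i)])$, as required. The only subtle point is the reliance on the minimality of $g(i)$ to guarantee $v_{g(i)} \neq \varepsilon$; everything else is bookkeeping about alphabetic morphisms.
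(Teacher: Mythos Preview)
Your proof is correct. The paper states this as an Observation without giving any argument, so there is no proof in the paper to compare against; your expansion via the minimality of $g(i)$ (forcing $v_{g(i)} \neq \varepsilon$), the alphabetic bound $|v_{g(i)}| \leq 1$, and reading off the last letter of $\sigma(w[1..i]) = v_1 \cdots v_{g(i)}$ is exactly the intended justification. One cosmetic remark: your parenthetical ``provided $j \geq 1$'' is unnecessary since $g$ has codomain $[m]$, so $j \geq 1$ automatically; the only edge case is $j = 1$, and there $v_1 = \sigma(w[1..i])$ is nonempty directly, so minimality is not even needed.
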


\newcommand{\Gpair}{G}
The most important pairs of positions $(i,j) \in [m] \times [m]$ that witness $u \subseq u'$, are those so that $j = g(i)$ and $\sigma(w[i]) \neq \e$. Once those are fixed, the remaining elements in the definition of $g$ are also fixed. Let us call $\Gpair$ to this set, and let us state some simple facts for later use.
\[
\Gpair = \set{ (i,g(i)) \in [m] \times [m] \mid \sigma(w[i])\in \Gamma }
\] 
\begin{observation} \label{rem:g-injective}
  For every $(i,j), (i',j') \in \Gpair$, if $i\neq i'$ then $j\neq j'$. In other words, $g$ restricted to $\set{i \mid \sigma(w[i]) \in \Gamma}$ is injective.
\end{observation}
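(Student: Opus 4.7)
The plan is to show this via a short direct argument using the alphabetic assumption on $\sigma$ and the length characterization of $g$. First I will assume without loss of generality that $i < i'$ and aim to establish the strict inequality $g(i) < g(i')$, which yields $g(i) \neq g(i')$ as required.

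The key observation is that since the proof of Proposition~\ref{prop:reducRPEP-inter-REG-subseq} assumes $\sigma$ to be alphabetic, for any index $k$ we have $|\sigma(w[k])| \in \{0,1\}$, with $|\sigma(w[k])|=1$ exactly when $\sigma(w[k]) \in \Gamma$. Since $(i',g(i')) \in \Gpair$ we have $\sigma(w[i']) \in \Gamma$, so $|\sigma(w[i'])|=1$. Combined with $i < i'$, this gives
\[
|\sigma(w[1..i'])| \;\geq\; |\sigma(w[1..i])| + |\sigma(w[i'])| \;>\; |\sigma(w[1..i])|.
\]

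Next I will unfold the defining property of $g$: by its minimality clause, $|v_1 \cdots v_{g(i)}| = |\sigma(w[1..i])|$ and $|v_1 \cdots v_{g(i')}| = |\sigma(w[1..i'])|$. The strict inequality above therefore transfers to
\[
|v_1 \cdots v_{g(i')}| \;>\; |v_1 \cdots v_{g(i)}|,
\]
which forces $g(i') > g(i)$, and in particular $g(i') \neq g(i)$. Since the initial assumption $i<i'$ was made without loss of generality, this establishes the injectivity of $g$ on $\{i \mid \sigma(w[i]) \in \Gamma\}$, proving the observation.

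I do not foresee any significant obstacle here; the only subtle point to be careful about is invoking the alphabetic assumption on $\sigma$ (so that $\sigma(w[i'])\in\Gamma$ contributes exactly one symbol to $\sigma(w[1..i'])$), which is what makes the length jump strict and hence the $g$-values distinct.
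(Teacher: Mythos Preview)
Your argument is correct. The paper actually states this as an Observation without proof, treating it as immediate; your length-counting argument via the alphabetic assumption on $\sigma$ is exactly the natural justification and matches the intended reasoning.
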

\begin{claim}\label{claim:g-two-elem}
  Given $i,j$ with $(i,j) \in \Gpair$ and $i<m$, then $|\sigma(w[i..j])|\geq 2$.
\end{claim}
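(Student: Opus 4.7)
The plan is to argue by contradiction. Assume $|\sigma(w[i..j])| < 2$. Since $(i,j) \in \Gpair$ entails $\sigma(w[i]) \in \Gamma$, and hence $|\sigma(w[i])| = 1$, the assumption forces $|\sigma(w[i..j])| = 1$, which in turn means $\sigma(w[k]) = \e$ for every $k \in [i+1..j]$. Consequently, $\sigma(w[1..j]) = \sigma(w[1..i])$.

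From the definition of $g$ together with $j = g(i)$, one has $v_1 \dotsb v_j = \sigma(w[1..i])$, and combining with the previous equality yields $v_1 \dotsb v_j = \sigma(w[1..j])$. The minimality clause in the definition of $g(j)$ then forces $g(j) \leq j$. Since $i < m$, Claim~\ref{claim:g} already gives $j = g(i) > i$, so $j \geq i+1 \geq 2$. Pairing this with a second application of Claim~\ref{claim:g} delivers the contradiction: provided $j < m$, Claim~\ref{claim:g} ensures $g(j) > j$, directly clashing with $g(j) \leq j$.

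The main obstacle is therefore the boundary situation $j = m$, which Claim~\ref{claim:g} does not preclude when $i < m$. I expect to rule it out by combining the injectivity of $g$ on $\Gpair$ (Observation~\ref{rem:g-injective}) with Observation~\ref{rem:1}: if $\sigma(w[m]) \in \Gamma$ then $(m,m) \in \Gpair$, and injectivity applied to the two preimages $i$ and $m$ of $m$ under $g$ forces $i = m$, contradicting $i < m$; otherwise $\sigma(w[m]) = \e$, and one derives from Observation~\ref{rem:1} that $v_m = \sigma(w[i])$. An accounting of lengths via conditions~\ref{item:codir:2} and~\ref{item:codir:4} of strict codirectness at $k = m-1$, combined with $\sigma(w[i+1..m]) = \e$, then yields the final contradiction by showing the witnesses $v_1,\dotsc,v_m$ cannot simultaneously satisfy the strictness inequality below $m$ and the equality at $m$.
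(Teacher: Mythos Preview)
Your contradiction argument disposes of the case $j<m$ correctly: from $v_1\dotsb v_j=\sigma(w[1..i])=\sigma(w[1..j])$ and condition~\ref{item:codir:4} at $k=j$ you obtain $|\sigma(w[1..j])|>|v_1\dotsb v_j|$, an immediate contradiction. The sub-case $j=m$ with $\sigma(w[m])\in\Gamma$ is also correctly ruled out via Observation~\ref{rem:g-injective}.

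The remaining sub-case, $j=m$ with $\sigma(w[m])=\e$, cannot be closed---the claim as stated is in fact false there. Take $m=2$, $\sigma(w[1])=a$, $\sigma(w[2])=\e$, $\sigma'(w[1])=\e$, $\sigma'(w[2])=a$, witnessed by $v_1=\e$, $v_2=a$: all four strict-codirectness conditions hold, $g(1)=2=m$, $G=\{(1,2)\}$, yet $|\sigma(w[1..2])|=1$. Your proposed length accounting at $k=m-1$ yields only the vacuous inequality $|\sigma(w[1..i])|>|\sigma(w[1..i])|-1$. (The paper's own one-line proof is likewise imprecise: minimality of $g(i)$ gives $v_{g(i)}\neq\e$ and hence $\sigma'(w[g(i)])\in\Gamma$, but says nothing about $\sigma(w[g(i)])$; the same $m=2$ example has $\sigma(w[g(1)])=\e$.) The right repair is to strengthen the hypothesis to $j<m$; under that assumption your argument is already complete, and this is the only setting in which the claim is actually invoked downstream, since $u'_m$ is defined directly rather than as some $\tilde u_i$.
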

\begin{proof}
This is because $i < j$ by Claim~\ref{claim:g}, $\sigma(w[i]) \in \Gamma$ by definition of $\Gpair$, and $\sigma(w[j])=\sigma(w[g(i)]) \in \Gamma$ by definition of $g$.
\end{proof}

\smallskip

Since our coding uses the letter $\#$ as some sort of blank symbol, it will be useful to define the factors $\tilde u_1, \tilde u_2, \dotsc$ of $u$ that contain exactly one letter from $\Gamma$. We then define $\tilde u_i$ as the maximal prefix of $u_i \dotsb u_m$ belonging to the following regular expression: $\Gamma \cdot \set\#^*$.

We are now in good shape to define precisely $u_j, u'_j$ for every $j \in [m]$.
For every  $j < m$,
\begin{iteMize}{$\bullet$}
\item if $(i,j) \in \Gpair$ for some $i$, then 
\[u'_j = \tilde u_i \quad \text{and} \quad u_j = \sigma_\#(w[j]) \cdot u'_j[2..]; \text{ and} \]
\item if there is no $i$ so that $(i,j) \in G$, then
  \[(u_j,u'_j) = (\sigma_\#(w[j]), \sigma'_\#(w[j]) ).\] 
\end{iteMize}
And on the other hand, 
  $(u_m,u'_m) = (\sigma_\#(w[m]), \sigma'_\#(w[m]))$ and $u'_{m+1} = \#^{|u_1 \dotsb u_m|}$. Figure~\ref{fig:example-reg-npr} contains an example with all the previous definitions. Notice that the definition of $u_j$ makes use of $\tilde u _j$ and the definition of $\tilde u_j$ seems to make use of $u_j$. We next show that in fact $\tilde u_j$ does not depend on $u_j$, and that the  strings above are well defined.
  \begin{figure}
    \centering
 \includegraphics[width=\textwidth]{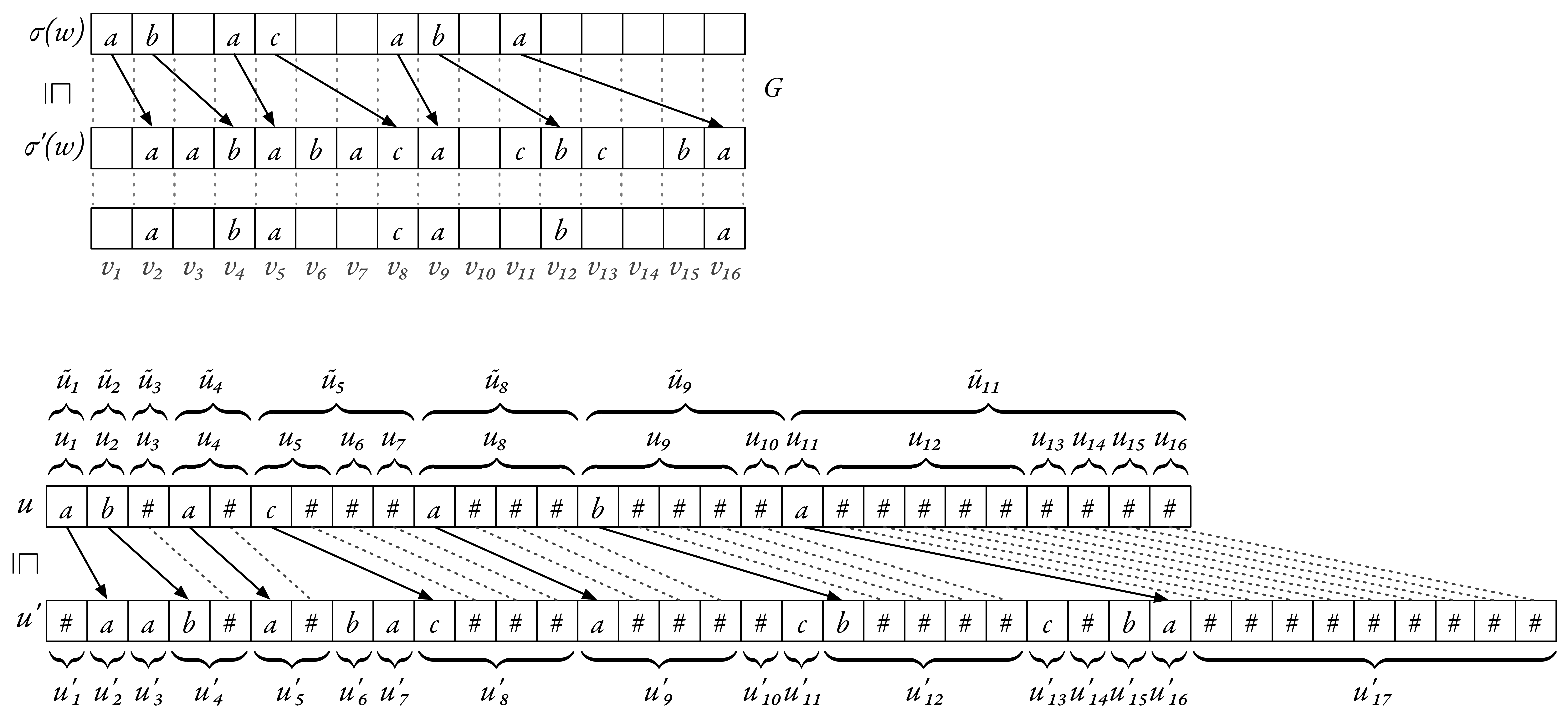}
    \caption{\normalsize Exemplary reduction from $\rpep$ to $\interemptypb{\Reg}{{\subseq}}$, for the case $\sigma(w) = abacaba$, $\sigma'(w) = aababacacbcba$.
}
    \label{fig:example-reg-npr}
  \end{figure}

\begin{observation}\label{rem:tilde-prefix}
For $i<m$, $\tilde u_i$ is a prefix of $u_i \dotsb u_{g(i)-1}$.
\end{observation}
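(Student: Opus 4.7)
The plan is to first determine the shape of each $u_j$, and then locate, within $u_iu_{i+1}\cdots u_m$, the first $\Gamma$-letter appearing after the opening letter of $u_i$. If $\sigma(w[i])=\e$, then $u_i$ starts with $\#$ and $\tilde u_i=\e$, which is trivially a prefix; so assume $\sigma(w[i])\in\Gamma$, i.e., $(i,g(i))\in\Gpair$.

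By an induction on $i$ that unravels the mutual recursion between the definitions of $u_j$ and $\tilde u_{i''}$, every $u_j$ lies in $(\Gamma\cup\set\#)\cdot\set\#^*$, and its unique non-$\#$ letter, when present, sits at position~$1$ and equals $\sigma_\#(w[j])$. Indeed, when no $i''$ has $(i'',j)\in\Gpair$, $u_j=\sigma_\#(w[j])$ is a single letter; otherwise $u_j=\sigma_\#(w[j])\cdot u'_j[2..]$ with $u'_j=\tilde u_{i''}\in\Gamma\cdot\set\#^*$, so $u'_j[2..]$ is a (possibly empty) block of $\#$'s. The induction is well-founded because for $j<g(i)$ the equality $g(i'')=j$ together with Claims~\ref{claim:g} and~\ref{claim:g-monotone} forces $i''<i$: if $i''\geq i$, monotonicity would give $g(i'')\geq g(i)>j$. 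In particular, $u_i$ begins with the $\Gamma$-letter $\sigma(w[i])$ followed by only $\#$'s, so $\tilde u_i$ extends from position~$1$ of $u_i$ through $\#$'s until the next $\Gamma$-letter of $u_iu_{i+1}\cdots u_m$.

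To locate that next $\Gamma$-letter I invoke strict condition~\ref{item:codir:4} at $k=g(i)$ (which falls in the regime $g(i)<m$ where $\tilde u_i$ is later invoked in the construction):
\[
|\sigma(w[1..g(i)])|\;>\;|v_1\cdots v_{g(i)}|\;=\;|\sigma(w[1..i])|,
\]
so $\sigma(w[j])\in\Gamma$ for some $j\in(i,g(i)]$. Taking $j^*$ to be the smallest such $j$, the shape analysis yields $u_{j'}\in\set\#^+$ for every $i<j'<j^*$ and $u_{j^*}$ starting with the $\Gamma$-letter $\sigma(w[j^*])$. Hence $\tilde u_i=u_iu_{i+1}\cdots u_{j^*-1}$, and since $j^*\leq g(i)$ this is indeed a prefix of $u_i\cdots u_{g(i)-1}$, as required. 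The main subtlety is threading the mutual recursion between $u_j$ and $\tilde u_{i''}$, for which the monotonicity of $g$ is essential in collapsing the induction onto strictly smaller indices.
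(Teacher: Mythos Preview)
Your proof is correct and follows the same high-level plan as the paper---locate a second $\Gamma$-letter inside $u_i\cdots u_{g(i)}$ so that $\tilde u_i$ must stop before it---but your justification differs and is in fact more careful. The paper argues via Claim~\ref{claim:g-two-elem}, whose proof asserts that $\sigma(w[g(i)])\in\Gamma$; that specific letter need not lie in $\Gamma$ (e.g.\ $m=6$, $\sigma(w)=a\,a\,\e\,a\,\e\,\e$ with $v=\e,\e,a,\e,a,a$ gives $g(1)=3$ but $\sigma(w[3])=\e$). By invoking condition~\ref{item:codir:4} at $k=g(i)$ you obtain a $\Gamma$-letter at some $j^*\in(i,g(i)]$ without pinning it to $g(i)$, which is the right argument. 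You also correctly flag the restriction $g(i)<m$: the observation is only ever used in that regime, and indeed the literal ``$i<m$'' statement fails when $g(i)=m$ (take $m=2$, $\sigma(w)=a\,\e$, $v_1=\e$, $v_2=a$; then $\tilde u_1=a\#$ is not a prefix of $u_1=a$). Finally, your inductive unravelling of the $u_j$/$\tilde u_{i''}$ recursion anticipates Observation~\ref{rem:form-ui}, which the paper states only afterwards; the well-foundedness via $i''<i$ is exactly what is needed. One small point worth making explicit: when $j^*=g(i)$ you read $u_{j^*}[1]$ before $u_{j^*}$ is fully built, but this is harmless since both defining clauses force $u_j[1]=\sigma_\#(w[j])$.
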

\begin{proof}
By Claim~\ref{claim:g} and Claim~\ref{claim:g-two-elem},  $\sigma(w[i..g(i)])$ contains at least two elements and hence $u_i \dotsb u_{g(i)}$ contains at least two elements from $\Gamma$, namely $u_i[1]$ and $u_{g(i)}[1]$. Then, $\tilde u_i$ cannot contain $u_i \dotsb u_{g(i)-1} \cdot (u_{g(i)}[1])$ as a prefix.
\end{proof}
By the above Observation~\ref{rem:tilde-prefix}, to compute $\tilde u_i$ we only need $u_j$'s and $u'_j$'s with $j<i$, and hence $(u,u')$ is well defined.

\begin{observation}\label{rem:form-ui}
  All the $u_i$'s, $u'_i$'s and $\tilde u_i$'s are of the form $a
  \cdot \# \dotsb \#$ or $\# \dotsb \#$, for $a \in \Gamma$.
\end{observation}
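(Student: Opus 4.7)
The plan is to prove the observation by induction on the index $j$, establishing the required shape simultaneously for $u_j$, $u'_j$, and (wherever it appears in the inductive unfolding) $\tilde u_j$. The first step is to dispose of the direct cases: whenever $u_j$ or $u'_j$ is defined as $\sigma_\#(w[j])$ or $\sigma'_\#(w[j])$, the fact that $\sigma,\sigma'$ are alphabetic makes these single letters of $\Gamma\cup\set\#$, and hence vacuously of shape $a\cdot\#^*$ or $\#^*$. The special cases $(u_m,u'_m)=(\sigma_\#(w[m]),\sigma'_\#(w[m]))$ and $u'_{m+1}=\#^{|u_1\dotsb u_m|}$ fall in this bucket, as does any $u_j,u'_j$ with $j<m$ for which no $i$ satisfies $(i,j)\in G$.

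The one remaining inductive case is $j<m$ with $(i,j)\in G$ for some $i$. By Claim~\ref{claim:g} such an $i$ necessarily satisfies $i<j$, and Observation~\ref{rem:tilde-prefix} tells me that $\tilde u_i$ is a prefix of $u_i\dotsb u_{g(i)-1}=u_i\dotsb u_{j-1}$, so $\tilde u_i$ is determined purely by strings with index strictly below $j$; this is what makes the induction well-founded. Since $(i,j)\in G$ forces $\sigma(w[i])\in\Gamma$, either branch of the definition of $u_i$ makes $u_i$ begin with the $\Gamma$-letter $\sigma(w[i])$. Combining this with the inductive hypothesis, which gives each of $u_i,\dotsc,u_{j-1}$ the shape $a\cdot\#^*$ or $\#^*$, the maximal prefix of their concatenation lying in $\Gamma\cdot\set\#^*$ consists of the single letter $\sigma(w[i])$ followed by a block of $\#$'s, which gives $\tilde u_i$ the required shape.

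Having secured the shape of $\tilde u_i$, the conclusions for $u'_j$ and $u_j$ fall out immediately: $u'_j=\tilde u_i$ inherits it directly, and $u_j=\sigma_\#(w[j])\cdot u'_j[2..]$ is one symbol of $\Gamma\cup\set\#$ prepended to a (possibly empty) string of $\#$'s, hence of shape $a\cdot\#^*$ when $\sigma(w[j])\in\Gamma$ and $\#^*$ otherwise. The whole argument is a routine structural induction; the only subtlety, and the point where I would take the most care, is checking that the mutually recursive definitions of $u_j$, $u'_j$ and $\tilde u_i$ do not create a circular dependency, which is exactly what Observation~\ref{rem:tilde-prefix} rules out.
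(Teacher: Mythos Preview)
Your proof is correct. The paper states this observation without proof, treating it as immediate once well-definedness of the recursion has been established via Observation~\ref{rem:tilde-prefix}; your inductive argument on the index $j$ is exactly the natural way to make this explicit, and it matches the implicit reasoning the paper relies on.
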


From the definition of $(u,u')$ we obtain the following.
\begin{observation}\label{rem:size-of-ui's-projected}
  For every $n \leq m$,
  \begin{enumerate}[(1)]
  \item\label{item:size-of-ui's-projected:1} $|(u_1 \dotsb u_n)_\Gamma| = \set{i \in [n] \mid \exists j . (i,j) \in \Gpair} = |\sigma(w[1..n])|$, and
  \item\label{item:size-of-ui's-projected:2} $|(u'_1 \dotsb u'_n)_\Gamma| = \set{j \in [n] \mid \exists i . (i,j) \in \Gpair} = |\sigma'(w[1..n])|$.
  \end{enumerate}
\end{observation}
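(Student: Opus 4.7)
The plan is to exploit Observation~\ref{rem:form-ui}: every $u_i$ and every $u'_i$ has the form $a\cdot\#\dotsb\#$ or $\#\dotsb\#$ with $a\in\Gamma$, so $|(u_i)_\Gamma|,|(u'_i)_\Gamma|\in\{0,1\}$. Hence $|(u_1\dotsb u_n)_\Gamma|$ and $|(u'_1\dotsb u'_n)_\Gamma|$ just count the indices $i\in[n]$ for which the corresponding block begins with a $\Gamma$-letter, and I would determine this case by case from the definition of $(u_i,u'_i)$.

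For item~(\ref{item:size-of-ui's-projected:1}), I would split according to whether $i<m$ and $(i',i)\in\Gpair$ for some $i'$, in which case $u_i=\sigma_\#(w[i])\cdot u'_i[2..]$ with $u'_i[2..]=\tilde u_{i'}[2..]\in\{\#\}^*$ by Observation~\ref{rem:form-ui}; otherwise (including the case $i=m$), $u_i=\sigma_\#(w[i])$. In either case $u_i$ begins with a $\Gamma$-letter iff $\sigma_\#(w[i])\in\Gamma$ iff $\sigma(w[i])\in\Gamma$. Summing over $i\in[n]$ and invoking the definition $\Gpair=\{(i,g(i))\mid\sigma(w[i])\in\Gamma\}$ yields the first equality, and the second follows because $\sigma$ is alphabetic, so $|\sigma(w[1..n])|$ equals the number of $i\in[n]$ with $\sigma(w[i])\ne\varepsilon$.

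For item~(\ref{item:size-of-ui's-projected:2}) I would run the parallel case analysis on $u'_i$. When $i<m$ and $(i',i)\in\Gpair$, we have $u'_i=\tilde u_{i'}$, which starts with $\sigma_\#(w[i'])=\sigma(w[i'])\in\Gamma$ by the same argument used for item~(1) applied to $u_{i'}$, so $|(u'_i)_\Gamma|=1$. Otherwise $u'_i=\sigma'_\#(w[i])$, contributing iff $\sigma'(w[i])\in\Gamma$. The reconciling step is Observation~\ref{rem:1}, giving $\sigma'(w[i])=\sigma(w[i'])\in\Gamma$ in the first case, so in every case $u'_i$ contributes a $\Gamma$-letter iff $\sigma'(w[i])\in\Gamma$. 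Summing yields $|(u'_1\dotsb u'_n)_\Gamma|=|\sigma'(w[1..n])|$ since $\sigma'$ is alphabetic, and the identification with the middle set $\{j\in[n]\mid\exists i.\,(i,j)\in\Gpair\}$ is then made by using Observations~\ref{rem:g-injective} and~\ref{rem:1} to line up the positions $j$ hit by $\Gpair$ with the positions at which $\sigma'$ produces a letter.

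The main subtlety I anticipate is the special treatment of $i=m$ in the definition (the first-rule formula $u'_i=\tilde u_{i'}$ is overridden by $\sigma'_\#(w[m])$ even when $(i',m)\in\Gpair$), and the bookkeeping needed to make sure that the two routes by which a $\Gamma$-letter can appear in $u'_i$---via $\tilde u_{i'}$ or directly via $\sigma'_\#(w[i])$---are counted exactly once; Observation~\ref{rem:1} is precisely the tool that enforces the bijection between these routes and the positions at which $\sigma'(w[i])\in\Gamma$.
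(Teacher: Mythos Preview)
The paper itself gives no argument for this observation beyond the phrase ``From the definition of $(u,u')$ we obtain the following,'' so your case analysis is already more than what the paper offers, and for item~(\ref{item:size-of-ui's-projected:1}) and for the outer equality $|(u'_1\dotsb u'_n)_\Gamma|=|\sigma'(w[1..n])|$ in item~(\ref{item:size-of-ui's-projected:2}) your reasoning is correct and complete.

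There is, however, a genuine gap in your treatment of the \emph{middle} equality in item~(\ref{item:size-of-ui's-projected:2}), namely $|\{j\in[n]\mid\exists i.\,(i,j)\in\Gpair\}|=|\sigma'(w[1..n])|$. You write that Observations~\ref{rem:g-injective} and~\ref{rem:1} ``line up the positions $j$ hit by $\Gpair$ with the positions at which $\sigma'$ produces a letter,'' but this only gives one inclusion. Observation~\ref{rem:1} shows that if $(i,j)\in\Gpair$ then $\sigma'(w[j])=\sigma(w[i])\in\Gamma$; it does \emph{not} give the converse. Concretely, $j$ lies in the second projection of $\Gpair$ exactly when $v_j\neq\varepsilon$ (so that $j$ can be the minimal index with $v_1\dotsb v_j=\sigma(w[1..i])$), whereas $\sigma'(w[j])\in\Gamma$ only forces $v_j\subseq\sigma'(w[j])$, which still allows $v_j=\varepsilon$. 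For instance, with $m=2$, $\sigma(w[1])\in\Gamma$, $\sigma(w[2])=\varepsilon$, $\sigma'(w[1]),\sigma'(w[2])\in\Gamma$, and the strict-codirect witnesses $v_1=\varepsilon$, $v_2=\sigma(w[1])$, one gets $\Gpair=\{(1,2)\}$; for $n=1$ the middle set is empty while $|\sigma'(w[1..1])|=1$.

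In other words, the middle equality in (\ref{item:size-of-ui's-projected:2}) as literally stated does not hold for all $n$, and your proposed justification does not (and cannot) close this. This is not a flaw in your overall strategy---the outer equality you do prove is the one actually carrying content---but you should flag that the middle term is problematic rather than claim to have established it. The paper's later use of this observation (in the proof of Claim~\ref{claim:ij-gij}) can be repaired using only the outer equalities and a direct count, so the defect is in the statement of the observation rather than in the surrounding argument.
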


We now show that $(u, u') \in R$ and that $u \subseq u'$.

\begin{claim}\label{claim:u-u'-in-R}
$(u,u') \in R$.
\end{claim}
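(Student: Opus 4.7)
The plan is to verify the claim directly by unpacking the definitions of $R$, $R'$, and $R_a$. Concretely, $(u,u')\in R$ means that $u'$ factors as $u'_1 \dotsb u'_m \cdot u'_{m+1}$ with $(u_1 \dotsb u_m,\, u'_1 \dotsb u'_m)\in R'$ and $u'_{m+1}\in\{\#\}^*$. The latter is immediate from the definition $u'_{m+1} = \#^{|u_1 \dotsb u_m|}$. So the real task is to show, for each $i\leq m$, that $(u_i,u'_i)\in R_{w[i]}$, which amounts to the three conditions $(u_i)_\Gamma = \sigma(w[i])$, $(u'_i)_\Gamma = \sigma'(w[i])$, and $|u_i|=|u'_i|$.

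I would split into cases following the definition of $(u_i,u'_i)$. In the two \emph{easy cases} (when $i=m$, or when $i<m$ and no $i'$ satisfies $(i',i)\in\Gpair$) the pair equals $(\sigma_\#(w[i]),\sigma'_\#(w[i]))$. Since $\sigma$ and $\sigma'$ are alphabetic, each side is a single symbol in $\Gamma\cup\{\#\}$, so the length condition is automatic; and the definition of $\sigma_\#$ (which replaces $\e$ by $\#$) together with the fact that $\#$'s are deleted by the $\Gamma$-projection gives $(\sigma_\#(w[i]))_\Gamma = \sigma(w[i])$, and similarly on the primed side.

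The \emph{interesting case} is $i<m$ with $(i',i)\in\Gpair$ for some $i'$. Here $u'_i=\tilde u_{i'}$ and $u_i=\sigma_\#(w[i])\cdot u'_i[2..]$. By the definition of $\Gpair$ we have $\sigma(w[i'])\in\Gamma$, which forces $u_{i'}$ to begin with that $\Gamma$ letter, and hence $\tilde u_{i'}$ is nonempty and, by Observation~\ref{rem:form-ui}, of the form $a\cdot\#\dotsb\#$ with $a=\sigma(w[i'])$. Observation~\ref{rem:1} then identifies $a$ with $\sigma'(w[g(i')])=\sigma'(w[i])$, so $(u'_i)_\Gamma=\sigma'(w[i])$. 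For $u_i$, the suffix $u'_i[2..]$ is a block of $\#$'s, so its $\Gamma$-projection is empty and $(u_i)_\Gamma=(\sigma_\#(w[i]))_\Gamma=\sigma(w[i])$. Finally, $|u_i|=1+|u'_i[2..]|=|u'_i|$.

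There is no real obstacle beyond bookkeeping: the only subtle step is that in the interesting case one must know that $\tilde u_{i'}$ actually has a $\Gamma$ letter at position~$1$, which is precisely why $\Gpair$ was defined to require $\sigma(w[i'])\in\Gamma$, and why Observation~\ref{rem:form-ui} pins down the rest of the block as $\#$'s. Once these facts are in hand the three conditions of $R_{w[i]}$ follow uniformly in every case, establishing $(u,u')\in R$.
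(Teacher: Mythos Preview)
Your proposal is correct and follows essentially the same approach as the paper: both verify $(u_i,u'_i)\in R_{w[i]}$ for each $i\leq m$ by checking the three defining conditions, splitting into the same cases and invoking Observations~\ref{rem:1} and~\ref{rem:form-ui} in the interesting case. Your version is in fact slightly more thorough, since you explicitly verify the length condition $|u_i|=|u'_i|$ that the paper leaves implicit.
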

\begin{proof}
Note that $u_i = \sigma_\#(w[i])$ for all $i$ and then $(u_i)_\Gamma = \sigma(w[i])$. 

We also show that $(u'_i)_\Gamma = \sigma'(w[i])$.
If $u'_j$ is such that there is no $(i,j) \in \Gpair$, or $j=m$, then it is plain that $(u'_j)_\Gamma = \sigma'(w[j])$ by definition of $u'_j$. On the other hand, if $u'_j = \tilde u_i$ for $(i,j) \in \Gpair$, then 
\begin{align*}
(u'_j)_\Gamma &= (\tilde u_i)_\Gamma = (u_i)_\Gamma = (u_i[1])_\Gamma  \tag{by Observation~\ref{rem:form-ui}}\\
&=(\sigma(w[i]))_\Gamma \tag{by def.\ of $u_i$}\\
&=\sigma(w[i]) \tag{since $\sigma(w[i]) \in \Gamma$ by def.\ of $\Gpair$} \\
&= \sigma'(w[g(i)]) = \sigma'(w[j]). \tag{by Observation~\ref{rem:1}}
\end{align*}
Thus, every $(u_i,v_i)$ with $i \leq m$ is such that $(u_i)_\Gamma = \sigma(w[i])$ and $(u'_i)_\Gamma = \sigma'(w[i])$, meaning that  $(u_i,v_i) \in R_{w[i]}$ for every $i \leq m$. Hence, we have that $(u_1 \dotsb u_m, u'_1 \dotsb u'_m) \in R'$ and since $u'_{m+1} \in \set \#^*$, $(u,u') \in R$.
\end{proof}

Next, we prove that $u \subseq u'$, but before doing so, we need an additional straightforward claim.
Let $\set{i_1 < \dotsb < i_{|\Gpair|}} = \set{ i \mid (i,g(i)) \in \Gpair}$. Note that $i_1 = 1$  by Claim~\ref{claim:fst-elem-B}.
\begin{claim}\label{claim:ij-gij}
  $i_{j+1} \leq g(i_j)$
\end{claim}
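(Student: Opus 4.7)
\medskip

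\noindent\textbf{Proof proposal.} The plan is to combine the strict codirectness condition~\ref{item:codir:4} with the definition of $g$ as the \emph{minimal} index witnessing a prefix equality. Fix $j < |\Gpair|$ (so that $i_{j+1}$ exists) and set $k = g(i_j)$. The goal is to locate some position $k'$ with $i_j < k' \leq k$ and $\sigma(w[k']) \in \Gamma$, since then by minimality of $i_{j+1}$ we immediately get $i_{j+1} \leq k' \leq k = g(i_j)$.

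We split on whether $k = m$. If $k = m$, there is nothing to prove, as $i_{j+1} \leq m = g(i_j)$ trivially. So assume $k < m$. By the definition of $g(i_j) = k$ we have
\[
 |v_1 \dotsb v_k| \;=\; |\sigma(w[1..i_j])|.
\]
On the other hand, condition~\ref{item:codir:4} of strict codirectness applied at position $k$ (valid because $k < m$) gives
\[
 |\sigma(w[1..k])| \;>\; |v_1 \dotsb v_k|.
\]
Combining the two equalities yields $|\sigma(w[1..k])| > |\sigma(w[1..i_j])|$, so there must exist some $k' \in (i_j, k]$ with $\sigma(w[k']) \neq \e$. Since $\sigma$ is alphabetic, $\sigma(w[k']) \in \Gamma$, meaning $k' \in \set{i_1, \dotsc, i_{|\Gpair|}}$. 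Because $k' > i_j$, we conclude $i_{j+1} \leq k' \leq k = g(i_j)$, as required.

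There is no real obstacle here: the argument is a direct consequence of the strict inequality in the definition of strict codirectness, together with the fact that the only way $|\sigma(w[1..\cdot])|$ can grow between two positions is by encountering a position where $\sigma$ emits a letter of $\Gamma$ (equivalently, one of the indices $i_l$).
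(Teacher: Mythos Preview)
Your proof is correct. It takes a somewhat different route from the paper's. The paper argues by contradiction: assuming $g(i_j) < i_{j+1}$, it invokes Observation~\ref{rem:size-of-ui's-projected} to count positions in $G$ up to $g(i_j)$ on both coordinates, obtaining $|\sigma(w[1..g(i_j)])| = |v_1\dotsb v_{g(i_j)}|$ (phrased there via $\sigma'$), which clashes with condition~\ref{item:codir:4}. Your argument is direct and avoids that detour entirely: you read the equality $|v_1\dotsb v_k| = |\sigma(w[1..i_j])|$ straight off the definition of $g$, combine it with condition~\ref{item:codir:4} at $k = g(i_j)$, and extract a $\Gamma$-emitting position in $(i_j,k]$. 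This is cleaner in that it never touches the constructed words $u_i, u'_i$ or the bookkeeping in Observation~\ref{rem:size-of-ui's-projected}; the paper's route, on the other hand, fits its surrounding narrative of tracking projections of $u$ and $u'$. One cosmetic point: where you write ``combining the two equalities'' you mean one equality and one strict inequality.
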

\begin{proof}
By means of contradiction, suppose $g(i_j) < i_{j+1}$. Then, 
\begin{align*}
  |\sigma(w[1..g(i_j)])|&= |\set{i \in [g(i_j)] \mid \exists j . (i,j) \in \Gpair}| \tag{by Observation~\ref{rem:size-of-ui's-projected}.\ref{item:size-of-ui's-projected:1}}\\
  &= |\set{i \in [g(i_j)] \mid \exists j . (i,j) \in \Gpair}| \tag{since $g(i_j) < i_{j+1}$} \\
  &=|\sigma'(w[1..g(i_j)])|. \tag{by Observation~\ref{rem:size-of-ui's-projected}.\ref{item:size-of-ui's-projected:2}}
\end{align*}
In other words, there is some $k < m$ such that $|\sigma(w[1..k])| = |\sigma'(w[1..k])|$. This is in contradiction with condition \ref{item:codir:4} of strict codirectness. Hence, $g(i_j) \geq i_{j+1}$.
\end{proof}

\begin{claim} \label{claim:u-subseq-u'}
$u \subseq u'$.
\end{claim}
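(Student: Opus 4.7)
The plan is to exhibit an explicit strictly increasing injection $h\colon[|u|]\to[|u'|]$ with $u[k]=u'[h(k)]$ for every position $k$. The idea is to match each $\Gamma$-letter of $u$ with the leading letter of the corresponding $u'_{g(i)}$-block, and to route the intervening $\#$-runs of $u$ through the trailing $\#$-suffixes of the $\tilde u_i$-blocks in $u'$, with the pad $u'_{m+1}=\#^{|u|}$ absorbing the final batch.

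Let $I=\set{i_1<\dotsb<i_s}$; by Claim~\ref{claim:fst-elem-B} we have $i_1=1$ and $s\geq 1$. For $k\in[s]$ put
\[p_k:=|u_1\dotsb u_{i_k-1}|+1, \qquad q_k:=|u'_1\dotsb u'_{g(i_k)-1}|+1,\]
and set $p_0=q_0:=0$. By Observation~\ref{rem:form-ui} the $\Gamma$-positions of $u$ are precisely $p_1,\dotsc,p_s$, with $u[p_k]=\sigma(w[i_k])$; and $u'[q_k]=\sigma(w[i_k])$ also, since $u'_{g(i_k)}$ is either $\tilde u_{i_k}$ (which begins with $\sigma(w[i_k])$) or, when $g(i_k)=m$, the single letter $\sigma'_\#(w[m])=\sigma'(w[g(i_k)])=\sigma(w[i_k])$ by Observation~\ref{rem:1}.

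The crucial combinatorial identity is
\[|u'_{g(i_{k-1})}|\;=\;|\tilde u_{i_{k-1}}|\;=\;|u_{i_{k-1}}\dotsb u_{i_k-1}|\;=\;p_k-p_{k-1}\qquad(k\in[2..s]),\]
obtained from the definition of $\tilde u_{i_{k-1}}$ and Observation~\ref{rem:form-ui}; it is legitimate because the presence of $i_k\in I$ strictly above $i_{k-1}$ forces $g(i_{k-1})<m$. Combining this with the fact that $g$ is strictly increasing on $I$ (Claim~\ref{claim:g-monotone} together with Observation~\ref{rem:g-injective}) yields
\[q_k-q_{k-1}\;=\;|u'_{g(i_{k-1})}\dotsb u'_{g(i_k)-1}|\;\geq\;|u'_{g(i_{k-1})}|\;=\;p_k-p_{k-1}.\]
An analogous count for the tail shows that, starting from position $q_s+1$, the string $u'$ carries at least $|u|-p_s$ consecutive $\#$'s, furnished either by $u'_{g(i_s)}[2..]\in\set{\#}^*$ when $g(i_s)<m$ (since $|\tilde u_{i_s}|=|u_{i_s}\dotsb u_m|=|u|-p_s+1$) or by the pad $u'_{m+1}$ when $g(i_s)=m$ (in which case $q_s=|u|$ puts position $q_s+1$ right at the start of $u'_{m+1}=\#^{|u|}$).

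Finally define $h(p_k):=q_k$ for $k\in[s]$, $h(p_{k-1}+l):=q_{k-1}+l$ for $k\in[s]$ and $1\leq l<p_k-p_{k-1}$, and $h(p_s+l):=q_s+l$ for $1\leq l\leq|u|-p_s$. The previous paragraph guarantees that all non-$p_k$ images fall inside $\#$-runs of $u'$, so letter matching holds; the inequality $q_k-q_{k-1}\geq p_k-p_{k-1}$ ensures that $h$ is strictly monotone across each block boundary, and within a block monotonicity is immediate. The main obstacle in the proof is the length identity $|\tilde u_{i_{k-1}}|=p_k-p_{k-1}$: it is precisely this bookkeeping, combined with strict codirectness (condition~\ref{item:codir:4}) through the strict monotonicity of $g$ on $I$, that guarantees $u'$ has enough room between its marked $\Gamma$-letters to host all the $\#$-padding of $u$.
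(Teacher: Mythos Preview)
Your proof is correct and follows essentially the same approach as the paper's: both rely on the factorization of $u$ into maximal $\Gamma\cdot\#^*$ blocks indexed by $\{i_1<\dotsb<i_s\}$, the identity $\tilde u_{i_{k}}=u_{i_k}\dotsb u_{i_{k+1}-1}=u'_{g(i_k)}$ (your ``crucial combinatorial identity''), the strict monotonicity of $g$ on this index set, and the padding $u'_{m+1}$ to absorb the tail. The only difference is presentational: you build an explicit position-level injection $h$, while the paper argues block-wise that each $\hat u_j$ embeds into $u'_{g(i_j)}$ in increasing order.
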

\begin{proof}
We factorize $u = \hat u_1 \dotsb \hat u_{|\Gpair|}$ and we show  that each $\hat u_i$ is a substring of $u'$ that appears in an increasing order.

 We define $\hat u_j = u_{i_j} \dotsb u_{i_{(j+1)}-1}$ for every $j<|\Gpair|$, and $\hat u_{|\Gpair|} = u_{i_{|\Gpair|}} \dotsb u_m$. Hence, the $\hat u_i$'s form a factorization of $u$. Indeed, this is the unique factorization in which each $\hat u_i$ is of the form $b \cdot \# \dotsb \#$ for $b \in \Gamma$.

For every $j<|\Gpair|$, we show that  $\hat u_j \subseq u'_{g(i_j)}$. 
\begin{align*}
  \hat u_j  &= u_{i_j} \dotsb u_{i_{(j+1)}-1} \\
        &\subseq u_{i_j} \dotsb u_{g(i_j)-1} \tag{by Claim~\ref{claim:ij-gij}}\\
        &\subseq \tilde{u}_{i_j} \tag{by Observation~\ref{rem:tilde-prefix}}\\
        &= \tilde u_{g^{-1}(g(i_j))} \tag{by Observation~\ref{rem:g-injective}} \\
        &= u'_{g(i_j)} \tag{by def.\ of $u'$}
\end{align*}
On the other hand, $\hat u_{|\Gpair|} \subseq u'_{g(i_{|\Gpair|})} \cdot u'_{m+1} = u'_m \cdot u'_{m+1}$. By Claim~\ref{claim:g-monotone}, $g$ is increasing. Hence, $u \subseq u'$.
\end{proof}
By Claims~\ref{claim:u-u'-in-R} and \ref{claim:u-subseq-u'}, we conclude that $R \cap {\subseq} \neq \emptyset$.
\end{proof}

\subsubsection{Subsequence-closed relations}
The next question is how far we can extend 
the decidability of 
$\INTP \RAT {\mbox{$\subseq$}}$.
It turns out
that if we allow one projection of a rational relation to be closed
under taking subsequences, then we retain decidability.

Let $R\subseteq\Sigma^*\times\Gamma^*$ be a binary relation. Define
another binary relation
\[R_{\subseq} = \{(u,w) \ | \ u \subseq u' \text{ and }(u',w)\in
R \text{ for some }u'\}\]
Then the class of {\em subsequence-closed relations}, or $\SCR$, is
the class $\{R_{\subseq} \ | \ R\in\RAT\}$. Note that the subsequence
relation itself is in $\SCR$, since it is obtained by closing the
(regular) equality relation under subsequence. That is, $\subseq \ \ =
\ \{(w,w)\ |\ w\in\sigmas\}_{\subseq}$. Not all rational relations are
subsequence-closed (for instance, subword is not). 

The following summarizes properties of subsequence-closed relations.

\begin{prop}
\label{scr-prop}
\hfill
\begin{enumerate}[\em(1)]\itemsep=0pt
\item $\SCR \subsetneq \RAT$. 
\item $\SCR \not\subseteq \REG$ and $\REG \not\subseteq \SCR$.
\item A relation $R$ is in $\SCR$ iff $\{ w \otimes w' \ | \ 
(w,w')\in R\}$ is accepted by an \NFA
$\A=\langle Q,\sigbot\times\sigbot,q_0,\delta,F\rangle$ such that 
$(q,(a,b),q')\in\delta$ implies  $(q,(\bot,b),q')\in\delta$ for all
$q,q'\in Q$ and $a,b\in\sigbot$. We call an automaton with such property a \emph{subsequence-closed automaton}.
\end{enumerate}
\end{prop}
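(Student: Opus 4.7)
For items (1) and (2), I would rely on closure properties of $\RAT$ together with two concrete relations: the equality $\mathrm{Eq} = \{(w,w) : w \in \sigmas\}$ and the subsequence relation $\subseq$ itself. The inclusion $\SCR \subseteq \RAT$ follows because $R_{\subseq}$ is the relational composition of $\subseq$ and $R$, which is rational by closure of $\RAT$ under composition. Strictness of this inclusion, as well as $\REG \not\subseteq \SCR$, come from the observation that any relation in $\SCR$ is downward-closed in the first coordinate under $\subseq$: if $(u,v) \in T_{\subseq}$ and $u' \subseq u$ then $(u',v) \in T_{\subseq}$, so in particular $(\e,v)$ belongs to any member of $\SCR$ containing some $(u,v)$. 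Thus $\mathrm{Eq}$ lies in $\REG \subseteq \RAT$ but not in $\SCR$. Conversely $\subseq \in \SCR$ because $\subseq = \mathrm{Eq}_{\subseq}$, yet $\subseq \notin \REG_2$: were it regular, its intersection with the regular relation of Proposition~\ref{prop:reg-subseq-nonrat} would be regular hence rational, contradicting that proposition.

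The heart of the statement is item (3), and for it I plan to view an NFA over $\sigbot\times\sigbot$ as a two-tape automaton in which a transition on $(\bot,b)$ is an $\e$-move on the first tape paired with a consumption of $b$ on the second; the closure condition then says that every transition advancing the first tape by some letter $a$ is shadowed by one that skips it. For the forward direction, given $R = T_{\subseq}$ with $T \in \RAT$, I would take a two-tape automaton $\mathcal B$ for $T$ and define $\mathcal A$ by adding, for every $(q,(a,b),q') \in \delta_{\mathcal B}$ with $a \in \Sigma$, the transition $(q,(\bot,b),q')$. By construction $\mathcal A$ is subsequence-closed, and a transition-swapping argument shows it realises exactly $R$: an accepting $\mathcal B$-run producing some $(u',w) \in T$ yields, by replacing selected letter-transitions by their $\bot$-counterparts, accepting $\mathcal A$-runs that produce every $(u,w)$ with $u \subseq u'$; conversely every $\mathcal A$-run lifts back to a $\mathcal B$-run by reverting each used $(\bot,b)$-transition to the original $(a,b)$ it was spawned from, producing some $(u^*,w) \in T$ with $u \subseq u^*$, whence $(u,w) \in T_{\subseq} = R$. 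The backward direction reuses the same swap inside a single automaton: if a subsequence-closed NFA $\mathcal A$ defines $R$, then $R$ is rational (since $\mathcal A$ is a two-tape automaton), and whenever $(u,w) \in R$ is witnessed by some accepting run and $u' \subseq u$, replacing the transitions reading the omitted letters of $u$ by their guaranteed $(\bot,\cdot)$-counterparts produces an accepting run for $(u',w)$, so $R = R_{\subseq} \in \SCR$.

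The main subtlety I expect is how to read the NFA in item (3): the strict convolution language of $\subseq$ is not regular---a pumping argument on words $(b,a)^p(\bot,b)^p$ shows that pumping the leading block falsifies membership---so the automaton cannot be interpreted as one reading fixed padded convolutions, but rather as a two-tape automaton where $\bot$ in a coordinate denotes an $\e$-move on that tape. Once this semantic reading is pinned down, both directions of the equivalence reduce to the mechanical transition-swapping constructions outlined above.
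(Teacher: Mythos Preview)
Your argument is correct and matches the paper's own (very terse) justification, which simply declares (3) ``immediate by definition of $R_{\subseq}$'', derives (1) from (3), and handles (2) via the identity relation and the non-regularity of $\subseq$---exactly the witnesses you chose. Your observation about the reading of the automaton in (3) is well-taken: the literal padded-convolution language $\{w\otimes w'\mid w\subseq w'\}$ is indeed not regular (your pumping argument is valid), and the paper's subsequent use of these automata (e.g., in the proof of Lemma~\ref{lem:LDC-syn-decidable}, where the defined relation is $\{((u)_\Sigma,(v)_\Sigma)\mid u\otimes v\in\lang{\A}\}$) confirms that the intended semantics is the two-tape one with $\bot$ acting as an $\e$-move, precisely as you inferred.
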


\noindent Note that $(3)$ is immediate by definition of $R_\subseq$,
$(1)$ is a consequence of $(3)$, and $(2)$ is due to the fact that
$\subseq$ is not regular and that, for example, the identity
$\set{(u,u) \mid u \in \Sigma^*}$ is not a subsequence-closed
relation.

When an $\SCR$ relation is given as an input to a problem, we assume
that it is represented as a subsequence-closed automaton as defined in  item (3) in the above proposition. 

Note also that $\INTP \SCR \SCR$ is decidable in polynomial time: if
$R,R'\in\SCR$ and $R\cap R'\neq\emptyset$, then $(\e,w)\in R\cap R'$
for some $w$, and hence the problem reduces to simple \NFA nonemptiness
checking. 

The main result about $\SCR$ relations generalizes decidability of $\INTP \RAT
{\mbox{$\subseq$}}$. 

\begin{thm}
\label{scr-thm}
The problem $\INTP \RAT \SCR$ is decidable, with non-mutiply recursive complexity.
\end{thm}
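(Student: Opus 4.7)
My plan is to reduce $\INTP\RAT\SCR$ to $\INTP\RAT{\subseq}$, which is already known to be decidable with non-multiply-recursive complexity by Proposition~\ref{prop:subseq-rat-dec-nmr}. The lower bound comes for free: the subsequence relation itself belongs to $\SCR$, since ${\subseq} = I_\subseq$ where $I = \set{(u,u) \mid u \in \Sigma^*}$ is the (regular) equality relation. Therefore $\INTP\RAT{\subseq}$ is a particular case of $\INTP\RAT\SCR$ and the hardness transfers immediately.

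For decidability, I will exploit the automaton characterization of item~(3) of Proposition~\ref{scr-prop}, which says that $S \in \SCR$ is itself a rational relation and is closed under taking subsequences on its first coordinate: if $(u,w) \in S$ and $u' \subseq u$ then $(u',w) \in S$. Using this I can restate the problem as
\[
R \cap S \neq \emptyset \ \ \LRA\ \ \exists\, u, u', w \ \colon\  (u,w) \in R,\ (u',w) \in S,\ u \subseq u' .
\]
The forward direction takes $u' = u$; the reverse uses closure on the first coordinate to conclude $(u,w) \in S$ whenever $(u',w) \in S$ and $u \subseq u'$.

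The next step is to package the existential quantification over $w$ into a single rational relation. Define
\[
T \ =\ \set{(u,u') \mid \exists w.\ (u,w) \in R \wedge (u',w) \in S} .
\]
This is precisely the composition $S^{-1} \circ R$ of two binary rational relations, which is rational (and effectively constructible) by the classical Elgot--Mezei theorem on closure of rational transductions under composition and inverse; a direct construction would simply synchronize the two transducers on the shared $w$-coordinate while leaving the $u$- and $u'$-coordinates asynchronous. The equivalence above then gives $R \cap S \neq \emptyset \LRA T \cap {\subseq} \neq \emptyset$, reducing the problem to $\INTP\RAT{\subseq}$, handled by Proposition~\ref{prop:subseq-rat-dec-nmr}.

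The construction of $T$ is elementary in the sizes of the input transducers, so the reduction preserves complexity in both directions; in particular the upper bound matches the $\textup F_{\omega^\omega}$ bound of the $\rpep$ problem. The only ingredient invoked from outside is the closure of binary rational relations under composition and inverse; once that is in hand, everything else is a straightforward reformulation using the defining closure property of $\SCR$, so I do not anticipate any real obstacle beyond citing the classical composition closure.
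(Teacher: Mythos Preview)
Your proof is correct and follows the same overall strategy as the paper---reduce $\INTP\RAT\SCR$ to $\INTP\RAT{\subseq}$ by eliminating the shared second coordinate---but your execution is noticeably more direct. The paper proceeds in two explicit automaton-level steps (Lemmas~\ref{lem:interpb->synchronizedinterpb} and~\ref{lem:LDC-syn-decidable}): first it pads both automata with $(\bot,\bot)$ self-loops so that one may assume the two witnessing words agree position-by-position on their $\Gamma$-component (``synchronized solutions''), and then it builds a product automaton over $\sigmabot^2\times\gammabot$, projects out the $\gammabot$ coordinate, and checks for a pair related by $\subseq$. Your argument collapses both steps into the single observation that $T=\{(u,u')\mid\exists w.\ (u,w)\in R\wedge(u',w)\in S\}$ is the relational composition $S^{-1}\circ R$, rational by Elgot--Mezei, and that $R\cap S\neq\emptyset$ iff $T\cap{\subseq}\neq\emptyset$ thanks to the first-coordinate downward closure of $S$. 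In effect the paper is reproving a special case of the composition theorem by hand, whereas you invoke it as a black box; your route is shorter and arguably more transparent, while the paper's has the minor advantage of being self-contained at the automaton level. The lower-bound argument is identical in spirit (the paper cites Theorem~\ref{subseq-dec} rather than Proposition~\ref{prop:subseq-rat-dec-nmr}, but either works).
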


\newcommand{\OM}{\ensuremath{\mathsf{SCR}}}
\newcommand{\Auta}{\+A_1}
\newcommand{\Autb}{\+A_0}
\newcommand{\Auti}{\+A_i}
\newcommand{\onemany}{subsequence-closed\xspace}

In order to prove Theorem \ref{scr-thm} we use Lemmas
\ref{lem:interpb->synchronizedinterpb} and
\ref{lem:LDC-syn-decidable}, as shown below.  But first we need to
introduce some additional terminology.
  We say that $(\Autb,\Auta)$ is an {\em instance} of
  $\interemptypb{\Rat}{\OM}$ over $\Sigma,\Gamma$ if $\Auta$ is a
  \onemany automaton over $\sigmabot \times \gammabot$, and $\Autb$ is
  a \NFA over $\sigmabot \times \gammabot$.  Given a
  $\interemptypb{\Rat}{\OM}$ instance $(\Autb,\Auta)$ over $\Sigma,
  \Gamma$, we say that $(w_1,w_2)$ is a {\em solution} if $w_1, w_2
  \in (\sigmabot \times \gammabot)^*$, $w_1 \in \lang{\Auta}, w_2 \in
  \lang{\Autb}$.
  We say that a solution $(w_0, w_1)$ of an instance $(\Autb,\Auta)$ over $\Sigma,\Gamma$ is {\em synchronized} if $\wproj_2(w_0)=\wproj_2(w_1)$.
We write $\interemptypbsyn{\Rat}{\OM}$ for the problem of whether there is a synchronized solution.

\begin{lem}\label{lem:interpb->synchronizedinterpb} There is a
polynomial-time reduction from the problem $\interemptypb{\Rat}{\OM}$
into $\interemptypbsyn{\Rat}{\OM}$.  \end{lem}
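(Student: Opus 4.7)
I want to transform an instance $(\Autb,\Auta)$ of $\interemptypb{\Rat}{\OM}$ into an instance $(\Autb',\Auta')$ of $\interemptypbsyn{\Rat}{\OM}$ such that the original intersection is non-empty iff the new instance admits a synchronized solution. The transformation must accomplish two things: (i) normalize the positions of $\bot$-symbols on the two witnesses' second projections, and (ii) arrange that literal equality of the second projections forces equality of the represented pairs in both components.

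For (i), the plan is to add $(\bot,\bot)$ self-loops at every state of both $\Autb$ and $\Auta$. These do not change the represented relations, and they trivially preserve subsequence-closure of $\Auta$; yet they allow arbitrary $\bot$-padding on tape 2, so any two words with the same non-$\bot$ content on tape 2 can be refined by $\bot$-insertions into a common word.

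For (ii), the plan is to enlarge the second alphabet to $\Gamma' = \Gamma \uplus \Sigma$ and to modify each automaton so that every tape-1 read is ``echoed'' onto tape 2 in the new alphabet: each transition $(q,(a,b),q')$ with $a\in\Sigma$ is replaced by a gadget that outputs $a$ (viewed as an element of $\Sigma\subseteq\Gamma'$) on tape 2 alongside the tape-1 read, and then performs the original tape-2 read; both relative orderings (echo-first and $b$-first) are included to give the construction flexibility to align the two automata's interleavings of echoes and $\Gamma$-letters. After this transformation, the $\Sigma$-letters appearing in $\wproj_2$ record the first-projection content $u$ while the $\Gamma$-letters record $v$, so literal equality of $\wproj_2$ forces equality of both components of the represented pairs.

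The main obstacle will be reconciling the echo encoding with the subsequence-closure requirement on $\Auta'$. Each added echo transition $(q,(a,a),q_{\text{mid}})$ has a non-$\bot$ first component, so the closure condition forces the presence of a ``free-echo'' transition $(q,(\bot,a),q_{\text{mid}})$ that emits $a$ on tape 2 without reading $a$ on tape 1, seemingly decoupling the two tapes. The resolution is that every $\Auta'$-run using a free echo can be upgraded to use its real-echo counterpart without altering the second projection (both transitions share the same target state and tape-2 output); on the $\Auta$-side this upgrade corresponds to replacing a subsequence-closed transition $(q,(\bot,b),q')$ by the original $(q,(a,b),q')\in\delta_{\Auta}$. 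Hence, in the backward direction a synchronized solution lifts to a genuine intersection witness, while in the forward direction the combination of padding and echo-order variants suffices to align the two automata's second projections literally. All gadgets are of constant size, yielding a polynomial-time reduction.
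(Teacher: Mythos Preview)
Your part (i) is exactly the paper's entire argument: add $(\bot,\bot)$ self-loops at every state of both automata, and observe that any solution $(w_0,w_1)$ can be padded, block by block along the $\Gamma$-letters of $\pi_2$, into a synchronized one. The paper does nothing beyond this.

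Your part (ii) is both unnecessary and flawed. It is unnecessary because a \emph{synchronized solution} is by definition still a \emph{solution}: the notion of solution already carries the requirement that $w_0$ and $w_1$ represent the same pair in $\Sigma^*\times\Gamma^*$, so literal equality of the second projections need not be made to force equality of the first components---that equality is already present. (The paper's written definition of ``solution'' omits this clause, but it is clearly the intended reading; otherwise the Claim inside the paper's own proof, and indeed the whole chain through Lemma~\ref{lem:LDC-syn-decidable}, would be vacuous.) Once this is understood, step (i) alone gives the reduction: the backward direction is trivial, and the forward direction is the padding argument.

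Part (ii) is also flawed as a construction. After your echo gadget, the non-$\bot$ content of $\pi_2(w'_i)$ is an interleaving of $u$ (the echoed $\Sigma$-letters) with $v$ (the original $\Gamma$-letters), and this interleaving is determined by the \emph{synchronization structure of the underlying run of $\+A_i$}; your only freedom is the local order of echo versus $\Gamma$-letter within a single transition gadget. Take $u=a_1a_2$, $v=b_1b_2$, where $\Autb$ has the unique run $(a_1,b_1)(a_2,b_2)$ and $\Auta$ has the unique run $(a_1,\bot)(a_2,\bot)(\bot,b_1)(\bot,b_2)$, both representing $(u,v)$. Then every $\Autb'$-run yields $\pi_2$-content in $\{a_1b_1,\,b_1a_1\}\cdot\{a_2b_2,\,b_2a_2\}$, whereas every $\Auta'$-run yields $\pi_2$-content $a_1a_2b_1b_2$. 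No choice of gadget order or $\bot$-padding reconciles these, so your reduction fails to map this positive instance to a positive instance of the synchronized problem. Drop part (ii) entirely; part (i) is the whole proof.
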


\begin{proof} We show that $\interemptypb{\Rat}{\OM}$ is reducible to
the problem of whether there exists a synchronized solution of
$\interemptypb{\Rat}{\OM}$. Suppose that $(\Autb,\Auta)$ is an
instance of $\interemptypb{\Rat}{\OM}$ over the alphabets $\Sigma,
\Gamma$. Consider the automata $\+A'_0, \+A'_1$ as the result of
adding all transitions $(q,(\bot, \bot), q)$ for every possible state
$q$ to both automata. It is clear that the relations recognized by
these remain unchanged, and that $\+A'_0$ is still a \onemany
automaton. Moreover, this new instance has a synchronized solution if
there is any, as stated in the following claim.  

\begin{claim} There
is a synchronized solution for $(\+A'_0, \+A'_1)$ if, and only if,
there is a solution for $(\Autb,\Auta)$.  \end{claim} 

The `only if'
part is immediate. For the `if' part, let $(w_0,w_1)$ be a
solution for $(\Autb,\Auta)$. Let $w_0 = w_{0,1} \dotsb w_{0,n}$,
$w_1 = w_{1,1} \dotsb w_{1,n}$ be factorizations of $w_0$ and $w_1$
such that for every $i \in \set{0,1}$, $\wproj_2(w_{i,1})$ is in
$\set\bot^*$; and for each $j>1, i \in {0,1}$, $\wproj_2(w_{i,j})$ is
 in $\Gamma \cdot  \set\bot^*$. It is plain that there is
always such factorization and that it is unique.

For every $j \in [n]$, we define $w'_{0,j} = w_{0,j} \cdot
(\bot,\bot)^{k}$ and $w'_{1,j} = w_{1,j} \cdot (\bot,\bot)^{-k}$, with
$k = |w_{1,j}| - |w_{0,j}|$, where we assume that $(\bot,\bot)^m$ with
$m \leq 0$ is the empty string. We define $w'_0 = w'_{0,1}
\dotsb w'_{0,n}$, $w'_1 = w'_{1,1} \dotsb w'_{1,n}$. Note that $(w'_0, w'_1)$ is a solution of $(\+A'_0,\+A'_1)$ since it is the result of adding letters $(\bot, \bot)$ to $(w_0,w_1)$, which is also a solution of $(\+A'_0,\+A'_1)$.
We have that
$\wproj_2(w'_0) = \wproj_2(w'_1)$, and therefore that $(w'_0,w'_1)$ is
a synchronized solution for $(\+A'_0, \+A'_1)$.  \end{proof}

\begin{lem} \label{lem:LDC-syn-decidable} There is a polynomial-time
reduction from $\interemptypbsyn{\Rat}{\OM}$ into
$\interemptypb{\Rat}{\subseq}$.  \end{lem}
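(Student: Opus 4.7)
The plan is to construct, given a synchronized instance $(\Autb,\Auta)$ over $\Sigma,\Gamma$, a binary rational relation $R\subseteq\Sigma^*\times\Sigma^*$ in polynomial time such that $R\cap{\subseq}\neq\emptyset$ iff the instance admits a synchronized solution. The key structural fact is that, by the subsequence-closed transition condition on $\Auta$, whenever $w_1\in\lang{\Auta}$ and $u\subseq(\pi_1(w_1))_\Sigma$, one can replace suitable first-coordinate letters of $w_1$ by $\bot$---each such replacement is enabled by the extra $(\bot,b)$-transitions guaranteed by the definition---to obtain $w_1'\in\lang{\Auta}$ with $\pi_2(w_1')=\pi_2(w_1)$ as $\gammabot$-strings and $(\pi_1(w_1'))_\Sigma=u$. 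Since a synchronized solution $(w_0,w_1)$ requires $\pi_2(w_0)=\pi_2(w_1)$ in $\gammabot^*$ together with the common stripped first projection $(\pi_1(w_0))_\Sigma=(\pi_1(w_1))_\Sigma$, such a solution exists iff there is some $(w_0,w_1)\in\lang{\Autb}\times\lang{\Auta}$ satisfying the weaker condition $\pi_2(w_0)=\pi_2(w_1)$ and $(\pi_1(w_0))_\Sigma\subseteq(\pi_1(w_1))_\Sigma$.

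To encode this weaker condition, I would form the product NFA $\mathcal{P}$ over the triple alphabet $\sigmabot\times\sigmabot\times\gammabot$ with state set $Q_{\Autb}\times Q_{\Auta}$ and transitions $((p,q),(a,b,c),(p',q'))$ whenever $\Autb$ has $(p,(a,c),p')$ and $\Auta$ has $(q,(b,c),q')$. Its language is the length-preserving regular ternary relation
\[
T=\{(x,y,z)\in\sigmabot^*\times\sigmabot^*\times\gammabot^*\ :\ x\otimes z\in\lang{\Autb}\text{ and }y\otimes z\in\lang{\Auta}\}.
\]
Define $R$ as the image of $T$ under the transduction that forgets the third coordinate and strips $\bot$s from each of the remaining two coordinates (outputting each $\Sigma$-letter to its tape, or $\varepsilon$ for $\bot$). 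By Nivat's characterisation of rational relations as $\{(f(w),g(w))\ :\ w\in L\}$ for $L$ a regular language and $f,g$ alphabetic morphisms, $R\in\RAT_2$. Since $\mathcal{P}$ has size $O(|\Autb|\cdot|\Auta|)$ and the transduction is letter-by-letter, the whole construction is polynomial.

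Finally I would verify $R\cap{\subseq}\neq\emptyset$ iff $(\Autb,\Auta)$ admits a synchronized solution. For the forward direction, a pair $(s,t)\in R$ with $s\subseq t$ unpacks to witnesses $w_0\in\lang{\Autb}$ and $w_1\in\lang{\Auta}$ satisfying $\pi_2(w_0)=\pi_2(w_1)$, $(\pi_1(w_0))_\Sigma=s$, and $(\pi_1(w_1))_\Sigma=t$; the canonicalisation of the first paragraph then produces $w_1'\in\lang{\Auta}$ with $(\pi_1(w_1'))_\Sigma=s$ and $\pi_2(w_1')=\pi_2(w_1)$, making $(w_0,w_1')$ a synchronized solution. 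The converse is immediate, since a synchronized solution with common stripped first projection $s$ gives $(s,s)\in R$ and trivially $s\subseq s$. The main subtlety in the whole argument is the claim that $R$ is genuinely rational and polynomial-time constructible; this follows cleanly from the standard closure properties of rational relations under rational transductions of regular relations, so no substantial obstacle is expected.
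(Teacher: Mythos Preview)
Your proof is correct and follows essentially the same route as the paper: form the product of $\Autb$ and $\Auta$ synchronized on the $\gammabot$-component, project away that component and strip the $\bot$'s to obtain a rational relation over $\Sigma$, and use the subsequence-closed transition property of $\Auta$ to argue that equality of the stripped first projections may be relaxed to $\subseq$. Your invocation of Nivat's theorem to justify rationality of $R$ and your explicit replacement argument for $w_1\mapsto w_1'$ are exactly the ingredients the paper uses (somewhat more tersely, and with an index swap in its definition of $\mathcal A$).
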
 

\begin{proof}
  The problem of finding a synchronized solution for $\+A_0,\+A_1$ can
  be then formulated as the problem of finding words $v, u_0, u_1 \in
  \sigmabot^*$ with $|v|=|u_0|=|u_1|$, so that $(u_0 \otimes v, u_1
  \otimes v)$ is a solution.  We can compute an \NFA $\+A$ over
  $\sigmabot^2 \times \gammabot$ from $\+A_0,\+A_1$, such that $(u_0,u_1,v) \in
  \lang{\+A}$ if, and only if, $u_0 \otimes v \in \lang{\+A_1}$ and
  $u_1 \otimes v\in \lang{\+A_0}$.  Consider now an automaton $\+A'$
  over $\sigmabot^2$ such that $\lang{\+A'} = \set{(u_0,u_1) \mid
    \exists v \ (u_0,u_1,v) \in \lang{\+A} }$. It corresponds to the
  rational automaton of the projection onto the first and second
  components of the ternary relation of $\+A$, and it can be computed
  from $\+A$ in polynomial time. We then deduce that there exists $u_0
  \otimes u_1 \in \lang{\+A'}$ so that $(u_0)_\Sigma \subseq
  (u_1)_\Sigma$ if, and only if, there is $v \in \gammabot^*$ with
  $|v|=|u_0|=|u_1|$ so that $u_0 \otimes v \in \lang{\A_0}$ and $u_1
  \otimes v \in \lang{\A_1}$, where $(u_0)_\Sigma \subseq
  (u_1)_\Sigma$. But this condition is in fact equivalent to $R_0 \cap R_1 \neq
  \emptyset$ (where $R_i = \set{ ((u)_\Sigma, (v)_\Sigma) \mid u \otimes v \in \lang{\A_i}}$), since
  \begin{iteMize}{$\bullet$}
  \item if $((u_1)_\Sigma, (v)_\Sigma) \in R_1$ and $(u_0)_\Sigma
    \subseq (u_1)_\Sigma$, then $((u_0)_\Sigma, (v)_\Sigma) \in R_1$
    (since $R_1 \in \SCR$) and hence $((u_0)_\Sigma, (v)_\Sigma) \in
    R_0 \cap R_1$; and
  \item if $R_0 \cap R_1 \neq \emptyset$, then there exists a
    synchronized solution $(u_0 \otimes v, u_1 \otimes v)$ of
    $\+A_0,\+A_1$; in other words, there are $|v|=|u_0|=|u_1|$ so that $u_0 \otimes v \in \lang{\A_0}$, $u_1
    \otimes v \in \lang{\A_1}$, and $(u_0)_\Sigma = (u_1)_\Sigma$.
  \end{iteMize}
  We have thus reduced the problem to an instance of
 $\INTP \RAT {\mbox{$\subseq$}}$: whether there is $(u,v)$ in the relation denoted by $\+A'$ so that $u \subseq v$.
\end{proof}

\begin{proof}[Proof of Theorem~\ref{scr-thm}] The decidability part of
Theorem~\ref{scr-thm} follows as a corollary of Lemmas
\ref{lem:interpb->synchronizedinterpb} and
\ref{lem:LDC-syn-decidable}, and
Proposition~\ref{prop:subseq-rat-dec-nmr}. Of course the complexity is
non-multiply-recursive, since the problem subsumes $\INTP \REG
{\mbox{$\subseq$}}$ of Theorem \ref{subseq-dec}.  \end{proof}

Coming back to graph logics, we obtain:

\begin{cor}
\label{ecrpq-subseq}
The complexity of evaluation of \ecrpq($\subseq$) queries is not
bounded by any multiply-recursive function.
\end{cor}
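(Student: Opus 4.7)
The plan is to derive this corollary as an immediate consequence of Theorem~\ref{subseq-dec} together with Lemma~\ref{ecrpqs-lemma-two} (equivalently, the lower-bound half of Theorem~\ref{ecrpq-thm}). The key observation is that the reduction given in Lemma~\ref{ecrpqs-lemma-two} maps an instance $R\in\REG_2$ of the intersection problem $\INTP\REG{\subseq}$ to a fixed \ecrpq($\subseq$) query $\phi(x,x')$ together with a graph $G$ and nodes $v,v'$ computable in \dlog, such that $G\models\phi(v,v')$ iff $R\cap{\subseq}\neq\emptyset$. Thus any upper bound on the data complexity of \ecrpq($\subseq$) queries transfers to an upper bound on $\INTP\REG{\subseq}$ via an elementary (in fact \dlog) reduction.

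Now I would invoke the standard closure of the class of multiply-recursive functions under elementary reductions: if the data complexity of \ecrpq($\subseq$) were bounded by some multiply-recursive function $f$, then the composition of the \dlog\ reduction of Lemma~\ref{ecrpqs-lemma-two} with an $f$-time algorithm for query evaluation would yield a multiply-recursive algorithm deciding $\INTP\REG{\subseq}$, contradicting Theorem~\ref{subseq-dec}. Hence no such bound can exist.

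There is no real obstacle here; the only subtlety is making sure the reduction preserves the complexity level we care about. Since \dlog\ reductions are a fortiori elementary, and the fast-growing hierarchy level $\textup F_{\omega^\omega}$ that bounds $\INTP\REG{\subseq}$ from below (via the $\rpep$ encoding used in the proof of Theorem~\ref{subseq-dec}) is closed under elementary-recursive reductions, the transfer of the lower bound is automatic. The whole argument therefore amounts to one line: $\INTP\REG{\subseq}$ reduces in \dlog\ to evaluation of a fixed \ecrpq($\subseq$) query by Lemma~\ref{ecrpqs-lemma-two}, and the former is not multiply-recursive by Theorem~\ref{subseq-dec}, so neither is the latter.
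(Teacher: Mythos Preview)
Your proposal is correct and matches the paper's intended argument: the paper states this corollary without proof, treating it as immediate from Theorem~\ref{subseq-dec} together with the lower-bound transfer of Theorem~\ref{ecrpq-thm}(2) (equivalently Lemma~\ref{ecrpqs-lemma-two}), which is precisely the route you spell out.
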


Another corollary can be stated in purely language-theoretic terms.

\begin{cor}
\label{nonempty-rel-cor}
Let $\CC$ be a class of binary relations on $\sigmas$ that is closed
under intersection and contains $\REG$. Then the nonemptiness problem
for $\CC$ is:
\begin{iteMize}{$\bullet$}\itemsep=0pt
\item undecidable if $\subw$ or $\suff$ is in $\CC$;
\item non-multiply-recursive if $\subseq$ is in $\CC$.
\end{iteMize}
\end{cor}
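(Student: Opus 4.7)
\medskip
\noindent\textbf{Proof plan.} The plan is to reduce, for each relation $S \in \{\subw, \suff, \subseq\}$, the intersection problem $\INTP \REG S$ to the nonemptiness problem for $\CC$, and then invoke the lower bounds already established in Theorems \ref{suffix-thm} and \ref{subseq-dec}. The crucial observation is that the two hypotheses on $\CC$ --- containing $\REG$ and being closed under intersection --- are exactly what is needed to make this reduction trivial.

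First I would fix $S \in \{\subw, \suff, \subseq\}$ and assume $S \in \CC$. Given any input $R \in \REG$ to the intersection problem $\INTP \REG S$, note that $R \in \CC$ by the assumption $\REG \subseteq \CC$, and hence $R \cap S \in \CC$ by closure of $\CC$ under intersection. Moreover the representation of $R \cap S$ as an element of $\CC$ is effectively computable from $R$ and $S$: this is the one subtlety, namely that the definition of ``nonemptiness problem for $\CC$'' presupposes some representation of elements of $\CC$, under which intersections of $\REG$-inputs with the fixed relation $S$ can be formed effectively. In all natural cases (e.g.\ $\CC$ closed under rational operations and given by the corresponding expressions, or $\CC$ presented by some form of automaton model closed under product with NFAs), this is routine; one can simply take the representation of $R \cap S$ to be the pair $(R, S)$ together with the intersection operator.

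With this reduction in hand, an algorithm deciding nonemptiness for $\CC$ would decide $\INTP \REG S$, and any complexity lower bound for $\INTP \REG S$ transfers to the nonemptiness problem for $\CC$. Instantiating with $S = \subw$ and $S = \suff$, Theorem \ref{suffix-thm} yields undecidability; instantiating with $S = \subseq$, Theorem \ref{subseq-dec} yields the non-multiply-recursive lower bound.

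The main obstacle, such as it is, lies only in being precise about what ``nonemptiness problem for $\CC$'' means when $\CC$ is given abstractly: one must ensure that the representation scheme makes the intersection $R \cap S$ effectively constructible from the representation of $R \in \REG$ and the fixed $S \in \CC$. Once this is fixed --- and it is fixed essentially by the hypothesis that $\CC$ is presented in a way compatible with its closure under intersection --- the corollary follows immediately from the two theorems cited above, and no further combinatorial work is needed.
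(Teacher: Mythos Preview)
Your proposal is correct and matches the paper's intended argument: the corollary is stated without proof in the paper precisely because it follows immediately from Theorems~\ref{suffix-thm} and~\ref{subseq-dec} via the trivial reduction you describe (any $R\in\REG$ lies in $\CC$, so $R\cap S\in\CC$, and nonemptiness of $R\cap S$ is exactly $\INTP{\REG}{S}$). Your discussion of the representation subtlety is a reasonable caveat but not something the paper addresses, as it treats the result as an immediate consequence.
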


\subsection{Discussion} In addition to answering some basic
language-theoretic questions about the interaction of regular and
rational relations, and to providing the simplest yet problem with
non-multiply-recursive complexity, our results also rule out logical
languages for graph databases that freely combine regular relations
and some of the most commonly used rational relations, such as subword
and subsequence. With them, query evaluation becomes either
undecidable or non-multiply-recursive (which means that no realistic
algorithm will be able to solve the hard instances of this problem). 

This does not yet fully answer our questions about the evaluation of
queries in graph logics.
First, in the case of subsequence (or, more generally, $\SCR$
relations) we still do not know if query 
evaluation of \ecrpq s with such relations  is decidable (i.e., what
happens with $\GENINT_S(\REG)$ for such relations $S$). 

Even more importantly, we do not yet know what happens with the
complexity of \crpq s (i.e.,  $\GENINT_S(\REC)$) for various relations
$S$. These questions are answered in the next section.


\section{Restricted logics and the generalized intersection problem}
\label{restr:sec}
The previous section already ruled out some graph logics with rational
relations as either undecidable or decidable with extremely high
complexity. This was done merely by analyzing the intersection problem
for binary rational and regular relations. We now move to the study of 
the generalized intersection problem, and use it to analyze the
complexity of graph 
logics in full generality. We first deal with the generalization of
the decidable case ($\SCR$ relations), and then consider the problem
$\GENINT_S(\REC)$, corresponding to \crpq s extended with relations
$S$ on paths.

\subsection{Generalized intersection problem and subsequence}

We know that $\INTP {\REG} {\mbox{$\subseq$}}$ is decidable, although
not multiply-recursive. What about its generalized version? It turns
out it remains decidable.

\begin{thm}
\label{genintp-subseq-thm}
The problem $\GENINT_{\subseq}(\REG)$ is decidable. That is, there is
an algorithm that decides, for a given 
$m$-ary regular relation $R$ and $I \subseteq [m]^2$, whether $R
\cap_I \mbox{$\subseq$}\neq \emptyset$. 
\end{thm}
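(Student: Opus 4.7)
The plan is to reduce $\GENINT_\subseq(\REG)$ to the binary problem $\INTP\RAT\SCR$, which is decidable by Theorem~\ref{scr-thm}, exploiting that ${\subseq} \in \SCR$.

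Given an $m$-ary regular relation $R$ (presented as an NFA $\+A$ over $\sigmabot^m$) and $I \subseteq [m]^2$, I would first preprocess $I$: replace it by its reflexive and transitive closure (this is without loss of generality, since $\subseq$ is a preorder), and then handle strongly connected components as follows. If $i,j$ lie in the same SCC then the constraints $w_i \subseq w_j$ and $w_j \subseq w_i$ force $w_i=w_j$ by antisymmetry of $\subseq$ on words. Since equality of two coordinates is a regular condition, I can intersect $R$ with the regular relation $\set{\bar w : w_i = w_j}$ (remaining in $\REG_m$) and then identify the coordinates. After all such identifications, $I$ may be assumed to be a partial order on the remaining coordinates.

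The main step is then to build, from $\+A$ and the partial order $I = \set{(i_1,j_1),\dotsc,(i_k,j_k)}$, a binary rational relation $\hat R$ over a suitably enlarged alphabet $\Gamma$ such that
\[
R \cap_I {\subseq} \neq \emptyset \quad\Longleftrightarrow\quad \hat R \cap {\subseq} \neq \emptyset .
\]
Once $\hat R$ is in hand, Theorem~\ref{scr-thm} yields decidability at once. The construction extends, in spirit, the binary reductions of Theorems~\ref{subseq-dec} and~\ref{scr-thm} from a single subsequence constraint to the conjunction of $k$ such constraints.

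The technically delicate point, and the part I expect to require the most care, is ensuring that $\hat R$ is genuinely \emph{rational}. The tempting block-concatenation encoding $(u,v) = (w_{i_1}\#\dotsb\#w_{i_k},\ w_{j_1}\#\dotsb\#w_{j_k})$ does correctly encode the conjunction of constraints --- since the $\#$'s must align bijectively in any witness of $u \subseq v$, the problem splits into the $k$ individual subsequence constraints --- but this encoding is generally \emph{not} rational when $R$ is only regular, because an asynchronous two-tape transducer cannot re-read the $\otimes$-encoding of $\bar w$ to emit the blocks in succession. To sidestep this, I would instead use an interleaving that emits, at each letter of $\otimes\bar w$ processed by $\+A$, a constant-size block on each tape over disjoint sub-alphabets $\Sigma_1,\dotsc,\Sigma_k$ (one per constraint), together with synchronization markers whose total counts match on both tapes. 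The markers are chosen so that the only order-preserving embeddings $u \subseq v$ are those that decompose into independent per-track embeddings, recovering exactly the conjunction of the individual constraints $w_{i_l} \subseq w_{j_l}$. The resulting $\hat R$ is realized by a single synchronous pass of a two-tape transducer over $\+A$ and therefore lies in $\RAT$; Theorem~\ref{scr-thm} then concludes.
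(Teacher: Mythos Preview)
Your reduction has a genuine gap in the completeness direction. With disjoint per-constraint sub-alphabets $\Sigma_1,\dotsc,\Sigma_k$, soundness is immediate: any embedding $u\subseq v$ projects to an embedding on each $\Sigma_l$, so all $k$ constraints follow. But the converse --- that whenever all constraints $w_{i_l}\subseq w_{j_l}$ hold, \emph{some} encoding $(u,v)\in\hat R$ satisfies $u\subseq v$ --- is what you need, and it is exactly what your ``synchronization markers'' do not address (markers can only restrict embeddings, helping soundness, not completeness). A concrete failure of the constant-size-per-position interleaving: take $m=4$, $I=\{(1,2),(3,4)\}$, and the singleton $R=\{(a,\,bba,\,cd,\,cd)\}$. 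Both constraints hold. The per-position blocks on the first tape are $U_1=\{a^{(1)},c^{(2)}\}$, $U_2=\{d^{(2)}\}$, $U_3=\emptyset$, and on the second tape $V_1=\{b^{(1)},c^{(2)}\}$, $V_2=\{b^{(1)},d^{(2)}\}$, $V_3=\{a^{(1)}\}$. The only occurrence of $a^{(1)}$ in $v$ is in the last block $V_3$, so whichever way you order $U_1$, the letter $a^{(1)}$ in $u$ forces the embedding to jump to the end of $v$, leaving no room for $d^{(2)}$ (case $u=c^{(2)}a^{(1)}d^{(2)}$) or for $c^{(2)}d^{(2)}$ (case $u=a^{(1)}c^{(2)}d^{(2)}$). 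No per-block reordering of $u$ or $v$ helps. The obstruction is structural: the witnessing embeddings $\phi_1(1)=3$ and $\phi_2(1)=1,\ \phi_2(2)=2$ cross in a way that cannot be linearised when the block boundaries on both tapes are pinned to the positions of $\otimes\bar w$.

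The paper does not attempt such a reduction. It gives a direct decision procedure: it unfolds the NFA for $R$ into a finitely-branching search tree whose nodes additionally carry, for every $(i,j)\in I$, the ``residual'' $\alpha_{ij}=(w_i)_\Sigma\setminus(w_j)_\Sigma$ (the suffix of $w_i$ not yet matched in $w_j$). A node is final when the NFA state is accepting and all residuals are empty; a node is pruned when an ancestor has the same NFA state and componentwise $\subseq$-smaller residuals. Higman's Lemma combined with Dickson's Lemma makes the residual tuples a well-quasi-order, so every branch is finite, and correctness follows by a minimal-counterexample argument. If you want to salvage a reduction-style proof, you would need an encoding that tolerates arbitrary ``crossings'' among the per-constraint embeddings; this is already delicate for a single constraint (cf.\ the strict-codirect machinery in Proposition~\ref{prop:reducRPEP-inter-REG-subseq}) and is not known to extend to conjunctions.
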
  

\begin{proof} Let $k \in \N$, $I \subseteq [k] \times [k]$ and $R \in
\Regk{k}$ be an instance of the problem. Let us define $G = \set{(w_1,
\dotsc, w_k) \mid \forall (i,j) \in I, w_i \subseq w_j}$. We show how
to compute if $R \cap G$ is empty or not.
 Let $\+A = (Q,(\sigmabot)^k,q_0,\delta,F)$ be a NFA over
 $(\Sigma_\bot)^k$ corresponding to $R$, for simplicity we assume that
 it is complete. Remember that every $w \in \lang{\+A}$ is such that
 $\wproj_i(w)$ is in $\Sigma^*; \set \bot ^*$ for every $i \in [k]$.

Given $u, v \in \Sigma^*$, we define $u \setminus v$ as $u[i..]$,
where $i$ is the maximal index such that $u[1..i-1] \subseq v$.
In other words, $u \setminus v$ is the result of removing from $u$ the maximal prefix that is a subsequence of $v$.

\newcommand{\tree}{\mathbf{t}}
We define a finite tree $\tree$ whose every node is labeled with
\begin{iteMize}{$\bullet$}
  \item a depth $n \geq 0$,
  \item $k$ words $w_1, \dotsc, w_k \in\sigmabot^n$,
  \item for every $(i,j) \in I$, a word $\alpha_{ij} \in \Sigma^*$, and
  \item a state $q \in Q$.
\end{iteMize}

For a node $x$ we denote these labels by $x.n$, $x.w_1, \dotsc,
x.w_k$, $x.\alpha_{ij}$ for every $(i,j) \in I$ and $x.q$
respectively. The tree is such that the following conditions are met.
\begin{iteMize}{$\bullet$} \item The root is labeled by $x.n=0$, $x.w_1 = \dotsb
= x.w_k = \e$, for very $(i,j) \in I$, $x.\alpha_{ij} =
\e$, and $x.q = q_0$.  \item A node $x$ has a child $y$ in
$\tree$ if and only if
  \begin{iteMize}{$-$}
  \item $y.n = x.n+1$,
  \item $x.w_i = y.w_i[1..y.x-1]$ for every $i \in [k]$,
  \item there is a transition $(x.q, \bar a , y.q) \in \delta$ with
  $\bar a = (y.w_i[y.n])_{i \in [k]}$, and
  \item $y.\alpha_{ij} = (w_i)_\Sigma \setminus (w_j)_\Sigma$ for every $(i,j) \in I$.
  \end{iteMize} \item A node $x$ is a leaf in $\tree$ if and only if
  is final or saturated (as defined below).  \end{iteMize}

A node $x$ is \textbf{final} if $x.q \in F$ and $x.\alpha_{ij} =
\e$ for all $(i,j) \in I$.  It is \textbf{saturated} if it is
not final and there is an ancestor $y \neq x$ such that $y.q = x.q$
and $y.\alpha_{ij} \subseq x.\alpha_{ij}$ for all $(i,j) \in I$.

\begin{lem} \label{lem:kary-tree-finite} The tree $\tree$ is finite
and computable.  \end{lem} 

\begin{proof}
The root is obviously computable, and for every branch, one can compute the list of children nodes of the bottom-most node of the branch. Indeed these are finite and bounded.
The tree $\tree$ cannot have an infinite branch. If there was an infinite branch, then as a result of Higman's Lemma \cum Dickson's Lemma (and the Pigeonhole principle) there would be two nodes $x\neq y$, where $x$ is an ancestor of $y$, $x.q = y.q$, and for all $(i,j) \in I$, $x.\alpha_{ij} \subseq y.\alpha_{ij}$. Therefore, $y$ is saturated and it does not have children, contradicting the fact that $x$ and $y$ are in an infinite branch of $\tree$. Since all the branches are finite and the children of any node are finite, by  K{\H o}nig's Lemma, $\tree$ is finite, and computable.
\end{proof}
\begin{lem}\label{lem:finite-nonempt}
  If $\tree$ has a final node, $R \cap G \neq \emptyset$.
\end{lem}
\begin{proof}\newcommand{\run}{\rho}
  If a leaf $x$ is final, consider all the $x.n$ ancestors of $x$:
  $x_0, \dotsc, x_{n-1}$, such that $x_i.n = i$ for every $i \in
  [n-1]$. Consider the run $\run : [0..x.n] \to Q$ defined as
  $\run(x.n) = x.q$ and $\run(i) = x_i.q$ for $i < x.n$. It is easy to
  see that $\run$ is an accepting run of $\+A$ on $x.w_1 \otimes
  \dotsc \otimes x.w_k$ and therefore that $((x.w_1)_\Sigma, \dotsc,
  (x.w_k)_\Sigma) \in R$. On the other hand, for every $(i,j) \in I$,
  $(x.w_i)_\Sigma \subseq (x.w_j)_\Sigma$ since $\alpha_{ij} =
  \e$. Hence, $((x.w_1)_\Sigma, \dotsc, (x.w_k)_\Sigma) \in G$
  and thus $R \cap G \neq \emptyset$.  \end{proof}
  
\begin{lem}\label{lem:allsat-empt}
  If all the leaves of $\tree$ are saturated, $R \cap G = \emptyset$.
\end{lem}

\begin{proof} \newcommand{\run}{\rho} By means of contradiction
suppose that there is $w = w_1 \otimes \dotsb \otimes w_k \in
(\sigmabot^k)^*$ such $w \in \lang{\+A}$ through an accepting run
$\run : [0..n] \to Q$, and for every $(i,j) \in I$, $(w_i)_\Sigma
\subseq (w_j)_\Sigma$. Let $|w| = n$ be of minimal size.

By construction of $\tree$, the following claims follow.
\begin{claim}
  There is a maximal branch $x_0, \dotsc, x_m$ in $\tree$ such that
  $x_\ell.n = \ell$, $x_\ell.w_j = w_j[1..\ell]$, $x_\ell.q =
  \run(\ell)$ for every $\ell \in [0..m]$ and $j \in [k]$.
  \end{claim}

\begin{claim}\label{cl:subseq-alphaij}
For every $\ell \in [0..m]$ and $(i,j) \in I$,
  \begin{align}
    x_\ell.\alpha_{ij} \cdot (w_i[\ell+1 ..])_\Sigma
    &\subseq  (w_j[\ell+1 ..])_\Sigma ,\label{cl:subseq-alphaij:1}
\\
    (w_i[1.. \ell-|x_\ell.\alpha_{ij}|])_\Sigma &\subseq
    (w_j[1.. \ell])_\Sigma .
\label{cl:subseq-alphaij:2}
  \end{align}
\end{claim}

Since we assume that all the leaves of $\tree$ are saturated, in
particular $x_m$ is saturated and there must be some $m' < m$ such
that $x_m$ and $x_{m'}$ verify the saturation conditions.

Consider the following word.  \[ w' = w[1..m'] \cdot w[m+1..]  \] The
run $\run$ trimmed with the positions $[m'+1 .. m]$ is still an
accepting run on $w'$ (since $\run(m')=\run(m)$), and therefore $((\wproj_1(w'))_\Sigma, \dotsc,
(\wproj_k(w'))_\Sigma) \in R$.

For an arbitrary $(i,j) \in I$, we show that $(\wproj_i(w'))_\Sigma
\subseq (\wproj_j(w'))_\Sigma$. First, note that by
\eqref{cl:subseq-alphaij:2} we have that %
\begin{align*}
  (\wproj_i(w')[1..m' - |x_{m'}.\alpha_{ij}|])_\Sigma &=(w_i[1..m' -
  |x_{m'}.\alpha_{ij}|])_\Sigma \\ &\subseq (w_j[1..m'])_\Sigma
  \tag{by \eqref{cl:subseq-alphaij:2}}\\
  &=(\wproj_j(w')[1..m'])_\Sigma .  \end{align*}
Since $x_{m'}$ and $x_{m}$ verify the saturation conditions,
$x_{m'}.\alpha_{ij} \subseq x_m.\alpha_{ij}$. Therefore,
\begin{align*}
  (\wproj_i(w')[m' - |x_{m'}.\alpha_{ij}|+1 ..])_\Sigma &=
  (\wproj_i(w')[m' - |x_{m'}.\alpha_{ij}|+1 .. m'])_\Sigma \cdot
  (\wproj_i(w')[m' +1..])_\Sigma\\ &=x_{m'}.\alpha_{ij} \cdot (w_i[m
  +1..])_\Sigma\\ &\subseq x_m.\alpha_{ij} \cdot (w_i[m +1..])_\Sigma
  \tag{since $x_{m'}.\alpha_{ij} \subseq x_m.\alpha_{ij}$}\\ &\subseq
  (w_j[m+1 ..])_\Sigma \tag{by
  \eqref{cl:subseq-alphaij:1}}\\
  &=(\wproj_j(w')[m'+1..])_\Sigma \end{align*} Hence, we showed that
  there are some $\ell, \ell'$ such that
  $(\wproj_i(w')[1..\ell])_\Sigma \subseq
  (\wproj_j(w')[1..\ell'])_\Sigma$ and
  $(\wproj_i(w')[\ell+1..])_\Sigma \subseq
  (\wproj_j(w')[\ell'+1..])_\Sigma$, for $\ell = m' -
  |x_{m'}.\alpha_{ij}|$ and $\ell' = m'$. Thus, $(\wproj_i(w'))_\Sigma
  \subseq (\wproj_j(w'))_\Sigma$.

This means that $((\wproj_1(w'))_\Sigma, \dotsc,
(\wproj_k(w'))_\Sigma) \in G$ and thus $((\wproj_1(w'))_\Sigma,
\dotsc, (\wproj_k(w'))_\Sigma) \in R \cap G$. But this cannot be since
$|w'| < |w|$ and $w$ is of minimal length. The contradiction arises
from the assumption that $R \cap G \neq \emptyset$.  Then, $R \cap G =
\emptyset$.  \end{proof}

Hence, by Lemmas~\ref{lem:kary-tree-finite}, \ref{lem:finite-nonempt}
and \ref{lem:allsat-empt}, $R \cap G \neq \emptyset$ if and only if
$\tree$ has a final node, which is computable.  
\end{proof}

\smallskip 

\begin{cor}
The query evaluation problem for \ecrpq($\subseq$) queries is
decidable. 
\end{cor}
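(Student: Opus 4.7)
The plan is to combine Theorem~\ref{ecrpq-thm} with the decidability part of the just-established Theorem~\ref{genintp-subseq-thm}. Taking $\K$ to be the (trivially \dlog-closed) class of decidable problems, part (1) of Theorem~\ref{ecrpq-thm} yields decidability of the \emph{data complexity} of \ecrpq($\subseq$) as an immediate corollary: given a fixed query $\phi$, Lemma~\ref{ecrpqs-lemma-one} reduces evaluation to an instance of $\GENINT_{\subseq}(\REG)$, which is decidable by Theorem~\ref{genintp-subseq-thm}.

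For \emph{combined} complexity (where both the query and the graph are part of the input, which I expect is the intended reading of the corollary), I would revisit the construction of Lemma~\ref{ecrpqs-lemma-one} and observe that it produces $(R, I)$ from $(G, \bar v, \phi)$ in a purely effective way, even when $\phi$ varies: one enumerates all tuples $\bar b$ of nodes of $G$ of the length of $\bar y$, forms the recognizable relation $R_{\bar b}$ as a product of the automata $\A(G, n_i, n_i')$ with the fixed automata for the $L_i$, then intersects with the explicitly given regular atoms $R_j(\bar\chi_j)$, and finally takes the union. No step in this construction uses the fact that $\phi$ is fixed except for the complexity bookkeeping; computability is unaffected. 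The extracted set $I \subseteq [m]^2$ is read off directly from the comparison atoms $S(\chi_i, \chi_j)$ of $\phi$.

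Having obtained $(R, I)$, one then invokes the decision procedure from Theorem~\ref{genintp-subseq-thm} to test whether $R \cap_I \subseq \ \neq \emptyset$, which by construction is equivalent to $G \models \phi(\bar v)$. Both steps are effective, so the overall combined evaluation problem is decidable. There is no real obstacle here, since all of the work has already been done: the only point worth checking is the mild observation that the proof of Lemma~\ref{ecrpqs-lemma-one} gives a computable (rather than merely \dlog) reduction when $\phi$ is part of the input. I would not attempt to give any complexity bound: Theorem~\ref{subseq-dec} together with Lemma~\ref{ecrpqs-lemma-two} already shows that the complexity is non-multiply-recursive, so decidability is the strongest positive statement one can make.
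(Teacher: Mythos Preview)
Your proposal is correct and matches the paper's approach: the corollary is stated immediately after Theorem~\ref{genintp-subseq-thm} with no explicit proof, relying implicitly on the reduction of Lemma~\ref{ecrpqs-lemma-one} (equivalently, Theorem~\ref{ecrpq-thm}) together with the decidability of $\GENINT_{\subseq}(\REG)$. Your additional care in distinguishing data from combined complexity, and in noting that the reduction of Lemma~\ref{ecrpqs-lemma-one} remains effective when $\phi$ is part of the input, is a point the paper simply glosses over.
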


Of course the complexity is extremely high as we already know from 
Corollary \ref{ecrpq-subseq}.

\smallskip 

Note that while the intersection problem of $\subseq$ with rational
relations is decidable, as is  $\GENINT_{\subseq}(\REG)$, we lose the
decidability of $\GENINT_{\subseq}(\RAT)$ even in the simplest cases
that go beyond the intersection problem (that is, for ternary
relations in $\RAT$ and any $I$ that does not force two words to be
the same).

\begin{prop}
\label{no-ternary-lemma}
The problem $\INTPI \RAT {\mbox{$\subseq$}} I$ is undecidable even over
ternary relations when $I$ is one of the following: 
  \begin{enumerate}[\em(1)]
  \item\label{lem:k-ary:undec:1}
    $\set{(1,2), (2,3)}$, 
  \item\label{lem:k-ary:undec:2}
    $\set{(1,2), (1,3)}$, or
  \item\label{lem:k-ary:undec:3}
    $\set{(1,2), (3,2)}$.
  \end{enumerate}
\end{prop}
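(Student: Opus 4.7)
My plan is to reduce from the Post Correspondence Problem (PCP): given morphisms $f,g\colon \Sigma^*\to\Gamma^*$, decide whether some $w\in\Sigma^+$ satisfies $f(w)=g(w)$. This is undecidable. The common tool across all three cases is the antisymmetry of ${\subseq}$: $u\subseq v$ and $v\subseq u$ jointly imply $u=v$, so two opposite subsequence constraints can encode the equality $f(w)=g(w)$.

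For case (1), $I=\{(1,2),(2,3)\}$, I would take $R_1 := \{(g(w), f(w), g(w)) : w\in\Sigma^+\}$, which is rational as the image of the regular language $\Sigma^+$ under the monoid morphism $w\mapsto (g(w), f(w), g(w))$ valued in the product monoid $(\Gamma^*)^3$. A triple $(w_1,w_2,w_3)\in R_1$ satisfying $w_1\subseq w_2$ and $w_2\subseq w_3$ forces $g(w)\subseq f(w)$ and $f(w)\subseq g(w)$, hence $f(w)=g(w)$ by antisymmetry; conversely any PCP solution yields such a triple. So $R_1 \cap_I {\subseq}\neq\emptyset$ iff the PCP instance has a solution.

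For case (2), $I=\{(1,2),(1,3)\}$, I would construct a rational $R_2$ whose triples have the shape $(f(w)\$g(w),\, f(w)\$f(w),\, g(w)\$g(w))$ for some $w\in\Sigma^+$ and a fresh separator $\$$. Since each coordinate contains a unique $\$$, any subsequence embedding must align these markers: $w_1\subseq w_2$ then splits into the trivial $f(w)\subseq f(w)$ together with $g(w)\subseq f(w)$, while $w_1\subseq w_3$ splits into $f(w)\subseq g(w)$ together with the trivial $g(w)\subseq g(w)$; antisymmetry again forces $f(w)=g(w)$. Case (3), $I=\{(1,2),(3,2)\}$, is treated symmetrically with triples of the form $(f(w)\$f(w),\, f(w)\$g(w),\, g(w)\$g(w))$, so that $w_1\subseq w_2$ and $w_3\subseq w_2$ deliver the two opposite inclusions.

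The main obstacle is proving rationality of the target relation in cases (2) and (3): the naive definition $\{(f(w)\$g(w),\, f(w)\$f(w),\, g(w)\$g(w)) : w\in\Sigma^+\}$ is not a direct morphic image of a regular language, because a monoid morphism factors through concatenation and so cannot replicate the same $w$ on both sides of the separator. I expect to circumvent this by introducing disjoint copies $\Gamma_1$ and $\Gamma_2$ of the output alphabet, one to be used before $\$$ and the other after, so that the relation decomposes as the concatenation of two pieces, each a direct morphic image of $\Sigma^+$ and hence in $\RAT$; the $\Gamma_i$-projections of the subsequence constraints still decouple into the correct pair of PCP-like inclusions, yielding $f(w)=g(w)$ as required.
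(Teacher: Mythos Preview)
Your case~(1) is correct and matches the paper's argument. The gap is in cases~(2) and~(3), and it lies precisely in your proposed fix for rationality.

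Once you split the relation as a concatenation of two rational pieces over disjoint alphabets $\Gamma_1,\Gamma_2$, the word used in the first piece and the word used in the second piece become \emph{independent}: a triple now has the form
\[
\bigl(f_1(w)\,\$\,g_2(w'),\; f_1(w)\,\$\,f_2(w'),\; g_1(w)\,\$\,g_2(w')\bigr)
\]
for \emph{two} words $w,w'\in\Sigma^+$. The constraints $w_1\subseq w_2$ and $w_1\subseq w_3$ then give $g(w')\subseq f(w')$ and $f(w)\subseq g(w)$ for possibly different $w,w'$, and antisymmetry no longer applies. The resulting target problem --- ``do there exist $w,w'$ with $f(w)\subseq g(w)$ and $g(w')\subseq f(w')$?'' --- is a conjunction of two Post \emph{Embedding} Problems, which are decidable (this is exactly the $\rpep$ problem discussed earlier in the paper). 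So the ``if'' direction of your reduction fails: nonemptiness of the constructed instance does not imply a PCP solution.

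The paper avoids this by not separating into two halves but by \emph{interleaving}. Working with PCP lists $(u_i),(v_i)$ (reduced so that $|u_i|,|v_i|\le 1$) and a hatted copy $\hat\Sigma$ of the alphabet, it takes $x=u_{i_1}\hat v_{i_1}\cdots u_{i_m}\hat v_{i_m}$, so that a \emph{single} morphism applied to $i_1\cdots i_m$ produces both the $u$-data (unhatted) and the $v$-data (hatted) in one string. The other two components allow arbitrary padding in the complementary alphabet: $y$ has unhatted padding around the $\hat u_{i_j}$'s, $z$ has hatted padding around the $v_{i_j}$'s. Then $x\subseq y$ projects on $\hat\Sigma$ to $v_{i_1}\cdots v_{i_m}\subseq u_{i_1}\cdots u_{i_m}$, and $x\subseq z$ projects on $\Sigma$ to the reverse inclusion --- for the \emph{same} index sequence, because both come from the single $x$. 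Case~(3) is symmetric. The point is that rationality is preserved not by concatenation but by interleaving plus existential padding in the non-first components.
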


\begin{proof}
The three proofs use a reduction from the PCP problem. Recall that this
is defined as follows. The input are two equally long lists $u_1,u_2,\dots,u_n$
and $v_1,v_2,\dots,v_n$ of strings over alphabet $\Sigma$. The PCP
problems asks 
whether there exists a solution for this input, that
is, a sequence of indices $i_1,i_2,\dots,i_k$ such that $1 \leq i_j
\leq n$ ($1 \leq j \leq k$) and $u_{i_1} u_{i_2} \cdots u_{i_k} =
v_{i_1} v_{i_2} \cdots v_{i_k}$.

\smallskip 

\noindent 
\eqref{lem:k-ary:undec:1} $\set{(1,2), (2,3)}$: 
  The proof goes by reduction from an arbitrary PCP instance given by
lists $u_1,\dots,u_n$ and $v_1,
  \dots,v_n$ of strings over alphabet $\Sigma$. The following relation \[ R = \set{(u_{i_1}
  \dotsb u_{i_m},v_{i_1} \dotsb v_{i_m},u_{i_1} \dotsb u_{i_m}) \mid
  m\in \N \text{ and } i_1, \dotsc, i_m \in [n]} \] is rational and $R
  \cap \set{(x,y,z) \mid x \subseq y \subseq z}$ is non-empty if and
  only if the instance has a solution.

\smallskip 
\noindent
\eqref{lem:k-ary:undec:2} $\set{(1,2), (1,3)}$: 
The proof again goes by reduction from an arbitrary PCP instance given by
lists $u_1,\dots,u_n$ and $v_1,
  \dots,v_n$ of strings over alphabet $\Sigma$. For simplicity, and without any loss of generality, we
  assume that $|u_i|, |v_i| \leq 1$ for every $i$. Let $\hat \Sigma =
  \set{ \hat a \mid a \in \Sigma}$, and for every $w = a_1 \dotsb
  a_\ell \in \Sigma^*$, let $\hat w = \hat a_1 \dotsb \hat
  a_\ell$. Consider \begin{align*}
  R &= \{ (x,y,z) \mid m\in \N \text{, } i_1, \dotsc, i_m \in [n]
  \text{, } w_1, w'_1, \dotsc, w_{m+1}, w'_{m+1}\in \Sigma^*\text{,
  }\\ &\hspace{14ex} x = u_{i_1} \hat v_{i_1} u_{i_2} \hat v_{i_2}
    \dotsb u_{i_m} \hat v_{i_m}, \\
&\hspace{14ex} y = w'_1 \hat u_{i_1} w'_2 \dotsb w'_{m}
    \hat u_{i_m} w'_{m+1}, \\ &\hspace{14ex} z = \hat w_1 v_{i_1} \hat
    w_2 \dotsb \hat w_{m} v_{i_m} \hat w_{m+1} \} \end{align*} which
    is a rational relation. Note that there is some $(x,y,z) \in R$
    with $x \subseq y$ if and only if there is some $v_{i_1} \dotsb
    v_{i_m} \subseq u_{i_1} \dotsb u_{i_m}$. Similarly for $x \subseq
    z$. Therefore, there is $(x,y,z) \in R$ with $x \subseq y$, $x
    \subseq z$ if and only if $v_{i_1} \dotsb v_{i_m} = u_{i_1} \dotsb
    u_{i_m}$ for some choice of $i_1, \dotsc, i_m$.

\smallskip \noindent \eqref{lem:k-ary:undec:3} $\set{(1,2), (3,2)}$:
This is similar to \eqref{lem:k-ary:undec:2}, but this time we
consider the following rational relation.  \begin{align*}
  R &= \{ (x,y,z) \mid m\in \N \text{, } i_1, \dotsc, i_m \in [n]
  \text{, } w_1, w'_1, \dotsc, w_{m+1}, w'_{m+1}\in \Sigma^*\text{,
  }\\ &\hspace{14ex} y = u_{i_1} \hat v_{i_1} u_{i_2} \hat v_{i_2}
    \dotsb u_{i_m} \hat v_{i_m}, \\
&\hspace{14ex} x = w'_1 \hat u_{i_1} w'_2 \dotsb w'_{m}
    \hat u_{i_m} w'_{m+1}, \\
&\hspace{14ex} z = \hat w_1 v_{i_1} \hat w_2 
    \dotsb \hat w_m  v_{i_m} \hat w_{m+1}
 \}
\end{align*}
Analogously as before, there is $(x,y,z) \in R$ with $x \subseq y$, $z \subseq y$ if and only if the PCP instance has a solution.
\end{proof}

\OMIT{
We saw that the decidability of the intersection problem extended from
$\subseq$ to the class of $\SCR$ relations. Does the same hold for the
generalized intersection problem? The answer is negative.

\begin{prop}
\label{genint-scr-prop}
There exists a binary relation $S\in\SCR$ such that
$\GENINT_{S}(\REG)$ is undecidable. In particular, the query
evaluation problem for \ecrpq($S$) is undecidable.
\end{prop}
}

\subsection{Generalized intersection problem for recognizable
  relations} 

We now consider the problem of answering \crpq s with rational
relations $S$, or, equivalently, the problem $\GENINT_S(\REC)$. Recall
that an instance of such a problem consists of an $m$-ary
recognizable relation $R$ and a set $I\subseteq [m]^2$. The question
is whether $R \cap_I S\neq \emptyset$, i.e., whether there exists a
tuple $(w_1,\ldots,w_m)\in R$ so that $(w_i,w_j)\in S$ whenever
$(i,j)\in I$. It turns out that the decidability of this problem
hinges on the graph-theoretic properties of $I$. In fact we shall
present a \emph{dichotomy result}, classifying problems $\GENINT_S(\REC)$
into \PSPACE-complete and undecidable depending on the structure of
$I$. 

Before stating the result, we need to decide how to represent a
recognizable relation $R$. Recall that an $m$-ary $R\in\REC$ is a
union of relations of the form $L_1\times\ldots\times L_m$, where each
$L_i$ is a regular language. Hence, as the representation of $R$ we
take the set of all such $L_i$s involved, and as the measure of its
complexity, the total size of NFAs defining the $L_i$s. 

With a set $I\subseteq [m]^2$ we associate an {\em undirected} graph $G_I$
whose nodes are $1,\ldots,m$ and whose edges are $\{i,j\}$ such that
either $(i,j)\in I$ or $(j,i)\in I$. We call an instance of
$\INTPI \REC S I$ {\em acyclic} if $G_I$ is an acyclic graph. 

Now we can state the dichotomy result.

\begin{thm}
\label{acyclic-thm}
\hfill
\begin{iteMize}{$\bullet$} 
\itemsep=0pt
\item Let $S$ be a binary rational relation. Then acyclic instances of
  $\GENINT_S(\REC)$ are decidable in \pspace. Moreover, there is a fixed
  binary relation $S_0$ such that the problem $\INTPI \REC {S_0} I$
  is \pspace-complete. 
\item For every $I$ such that $G_I$ is not acyclic, there exists a
 binary rational relation $S$ such that the problem 
$\INTPI \REC S I$ is undecidable.
\end{iteMize}
\end{thm}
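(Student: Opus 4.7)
My plan is to prove the dichotomy as two independent halves.

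For the \emph{acyclic upper bound}, I assume $G_I$ is a forest and process components separately; so let $G_I$ be a tree, rooted arbitrarily. Writing $R = \bigcup_j \prod_{i=1}^m L_{j,i}$, I try each disjunct $j$ in turn. The central fact is that for any regular $L$ and rational $S$, the set $S^{-1}[L] = \{w : \exists w' \in L,\ (w,w') \in S\}$ is the first projection of the rational relation $S \cap (\Sigma^* \times L)$, hence regular. I exploit this bottom-up: for each node $v$ of the tree, define an NFA $A_v$ accepting those $w$ that can be extended to a satisfying assignment on the subtree rooted at $v$. At a leaf, $A_v$ recognises $L_{j,v}$; at an internal node $v$ with children $c_1,\dots,c_r$, $A_v$ recognises $L_{j,v}$ intersected with, for each child $c_\ell$, the pre-image of $L(A_{c_\ell})$ under $S$ in the direction determined by the orientation of the edge in $I$. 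The instance is satisfiable iff $L(A_{\text{root}}) \neq \emptyset$. Materialising these NFAs causes exponential blowup with depth, but a standard on-the-fly algorithm reduces the question to checking nonemptiness of a product of polynomially many polynomial-sized automata (one per constraint edge and one per variable), which is in \PSPACE.

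For \PSPACE-hardness with a fixed $S_0$, take $S_0$ to be the identity relation $\{(w,w) : w \in \Sigma^*\}$, which is rational. I reduce from nonemptiness of intersection of regular languages $L_1,\dots,L_m$ (\PSPACE-complete): put $R = L_1 \times \dots \times L_m \in \REC_m$ and $I = \{(1,2),(2,3),\dots,(m{-}1,m)\}$, so $G_I$ is a path and hence acyclic. The chained equality constraints force $w_1 = \cdots = w_m$, so $R \cap_I S_0 \neq \emptyset$ iff $\bigcap_i L_i \neq \emptyset$.

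For the \emph{cyclic undecidability}, assume $G_I$ contains a simple cycle $c_1, c_2, \dots, c_k, c_1$ with $k \geq 3$ (a self-loop $(i,i) \in I$, if admitted, is handled directly by taking $S$ to be the rational encoding of a PCP instance, so that $(w,w) \in S$ iff $w$ witnesses a PCP solution). Given a PCP instance $(\alpha_i, \beta_i)_{i=1}^n$ over $\Gamma$, I work over the extended alphabet $\Gamma \cup [n] \cup \{a_1,\dots,a_k\}$, where $a_1,\dots,a_k$ are fresh tag letters. The recognizable relation $R$ forces $w_{c_j} \in a_j (\Gamma\cup[n])^+$ for each $j \in [k]$ and imposes no constraint on variables outside the cycle. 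The rational $S$ is a union of tag-guarded pieces, arranged so that exactly one piece is triggered on each cycle edge:
\begin{iteMize}{$\bullet$}
\item on $\{c_1,c_2\}$, the $\alpha$-expansion $\{(a_1 i_1\cdots i_t,\ a_2 \alpha_{i_1}\cdots \alpha_{i_t}) : t \geq 1,\ i_l \in [n]\}$;
\item on each intermediate edge $\{c_j,c_{j+1}\}$ with $2 \leq j \leq k-1$, the equality-up-to-tag relation $\{(a_j w, a_{j+1} w) : w \in \Gamma^*\}$;
\item on $\{c_k,c_1\}$, the $\beta$-expansion $\{(a_k \beta_{i_1}\cdots \beta_{i_t},\ a_1 i_1\cdots i_t) : t \geq 1,\ i_l \in [n]\}$.
\end{iteMize}
Each piece is rational, and if an edge of the cycle has in $I$ the reverse orientation to the one above, I include its transpose too. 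Tracing the cycle: $w_{c_1} = a_1 i_1\cdots i_t$ for some indices, $w_{c_2} = a_2 \alpha_{i_1}\cdots \alpha_{i_t}$, equality propagates to $w_{c_j} = a_j \alpha_{i_1}\cdots \alpha_{i_t}$ for $2 \leq j \leq k$, and the closing edge forces $w_{c_k} = a_k \beta_{i_1}\cdots \beta_{i_t}$. Hence $R \cap_I S \neq \emptyset$ iff $\alpha_{i_1}\cdots \alpha_{i_t} = \beta_{i_1}\cdots \beta_{i_t}$ for some choice of indices, i.e.\ iff the PCP instance has a solution.

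The main obstacle is the \PSPACE upper bound for the acyclic case: I must avoid materialising the potentially exponential automata $A_v$ and instead carry out an on-the-fly nonemptiness check on a product of polynomially many polynomial-sized automata, establishing a polynomial bound on the minimal witness length along the way. The cyclic undecidability becomes routine once the tag-based encoding is in hand, and generalising from the triangle to arbitrary cycle length $k$ is handled cleanly by the equality-up-to-tag pieces.
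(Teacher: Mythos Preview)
Your upper-bound argument takes a genuinely different route from the paper. The paper builds the $m$-tape product automaton $\A(R,S,I)$ directly (states are tuples of one $\NN_{i_j}$-state per variable and one $S$-state per edge) and then spends most of the proof on a ``no-deadlock'' argument: given witnessing words and accepting runs of each component, acyclicity of $G_I$ guarantees one can always find a tape head to advance, so the product run never gets stuck. Your bottom-up elimination sidesteps this entirely---correctness of $A_{\text{root}}$ is immediate by induction, using only that $S^{-1}[L]$ is regular. This is cleaner for correctness. The complexity side is where you hand-wave: the state of $A_{\text{root}}$ is indeed a tuple of polynomially many component states (one per node, one per edge), and nonemptiness is reachability in that state graph, so on-the-fly NPSPACE works. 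But your remark about ``establishing a polynomial bound on the minimal witness length'' is wrong and should be dropped: with $S=\{(a^n,a^{2n}):n\geq 0\}$ on a path of length $m$, the shortest witness has $|w_m|\geq 2^{m-1}$. The \pspace\ bound comes from polynomial-size \emph{state descriptions}, not polynomial witnesses.

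There is a real gap in your undecidability argument. The theorem asks, for each fixed cyclic $I$, for a \emph{single fixed} rational $S$ such that the problem with input $R\in\REC$ is undecidable. Your $S$ has pieces like the $\alpha$-expansion $\{(a_1 i_1\cdots i_t,\,a_2\alpha_{i_1}\cdots\alpha_{i_t})\}$, which depends on the PCP instance---so you have only shown that the problem with input $(R,S)$ is undecidable, a strictly weaker statement. The fix is standard: push the PCP morphisms into $R$ and make $S$ a fixed projection-equality relation. For a triangle, take a fixed alphabet $\Gamma\cup\{0,1\}\cup\{a_1,a_2,a_3\}$ with indices encoded in $\{0,1\}^*$; let $R$ force $w_{c_1}\in a_1\cdot(\bigcup_i\,\mathrm{enc}(i)\,\alpha_i)^+$ and $w_{c_2}\in a_2\cdot(\bigcup_i\,\mathrm{enc}(i)\,\beta_i)^+$ (regular, instance-dependent), and let the fixed $S$ enforce ``same $\{0,1\}$-projection'' on $\{c_1,c_2\}$ and ``same $\Gamma$-projection'' on $\{c_1,c_3\}$ and $\{c_2,c_3\}$ (all rational, tag-guarded). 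For longer cycles pad with equality-up-to-tag edges as you already do. Your self-loop case has the same defect and the same cure.
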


\newcommand{\NN}{{\cal N}}
\newcommand{\M}{{\cal M}}  

\begin{proof} 
For \pspace-hardness we can do an easy reduction from
nonemptiness of the intersection of $m$ given NFA's, which is known to be
\pspace-complete \cite{kozen77}. 
Given $m$ NFAs $\A_1,\dots,\A_m$, define the (acyclic) relation $I = \{(i,i+1)
\mid 1 \leq i < m\}$. Then $\bigcap_i \LL(\A_i)$ 
is nonempty if and only if $\prod_i \LL(\A_i) \cap_I S_0 \neq \emptyset$, where 
$S_0$ is the regular relation $\{(w,w)\ | \ w \in \Sigma^*\}$.

\medskip 

For the upper bound, we use the following idea: First we show how to construct, in
exponential time, the following for 
each $m$-ary recognizable relation $R$, binary
rational relation $S$ and acyclic $I \subseteq [m]^2$: An
$m$-tape automaton $\A(R,S,I)$ that accepts precisely those $\bar w =
(w_1,\dots,w_m) \in (\Sigma^*)^m$ such that $\bar w \in R$ and
$(w_i,w_j) \in S$, for each $(i,j) \in I$. Intuitively, $\A(R,S,I)$
represents the ``synchronization" of the transducer that accepts $R$
with a copy of the 2-tape automaton that recognizes $S$ over each projection
defined by the pairs in $I$. Such synchronization is possible since
$I$ is acyclic. Hence, in order to solve $\GENINT_{S}(\REC)$ we only
need to check $\A(R,S,I)$ for nonemptiness. The latter can be done
in \pspace\ by the standard ``on-the-fly" reachability analysis.
We proceed with the details of the construction below. 

Recall that rational relations are the ones defined by $n$-tape
automata. We start by formally defining 
the class of $n$-tape automata that
we use in this proof. An $n$-tape automaton, $n > 0$, is a tuple $\A =
(Q,\Sigma,Q_0,\delta,F)$, where $Q$ is a finite set of control states,
$\Sigma$ is a finite alphabet, $Q_0 \subseteq Q$ is the set of initial
states, $\delta : Q \times (\Sigma \cup \{\e\})^n \to 2^{Q
\times ([n] \cup \{[n]\})}$ is the transition function with
$\e$ a symbol not appearing in $\Sigma$, and $F \subseteq Q$
is the set of final states. Intuitively, the transition function
specifies how $\A$ moves in a situation when it is in state $q$
reading symbol $\bar a \in \Sigma^n$: If $(q',j) \in \delta(q,\bar
a)$, where $j \in [n]$, then $\A$ is allowed to enter state $q'$ and
move its $j$-th head one position to the right of its tape. If
$(q',[n]) \in \delta(q,\bar a)$ then $\A$ is allowed to enter state
$q'$ and move each one of its heads one position to the right of its
tape.  

Given a tuple $\bar w = (w_1,\dots,w_n) \in (\Sigma^*)^n$ such that
$w_i$ is of length $p_i \geq 0$, for each $1 \leq i \leq n$, a {\em
run} of $\A$ over $\bar w$ is a sequence $q_0 \, P_0 \, q_1 \, P_1 \, \cdots
\, q_{k-1} \, P_{k-1} \, q_{k}$, for $k \geq 0$, such that: 
\begin{enumerate}[(1)]
\item $q_i \in Q$, for each $0 \leq i \leq k$, \item $q_0 \in Q_0$,
\item $P_i$ is a tuple in $([p_1] \cup \{0\}) \times \cdots \times
([p_n]
  \cup \{0\})$, for each $0 \leq i \leq k-1$
  (intuitively, the $P_i$'s represent the positions of the $n$ heads
  of $\A$ at each stage of the run. In particular, the $j$-th
component of $P_i$ represents the position of the $j$-th head of $\A$
in stage $i$ of the run),
\item 
$P_0 = (b_1,\dots,b_n)$, where $b_i := 0$ if $w_i$ is the empty word $\e$
  (that is, $p_i = 0$) and $b_i := 1$ otherwise 
(that is, the run starts by initializing each one
  of the $n$
  heads to be in the initial position of its tape, if possible),  

\item $P_{k-1} = (p_1,\dots,p_n)$, that is, the run ends when each
head scans the last position of its head, and 

\item 
for each $0 \leq i \leq k-1$, if $P_i = (r_1,\dots,r_n)$ and 
$$\big(\,(\pi_1(\bar w))[r_1],\, \ldots \, ,(\pi_n(\bar w))[r_n]\,\big) \ = \ 
(a_1,\ldots,a_n),$$ 
where we assume by definition that $w[0] = \e$, 
then 
$\delta(q_i,(a_1,\dots,a_n))$ contains a pair of the form
$(q_{i+1},j)$ such that:  
\begin{enumerate} 
\item 
if $i < k-1$ then $j \in [n]$  
and $P_{i+1}$ is the tuple $(r_1,\dots,r_{j-1}, r_{j} +
1, r_{j+1},\dots,r_n)$. In such case we say that
$(q_{i+1},P_{i+1})$ is a {\em valid transition from $(q_i,P_i)$
over $\bar w$ in the $j$-th head}, and 
\item 
if $i = k - 1$ then $j = [n]$. This is a technical condition
that ensures that each head of $\A$ should leave its tape 
after the last transition in the run is performed. 
\end{enumerate} 

\noindent That is, each run is forced to respect the transition
function $\delta$ when the $n$-tape automaton $\A$ is in 
state $q$ reading the symbols in
the corresponding positions of its $n$ heads. Further, the positions
of the $n$ heads are updated in the run also according to what is
allowed by $\delta$. Notice that each transition in a run moves a
single head, except for the last one that moves all of them at the
same time. 
\end{enumerate}  
The run is {\em accepting} if $q_k \in F$
(that is, $\A$ enters an accepting state after each one of its heads
scans the last position of its own tape). 

Each $n$-tape automaton $\A$ defines the language $L(\A) \subseteq
(\Sigma^*)^n$ of all those $\bar w =$ $(w_1,\dots,$ $w_n) \in (\Sigma^*)^n$
such that there is an accepting run of $\A$ over $\bar w$.  It can be
proved with standard techniques that languages defined by $n$-ary
rational relations are precisely those defined by $n$-tape
automata. Notice that there is an alternative, more general model of
$n$-tape automata that allows each transition to move an arbitrary
number of heads. It is easy to see that this model is equivalent in
expressive power to the one we present here, as transitions that move
an arbitrary number of heads can easily be encoded by a a series of
single-head transitions. We have decided to use this more restricted
version of $n$-tape automata here, as it will allow us simplifying
some of the technical details in our proof.

\medskip 

Now we continue with the proof that the problem $\GENINT_S(\REC)$ can
be solved in \pspace\ if $I$ is acyclic (that is, it defines an
acyclic undirected graph).  The main technical tool for proving this
is the following lemma:

\begin{lem} \label{lemma:m-tape-aut-acyc} Let $R$ be an $m$-ary
relation in $\REC$, $S$ a binary rational relation, and $I$ a subset
of $[m] \times [m]$ that defines an acyclic undirected graph. It is possible to
construct, in exponential time, an $m$-tape automaton $\A(R,S,I)$ such
that the language defined by $\A(R,S,I)$ is precisely the set of words
$\bar w = (w_1,\dots,w_m) \in (\Sigma^*)^m$ such that $\bar w \in R$
and $(w_i,w_j) \in S$ for all $(i,j) \in I$.  \end{lem}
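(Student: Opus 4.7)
The plan is to exhibit $\A(R,S,I)$ explicitly as a synchronized product of NFAs for the ``coordinates'' of $R$ and one copy of a $2$-tape automaton for $S$ per edge in $G_I$, and then verify correctness via a scheduling argument that crucially invokes the acyclicity of $G_I$.

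First I would reduce to the case in which $R$ is a single product. Since $R\in\REC_m$, we can write $R=\bigcup_{k=1}^n (L_{k,1}\times\cdots\times L_{k,m})$ where each $L_{k,i}$ is regular; taking a disjoint union of the automata built for each disjunct settles the general case. So assume $R=L_1\times\cdots\times L_m$ with each $L_i$ given by an NFA $\B_i$, and let $\mathcal{T}$ be a $2$-tape automaton for $S$. Define $\A=\A(R,S,I)$ whose states are tuples $(q_1,\ldots,q_m,(p_e)_{e\in I})$ with $q_i$ a state of $\B_i$ and $p_e$ a state of a \emph{separate} copy $\mathcal{T}_e$ of $\mathcal{T}$ for every edge $e\in I$. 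Initial/final states are componentwise. A transition of $\A$ that moves head $\ell$ reading $a$ fires \emph{simultaneously}: an $a$-transition of $\B_\ell$; a ``tape~1 reads $a$'' transition of $\mathcal{T}_e$ for every $e=(\ell,j)\in I$; and a ``tape~2 reads $a$'' transition of $\mathcal{T}_e$ for every $e=(i,\ell)\in I$. The state space is exponential and the construction runs in exponential time.

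One inclusion of correctness is immediate: an accepting run of $\A$ projects to accepting runs of each $\B_i$ and each $\mathcal{T}_e$, yielding $w_i\in L_i$ and $(w_i,w_j)\in S$ for $(i,j)\in I$. The converse is the main obstacle: given $\bar w\in R$ with $(w_i,w_j)\in S$ for every $(i,j)\in I$, pick accepting runs $\rho_e$ of each $\mathcal{T}_e$ on $(w_i,w_j)$ and merge them into a single global schedule of head moves of $\A$. Each $\rho_e$ imposes, for its edge $e=(i,j)$, a prescribed interleaving between the $|w_i|$ head-$i$ moves and the $|w_j|$ head-$j$ moves; the runs of the $\B_i$ contribute no further scheduling constraint. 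I would build the global schedule recursively along $G_I$: since $G_I$ is a forest, root each tree arbitrarily; once the positions of head-$v$ moves have been fixed by $v$'s parent edge (if any), for every child $v_k$ of $v$ and every gap between consecutive head-$v$ moves, insert into that gap the required number of head-$v_k$ moves as prescribed by $\rho_{(v,v_k)}$ — this is consistent for different children because they share only the head $v$, and the remaining moves inside the subtree at $v_k$ can be scheduled recursively in the same way without ever interacting with the sibling subtrees. By construction the resulting schedule realises every $\rho_e$ simultaneously, so it induces a run of $\A$ that reaches a final state on $\bar w$.

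The hard part is precisely this bottom-up scheduling lemma: verifying that (i) the demanded head-$v_k$ moves in each gap can be inserted without violating the schedule fixed higher in the tree, and (ii) the non-determinism of $\A$ genuinely realises any such schedule. Acyclicity is used in an essential way: the sibling subtrees share no head other than $v$, so their insertion demands do not interfere. For cyclic $G_I$ the same argument breaks, and indeed even the naive product already fails on very simple relations (e.g., $I=\{(1,2),(2,1)\}$ with $S=\{(w,w):w\in\sigmas\}$ and the obvious $\mathcal{T}$ that reads one symbol from tape~1 and then the same symbol from tape~2 only ever accepts $(\e,\e)$), consistent with the undecidability side of Theorem~\ref{acyclic-thm}. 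Everything else (the disjunct decomposition, the state-space bookkeeping, and the exponential-time bound) is routine.
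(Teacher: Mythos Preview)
Your construction of $\A(R,S,I)$ is identical to the paper's (product of the coordinate NFAs with one copy of the $S$-transducer per edge of $I$, synchronised on head moves), and your easy direction coincides with theirs. For the hard direction the paper argues greedily---at each configuration it shows one can always find \emph{some} head whose move is currently sanctioned by every incident copy of $S$, by chasing a blocking chain down the tree to a leaf---whereas you build the whole interleaving in one shot by a top-down recursion along the forest $G_I$; both arguments exploit acyclicity in exactly the same way (distinct subtrees below a node share no head other than that node, so their scheduling demands do not interact), so this is essentially the same approach with a different, equally valid, organisation of the merging step.
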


We start by proving the lemma. The intuitive idea is that $\A(R,S,I)$
is an $m$-tape automaton that at the same time recognizes $R$
and represents the ``synchronization'' of the $|I|$ copies of the 
2-tape automaton $S$ over the projections
corresponding to the pairs in $I$. Since $I$ is acyclic, such
synchronization is possible. 


\medskip

Assume that $|I| = \ell$. Let $t_1,\dots,t_\l$ be an arbitrary
enumeration of the pairs in $I$.  Also, assume that the recognizable
relation $R$ is given as $$\bigcup_i \NN_{i_1} \times \cdots \times
\NN_{i_m},$$ where each $\NN_{i_j}$ is an NFA over $\Sigma$ (without
transitions on the empty word).  
Assume that the set of states
of $\NN_{i_j}$ is $U_{i_j}$, its set of initial states is $U^0_{i_j}$
and its set of final states is $U^F_{i_j}$.
Further, assume that the 2-tape transducer $S$ is given by the tuple
$(Q_S,\Sigma,Q^0_S,\delta_S,Q_S^F)$, 
where $Q_S$ is the set of states, the set of initial states is $Q^0_S$, the set of
final states is $Q^F_S$, and $\delta_S : Q_S \times (\Sigma \cup
\{\e\}) \times (\Sigma \cup
\{\e\}) \to 2^{Q \times (\{1,2\} \cup \{\{1,2\}\})}$ is the transition function.  
We take $|I| = \ell$ disjoint copies $S_1,\dots,S_\l$ of $S$, such
that $S_i$, for each $1 \leq i \leq \l$, is the tuple
$(Q_{S_i},\Sigma,Q^0_{S_i},\delta_{S_i},Q^F_{S_i})$. Without loss of
generality we assume that if $t_i = (j,j') \in [m] \times [m]$ then
$\delta_{S_i}$ is a function from $Q_{S_i} \times (\Sigma \cup
\{\e\}) \times (\Sigma \cup
\{\e\})$ into $2^{Q \times (\{j,j'\} \cup \{\{j,j'\}\})}$. We can do this because $I$ is
acyclic, and hence $j \neq j'$.   

The $m$-tape automaton $\A(R,S,I)$ is defined as the tuple
$(Q,\Sigma,Q_0,\delta,F)$, where:
\begin{enumerate}[(1)]
\item The set of states $Q$ is $$\bigcup_i \big( U_{i_1} \times
\cdots \times U_{i_m} \times Q_{S_1} \times \cdots \times
Q_{S_{\ell}} \big).$$ 
\item The initial states in $Q_0$ are precisely those in  
$$\bigcup_i \big( U^0_{i_1} \times
\cdots \times U^0_{i_m} \times Q^0_{S_1} \times \cdots \times
Q^0_{S_{\ell}} \big).$$ 
\item The final states in $F$ are precisely those in $$\bigcup_i \big( U^F_{i_1} \times
\cdots \times U^F_{i_m} \times Q^F_{S_1} \times \cdots \times
Q^F_{S_{\ell}} \big).$$ 

\item The transition function $\delta : Q \times (\Sigma \cup
  \{\e\})^m \to 2^{Q \times ([m] \cup \{[m]\})}$ is defined as follows on
  state $\bar{q} \in Q$ and symbol $\bar a \in (\Sigma \cup
  \{\e\})^m$.  Assume that $\bar q =
  (u_{i_1},\dots,u_{i_m},q_1,\dots,q_\ell)$, where $u_{i_j} \in
  U_{i_j}$ for each $1 \leq j \leq m$, and $q_{j} \in Q_{S_j}$ for
  each $1 \leq j \leq \l$. Further, assume that $\bar a =
  (a_1,\dots,a_m)$, where $a_j \in (\Sigma \cup \{\e\})$ for
  each $1 \leq j \leq m$. Then $\delta(\bar q,\bar a)$ consists of all
  pairs of the form
  $\big((u'_{i_1},\dots,u'_{i_m},q'_1,\dots,q'_\ell), \,j\,\big)$, for
$j
\in [m]$, such
  that:

\begin{enumerate}


\item $u'_{i_k} = u_{i_k}$ for each $k \in [m] \setminus \{j\}$, and there is a
transition in $\NN_{i_j}$ from $u_{i_j}$ into $u'_{i_j}$ labeled $a_j$; and 
 
\item
for each $1 \leq k \leq \l$, if $t_k$ is the pair $(k_1,k_2) \in [m]
\times [m]$ then the following holds: (1) If $j \not\in \{k_1,k_2\}$ 
then $q_k = q'_k$, and (2) if 
$j \in \{k_1,k_2\}$ then $(q'_k,j)$ belongs
to $\delta_{S_k}(q_k,(a_{k_1},a_{k_2}))$,  
  
\end{enumerate} 
plus all
  pairs of the form
  $\big((u'_{i_1},\dots,u'_{i_m},q'_1,\dots,q'_\ell), \,[m]\,\big)$ such
  that:

\begin{enumerate}


\item for each $1 \leq k \leq m$ there is a
transition in $\NN_{i_k}$ from $u_{i_k}$ into $u'_{i_k}$ labeled $a_k$; and 
 
\item
for each $1 \leq k \leq \l$, if $t_k$ is the pair $(k_1,k_2) \in [m]
\times [m]$ then $(q'_k,\{\{k_1,k_2\}\})$ belongs
to $\delta_{S_k}(q_k,(a_{k_1},a_{k_2}))$.  
  
\end{enumerate} 

\noindent Intuitively, $\delta$ defines possible transitions of
$\A(R,S,I)$ that respect the transition function of each one of the
copies of $S$ over its respective projection. Further, while scanning
its tapes the automaton $\A(R,S,I)$ also checks that there is an $i$
such that for each $1 \leq j \leq m$ the $j$-th tape contains a word
in the language defined by $\NN_{i_j}$.

\end{enumerate}
Clearly, $\A(R,S,I)$ can be constructed in exponential time from $R$,
$S$ and $I$. Notice, however, that states of $\A(R,S,I)$ are of
polynomial size. 

\medskip 

We prove next that for every $\bar w = (w_1,\dots,w_m) \in
(\Sigma^*)^m$ it is the case that $\bar w$ is accepted by $\A(R,S,I)$
if and only if $\bar w$ belongs to the language of $R$ and $(w_i,w_j)
\in S$, for each $(i,j) \in I$. 

\smallskip

\noindent
$\Longrightarrow$) Assume first that $\bar w = (w_1,\dots,w_m) \in (\Sigma^*)^m$ is
accepted by $\A(R,S,I)$. It is easy to see from the way $\A(R,S,I)$ is
defined that, for some $i$, the projection of the accepting run of
$\A(R,S,I)$ on each $1 \leq j \leq m$ defines an accepting run of
$\NN_{i_j}$ over $w_j$. Further, for each $(j,k) \in I$ it is the case
that the projection of the accepting run of $\A(R,S,I)$ on $(j,k)$
defines an accepting run of $S$ over $(w_j,w_k)$. We conclude that
$\bar w$ belongs to the language of $R$ and $(w_j,w_k) \in S$, for
each $(j,k) \in I$.

\smallskip 

\noindent
$\Longleftarrow$) Assume, on the other hand, that $\bar w = (w_1,\dots,w_m) \in
(\Sigma^*)^m$ belongs to the language of $R$ and $(w_i,w_j) \in S$,
for each $(i,j) \in I$. Further, assume that the length of $w_i$ is
$p_i \geq 0$, for each $1\leq i \leq m$. 
We prove next that $\bar w$ is accepted by
$\A(R,S,I)$. 

Since $\bar w \in R$ it must be the case that $\bar w$ is accepted by
$\NN_{i_1} \times \cdots \times \NN_{i_m}$, for some $i$. Let us assume
that $$\rho_{i_j} \ := \ u_{i_j,0} \, (1) \, u_{i_j,1} \, (2) \,
\cdots \, u_{i_j,p_j-1} \, (p_j) \, 
 \, u_{i_j,p_j} $$ is an accepting run of the 1-tape automaton
 $\NN_{i_j}$
 over $w_j$, for each $1 \leq j \leq m$.  Since for every  
$t_j$ ($1
 \leq j \leq \ell$) of the form $(k,k') \in [m]
 \times [m]$ it is the case that $(w_k,w_{k'}) \in S$, there is an
 accepting run $$\lambda_j \ := \ q_{j,0} \, P_{j,0} \,
 q_{j,1} \, P_{j,1} \, \cdots \, q_{j,r_j} \, P_{j,r_j} \, q_{j,r_j +
  1}$$ of $S_j$ over $(w_k,w_{k'})$. We then inductively define a
sequence $$\bar q_0 \, P_0 \, \bar q_1\, P_1 \, \cdots \, $$ where
each $\bar q_j$ is a state of $Q$ and each $P_j$ is a tuple in $([p_1]
\cup \{0\}) \times \cdots \times ([p_m] \cup \{0\})$, as follows:

\begin{enumerate}[(1)]

\item $\bar q_0 :=
  (u_{i_1,0},\dots,u_{i_m,0},q_{1,0},\dots,q_{\ell,0})$. 

\item $P_0 = (b_1,\dots,b_m)$, where $b_i := 0$ if $w_i$ is the empty
  word and $b_i := 1$ otherwise.

\item Let $j \geq 0$. Assume that $\bar q_j =
  (u_{i_1},\dots,u_{i_m},q_{1},\dots,q_{\ell})$, where each $u_{i_k}$
  is a state in $\NN_{i_k}$ and each $q_k$ is a state in $S_k$, and
  that $P_j = (r_1,\dots,r_m) \in ([p_1]
\cup \{0\}) \times \cdots \times ([p_m] \cup \{0\})$. 

If for every $1 \leq k \leq m$ it is the case that $r_k = p_k$ then
the sequence stops. Otherwise it proceeds as follows. 

If for some $1 \leq k \leq m$ it is the case that $u_{i_k} (r_k)$ is
not a subword of the accepting run $\rho_{i_k}$,\footnote{Notice that
$\rho_{i_k}$ is a word in the language defined by $(U_{i_k} \cdot
[p_k])^* \cdot U_{i_k}$, and hence it is completely well-defined whether a
word in $U_{i_k} \cdot [p_k]$ is or not a subword of $\rho_{i_k}$.} or
that for some $1 \leq k \leq \ell$ such that $t_k = (k_1,k_2) \in [m]
\times [m]$ it is the case that $q_k (r_{k_1},r_{k_2})$ is not a
subword of the accepting run $\lambda_k$,\footnote{This is
well-defined for essentially the same reasons given in the previous
footnote.}  then the sequence simply fails.

Otherwise check whether there is a $1 \leq k \leq m$ such that the following
holds: 

\begin{enumerate}

\item $r_k \neq p_k$. 

\item 
For each pair $t_{k_1} \in I$ of the form $(k,k') \in [m] \times [m]$ it is the case
that if $q'_{k_1}
(r'_{k},r'_{k'})$ is the subword in $Q_{S_{k_1}} \cdot ([p_{k}]
\times [p_{k'}])$ that immediately follows $q_{k_1} (r_{k},r_{k'})$ in
the run $\lambda_{k_1}$,\footnote{Notice, since $\A(R,S,I)$ does not allow
empty transitions, that $q'_{k_1} (r'_{k},r'_{k'})$ is well-defined
since the subword $q_{k_1} (r_{k},r_{k'})$ appears exactly once in
the run $\lambda_{k_1}$ and, further, $q_{k_1} (r_{k},r_{k'})$ is
followed in $\lambda_{k_1}$ by a subword in $Q_{S_{k_1}} \cdot ([p_{k}] \times [p_{k'}])$
because $r_k \neq p_k$.} then  
$r'_k = r_k + 1$, and $r'_{k'} = r_{k'}$. 

\item For each pair $t_{k_1} \in I$ of the form $(k',k) \in [m] \times [m]$ it is the case
that if $q'_{k_1}
(r'_{k'},r'_{k})$ is the subword in $Q_{S_{k_1}} \cdot ([p_{k'}]
\times [p_{k}])$ that immediately follows $q_{k_1} (r_{k'},r_{k})$ in
the run $\lambda_{k_1}$, then  
$r'_k = r_k + 1$, and $r'_{k'} = r_{k'}$. 

\end{enumerate} 
Intuitively, this states that we can move the $k$-th head of
$\A(R,S,I)$ and preserve the transitions on each run of the form
$\lambda_{k_1}$ such that $S_{k_1}$ is a copy of $S$ that has one of its
components reading tape $k$. 

If no such $k$ exists the sequence fails. Otherwise pick the least $1
\leq k \leq m$ that satisfies the conditions above,   
and continue the sequence by defining the
   pair $(\bar q_{j+1},P_{j+1})$ as
  $$\big(\,(u_{i_1},\cdots,u_{i_{k-1}},u'_{i_{k}},u_{i_{k+1}},\cdots,u_{i_m},
  q'_1,\cdots,q'_\l), \,
  (r_1,\cdots,r_{k-1},r_{k} + 1,r_{k+1},\cdots,r_m)\,\big),$$ 
where the following holds:

\begin{enumerate}

\item $u'_{i_{k}} (r_{k}+1)$ is the subword 
in $U_{i_{k}} \cdot [p_{k}]$ that 
immediately follows
$u_{i_{k}} (r_{k})$ in $\rho_{i_{k}}$. 

\item For each pair $t_{k_1} \in I$ of the form
$(k,k') \in [m] \times [m]$, it is the case that $q'_{k_1}$ satisfies
that $q'_{k_1}
(r_{k} + 1,r_{k'})$ is the subword in $Q_{S_{k_1}} \cdot ([p_{k}]
\times [p_{k'}])$ that immediately follows $q_{k_1} (r_{k},r_{k'})$ in
the run $\lambda_{k_1}$. 

\item For each pair $t_{k_1} \in I$ of the form
$(k',k) \in [m] \times [m]$, it is the case that $q'_{k_1}$ satisfies
that $q'_{k_1}
(r_{k'},r_{k} + 1)$ is the subword in $Q_{S_{k_1}} \cdot ([p_{k'}]
\times [p_{k}])$ that immediately follows $q_{k_1} (r_{k'},r_{k})$ in
the run $\lambda_{k_1}$.     

\item For each pair $t_{k_1} \in I$ of the form $(k',k'') \in [m]
\times [m]$ such that $k' \neq k$ and $k'' \neq k$, it is the case
that $q'_{k_1} = q_{k_1}$. 

\end{enumerate}
In this
case we say that $(\bar q_{j+1},P_{j+1})$ is {\em obtained from $(\bar
q_{j},P_{j})$ by performing a transition on the $k$-th head}.

\end{enumerate} 

\medskip 

\noindent We first prove by induction the following crucial property of the
sequence $\bar q_0 P_0 \bar q_1 P_1  \cdots$:
The sequence does not fail at any stage $j \geq 0$. Clearly, the
sequence does not fail in stage 0 given by pair $(\bar
q_0,P_0)$. Assume now by induction that the sequence has not failed
until stage $j \geq 0$ given by pair $(\bar q_j,P_j)$, and, further,
that the sequence does not stop in stage $j$. We prove next that the
sequence does not fail in stage $j+1$.  

If the sequence stops in stage $j+1$ it clearly does not fail. Assume
then that the sequence does not stop in stage $(j+1)$.  Also, assume
that $q_j = (u_{i_1},\dots,u_{i_m},q_{1},\dots,q_{\ell})$, where each
$u_{i_k}$
  is a state in $\NN_{i_k}$ and each $q_k$ is a state in
  $S_k$. Further, assume that $P_j = (r_1,\dots,r_m) \in ([p_1] \cup
  \{0\}) \times \cdots \times ([p_m] \cup \{0\})$. Since the sequence
  did not stop in stage $j$ it must be the case that for every $1 \leq
k \leq m$  the sequence $u_{i_k} (r_k)$ is
a subword of the accepting run $\rho_{i_k}$, and 
that for every $1 \leq k \leq \ell$ such that $t_k = (k_1,k_2) \in [m]
\times [m]$ the sequence $q_k (r_{k_1},r_{k_2})$ is a
subword of the accepting run $\lambda_k$. 

Assume that $(\bar q_{j+1},P_{j+1})$ is obtained from $(\bar q_j,P_j)$
by performing a transition on the $k$-th head, for $1 \leq k \leq m$. 
Then the
   pair $(\bar q_{j+1},P_{j+1})$ is of the form:  
  $$\big(\,(u'_{i_1},\cdots,u'_{i_{k}},\cdots,u'_{i_m},
  q'_1,\cdots,q'_\l), \,
  (r'_1,\cdots,r'_{k},\cdots,r'_m)\,\big),$$ 
where the following holds:

\begin{enumerate}[(1)]

\item $u'_{i_{k'}} = u_{i_{k'}}$, for each $k' \in [m] \setminus \{k\}$, 

\item $u'_{i_{k}} (r_{k}+1)$ is the subword 
in $U_{i_{k}} \cdot [p_{k}]$ that 
immediately follows
$u_{i_{k}} (r_{k})$ in $\rho_{i_{k}}$, 

\item 
$r'_{k'} = r_{k'}$, for each $k' \in [m] \setminus \{k\}$,

\item $r'_{k} = r_k + 1$, 

\item for each pair $t_{k_1} \in I$ of the form
$(k,k') \in [m] \times [m]$, it is the case that $q'_{k_1}$ satisfies
that $q'_{k_1}
(r_{k} + 1,r_{k'})$ is the subword in $Q_{S_{k_1}} \cdot ([p_{k}]
\times [p_{k'}])$ that immediately follows $q_{k_1} (r_{k},r_{k'})$ in
the run $\lambda_{k_1}$, 

\item for each pair $t_{k_1} \in I$ of the form
$(k',k) \in [m] \times [m]$, it is the case that $q'_{k_1}$ satisfies
that $q'_{k_1}
(r_{k'},r_{k} + 1)$ is the subword in $Q_{S_{k_1}} \cdot ([p_{k'}]
\times [p_{k}])$ that immediately follows $q_{k_1} (r_{k'},r_{k})$ in
the run $\lambda_{k_1}$, and      

\item for each pair $t_{k_1} \in I$ of the form $(k',k'') \in [m]
\times [m]$ such that $k' \neq k$ and $k'' \neq k$, it is the case
that $q'_{k_1} = q_{k_1}$. 

\end{enumerate}

\noindent Then, by inductive hypothesis, it is the case that for every $k' \in
[m] \setminus \{k\}$ the sequence $u'_{i_{k'}} (r'_{k'})$ is a subword
of the accepting run $\rho_{i_{k'}}$. For the same reason, for every
$1 \leq k' \leq \l$ such that $t_{k'} = (k_1,k_2) \in [m] \times [m]$,
$k_1 \neq k$ and $k_2 \neq k$, it is the case that $q'_{k'}
(r'_{k_1},r'_{k_2})$ is a subword of the accepting run $\lambda_{k'}$.
Further, simply by definition $u'_{i_k} (r'_k)$ is a subword of the
accepting run $\rho_{i_{k}}$. Also, by definition, for each pair
$t_{k_1} \in I$ of the form $(k',k) \in [m] \times [m]$, it is the
case that $q'_{k_1} (r'_{k'},r'_{k})$ is a subword of the accepting
run $\lambda_{k_1}$, and, similarly, for each pair $t_{k_1} \in I$ of
the form $(k,k') \in [m] \times [m]$, it is the case that $q'_{k_1}
(r'_{k},r'_{k'})$ is a subword of the accepting run
$\lambda_{k_1}$. Hence, in order to prove that the sequence does not
fail in stage $j+1$ it is enough to show that there is an $1 \leq h
\leq m$ such that some pair of the form $(\bar q,P)$, where $\bar q
\in Q$ and $P \in ([p_1] \cup \{0\}) \times \cdots \times ([p_m] \cup
\{0\})$, can be obtained from $(\bar q_{j+1},P_{j+1})$ by performing a
transition on the $h$-th head.

Since the sequence does not stop in stage $j+1$, the set $\cal{H}$ $= \{
1 \leq h' \leq m \mid r'_{h'} \neq p_{h'}\}$ must be nonempty. Let 
$h_1$ be the least element in $\cal H$. Since the underlying undirected
graph of $I$ is acyclic, the connected component of $I$ to
which $h_1$ belongs is a tree $T$. Without loss of generality we
assume that $T$ is rooted at $h_1$. 
  
We start by trying to prove that there is pair of the form $(\bar
q,P)$, where $\bar q \in Q$ and $P \in ([p_1] \cup \{0\}) \times
\cdots \times ([p_m] \cup \{0\})$, that can be obtained from $(\bar
q_{j+1},P_{j+1})$ by performing a transition on the $h_1$-th head. 
If this is the case we are done and the proof finishes. Assume
otherwise. Then we can assume without loss of generality that 
there is a pair of the form $t_{k'} \in I$ of the form
$(h_1,h_2) \in [m] \times [m]$ such that the subword in $Q_{S_{k'}} \cdot ([p_{h_1}]
\times [p_{h_2}])$ that immediately follows $q'_{k'} (r'_{h_1},r'_{h_2})$ in
the run $\lambda_{k'}$ is of the form $q''_{k'} (r'_{h_1},r'_{h_2} +
1)$. (That is, the run $\lambda_{k'}$ continues from $q'_{k'}
(r'_{h_1},r'_{h_2})$ by moving its second head). The other possibility
is that there is a pair of the form $t_{k''} \in I$ of the form
$(h_2,h_1) \in [m] \times [m]$ such that the subword in $Q_{S_{k''}} \cdot ([p_{h_2}]
\times [p_{h_1}])$ that immediately follows $q'_{k''} (r'_{h_2},r'_{h_1})$ in
the run $\lambda_{k''}$ is of the form $q''_{k''} (r'_{h_2} +
1,r'_{h_1})$. But this case is completely symmetric to the previous
one.  

We then continue by trying to show that there is pair of the form $(\bar
q,P)$, where $\bar q \in Q$ and $P \in ([p_1] \cup \{0\}) \times
\cdots \times ([p_m] \cup \{0\})$, that can be obtained from $(\bar
q_{j+1},P_{j+1})$ by performing a transition on the $h_2$-th head. If
this is the case then we are ready and the proof finishes. Assume
otherwise. Then again we can assume without loss 
of generality that there is a pair of the form $t_{k''} \in I$ of the form
$(h_2,h_3) \in [m] \times [m]$ such that the subword in $Q_{S_{k''}} \cdot ([p_{h_2}]
\times [p_{h_3}])$ that immediately follows $q'_{k''} (r'_{h_2},r'_{h_3})$ in
the run $\lambda_{k''}$ is of the form $q''_{k''} (r'_{h_2},r'_{h_3} +
1)$. (That is, the run $\lambda_{k''}$ continues from $q'_{k''}
(r'_{h_2},r'_{h_3})$ by moving its second head). 

Since $T$ is acyclic and finite, if we iteratively continue in this way
from $h_2$ we will either have to find some $h \in \cal H$ such that
there is pair of the form $(\bar q,P)$, where $\bar q \in Q$ and $P
\in ([p_1] \cup \{0\}) \times \cdots \times ([p_m] \cup \{0\})$, that
can be obtained from $(\bar q_{j+1},P_{j+1})$ by performing a
transition on the $h$-th head, or we will have to stop in some $h \in
\cal H$ that is a leaf in $T$. But clearly for this $h$ it must be
possible to show that there is pair of the form $(\bar q,P)$, where
$\bar q \in Q$ and $P \in ([p_1] \cup \{0\}) \times \cdots \times
([p_m] \cup \{0\})$, that can be obtained from $(\bar
q_{j+1},P_{j+1})$ by performing a transition on the $h$-th head. This
shows that the sequence does not fail in stage $j+1$. 

\medskip 

We now continue with the proof of the first part of the theorem. Since
the sequence does not fail, and from stage $j$ into stage $j+1$ the
position of at least one head moves to the right of its tape, the
sequence must stop in some stage $j \geq 0$ with associated pair
$(\bar q_j,P_j)$. Then $P_j = (p_1,\dots,p_m)$. Assume that $\bar q_j
= (u_{i_1},\dots,u_{i_m},q_{1},\dots,q_{\ell})$, where each $u_{i_k}$
is a state in $\NN_{i_k}$ and each $q_k$ is a state in $S_k$. Then,
from the properties of the sequence, it must be the case that $u_{i_k}
(p_k)$ appears as a subword in the accepting run $\rho_{i_k}$, for
each $1 \leq k \leq m$, and for each $1 \leq k \leq \ell$ such that
$t_k = (k_1,k_2) \in [m] \times [m]$ it is the case that $q_k
(p_{k_1},p_{k_2})$ appears as a subword in the accepting run
$\lambda_k$. Hence $u_{i_k} = u_{i_k, p_k - 1}$ and $q_k = q_{k,r_k}$.

It easily follows from the definition of the sequence $(\bar q_0,P_0)
(\bar q_1,P_1) \cdots$ and the transition function $\delta$ of
$\A(R,S,I)$, that the following holds for each $k< j$: If $(\bar
q_{k+1},P_{k+1})$ is obtained from $(\bar q_k,P_k)$ by performing a
transition on the $k'$-the head, $1 \leq k' \leq m$, then $(\bar
q_{k+1},P_{k+1})$ is a valid transition 
from $(\bar q_k,P_{k})$ over $\bar w$ in the $k'$-th head. 
Further, assume that 
$$\bar a \ = \ \big(\,(\pi_1(\bar
  w))[p_1],\, \ldots \, ,(\pi_n(\bar w))[p_n]\,\big),$$ then
  $\delta(\bar q_j,\bar a)$ contains a pair of the form $(\bar
  q_{j+1},\{[m]\})$, where: $$\bar q_{j+1} \ := \
  \big(\,u_{i_1,p_1},\cdots,u_{i_m,p_m},q_{1,r_1+1},\cdots,q_{\l,r_\l+1}
  \, \big).$$ Clearly, $\bar q_{j+1} \in F$ (that is, $\bar q_{j+1}$
  is a final state of $\A(R,S,I)$) and we conclude that $\bar q_0 P_0
  \bar q_1 P_1 \cdots
  \bar q_j P_j \bar q_{j+1}$ is an accepting run of $\A(R,S,I)$
  over $\bar w$, which was to be proved.

\medskip 

We now explain how Theorem \ref{acyclic-thm} follows from Lemma \ref{lemma:m-tape-aut-acyc}. The lemma tells us that
in order to solve acyclic instances of $\GENINT_S(\REC)$ we can
construct, from the $m$-ary recognizable relation $R$, the binary
rational relation $S$ and the acyclic $I \subseteq [m] \times [m]$,
the $m$-tape automaton $\A(R,S,I)$, and then check $\A(R,S,I)$ for
nonemptiness. The latter can be done in polynomial time in the size of
$\A(R,S,I)$ by performing a simple reachability analysis in the
states of $\A(R,S,I)$. This gives us a simple exponential time bound
for the complexity of solving acyclic instances of
$\GENINT_S(\REC)$. However, as we mentioned before, each state in
$\A(R,S,I)$ is of polynomial size.  Thus, checking whether $\A(R,S,I)$
is nonempty can be done in nondeterministic \pspace\ by using a
standard ``on-the-fly'' construction of $\A(R,S,I)$ as follows:
Whenever the reachability algorithm for checking emptiness
of $\A(R,S,I)$ wants to move from a state $r_1$ of $\A(R,S,I)$ to a
state $r_2$, it guesses $r_2$ and checks whether there is a transition
from $r_1$ to $r_2$. Once this is done, the algorithm can discard
$r_1$ and follow from $r_2$. Thus, at each step, the algorithm needs
to keep track of at most two states, each one of polynomial size. From
Savitch's theorem, we know that \pspace\ equals
  nondeterministic \pspace. This shows that acyclic instances of
$\GENINT_S(\REC)$ can be solved in \pspace. 

\bigskip

The proof of the second part of the theorem is by an easy reduction
from the PCP problem (e.g. in the style of the proof of the second
part of Theorem~\ref{scr-dichotomy}).  
\end{proof}

\subsection{\crpq s with rational relations}

The acyclicity condition
 gives us a robust class of queries, with an easy
syntactic definition, that can be extended with {\em arbitrary}
rational relations. 
Note that acyclicity is a very standard restriction imposed on
database queries to achieve better behavior, often with respect to
complexity; it is in general known to be easy to enforce
syntactically, and to yield benefits from both the semantics and query
evaluation point of view. This is the approach we follow here.

Recall that \crpq($S$) queries are those of the
form 
\[
\phi(\bar x)  \ =\  \exists \bar y\ \Big( \bigwedge_{i=1}^m (u_i
\Ltoo{\chi_i:L_i} u_i') \ \  \wedge \ \  \bigwedge_{(i,j)\in I}
S(\chi_i,\chi_j)\Big),
\] 
see (\ref{crpqs-eq}) in Sec.\ref{gl:sec}. 
We call such a query {\em acyclic} if $G_I$, the underlying undirected
graph of $I$, is acyclic. 

\begin{thm}
\label{crpqs-thm}
The query evaluation problem for acyclic \crpq($S$) queries is decidable for
every binary rational relation $S$. Its combined complexity
is \pspace-complete, and data complexity is \nlog-complete.
\end{thm}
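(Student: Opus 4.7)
\medskip

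The theorem will follow by combining the tools developed in the preceding sections. The plan is to use Theorem~\ref{crpq-thm} to bridge \crpq($S$) evaluation and $\GENINT_S(\REC)$, and then to invoke Theorem~\ref{acyclic-thm} for the acyclic instances of the latter. Note first that the \dlog\ reduction of Lemma~\ref{crpqs-lemma-one} takes a \crpq($S$) query $\phi$ and a graph $G$ and produces an instance of $\GENINT_S(\REC)$ whose $I$-component is precisely the set of index pairs of $S$-atoms in $\phi$. Hence, when $\phi$ is acyclic in the sense of this theorem, the resulting $I$ is acyclic in the sense of Theorem~\ref{acyclic-thm}, which then yields a \pspace\ bound on combined complexity.

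For data complexity, $\phi$ (and therefore $m$, $I$, $S$, and the $L_i$) is fixed. The recognizable relation $R$ produced by Lemma~\ref{crpqs-lemma-one} is a polynomial-size union $\bigcup_{\bar b} R_{\bar b}$, and each factor $\lang{\A(G,n_i,n_i')} \cap L_i$ is given by an \NFA of size polynomial in $|G|$. Plugging this into the construction of Lemma~\ref{lemma:m-tape-aut-acyc}, the states of the resulting $m$-tape automaton $\A(R,S,I)$ are tuples of states of fixed automata for $S$ together with polynomially-sized automata for the $\lang{\A(G,n_i,n_i')} \cap L_i$. Each state therefore has size $O(\log |G|)$, and the on-the-fly reachability analysis from the proof of Theorem~\ref{acyclic-thm} runs in \nlog\ in $|G|$.

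For the lower bounds I would take $S_0 = \{(w,w) \mid w \in \Sigma^*\}$, which is a regular (hence rational) relation. For combined \pspace-hardness, reduce from nonemptiness of intersection of $m$ given NFAs $\A_1,\ldots,\A_m$, a canonical \pspace-complete problem. The acyclic \crpq($S_0$) query
\[
\exists x,y\ \Big( \bigwedge_{i=1}^{m} \big(x \Ltoo{\chi_i : \lang{\A_i}} y\big) \ \wedge\ \bigwedge_{i=1}^{m-1} S_0(\chi_i, \chi_{i+1})\Big),
\]
evaluated over a one-vertex graph with an $a$-self-loop for each $a \in \Sigma$, is true iff $\bigcap_i \lang{\A_i} \neq \emptyset$; the underlying undirected graph of $I = \{(i,i+1) \mid 1 \leq i < m\}$ is a path, and hence acyclic. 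For \nlog-hardness of data complexity, the fixed one-atom query $\phi(x,y) = (x \Ltoo{\Sigma^*} y)$ already encodes graph reachability.

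The main technical obstacle is the bookkeeping needed for the data-complexity upper bound: one must trace the constructions of Lemmas~\ref{crpqs-lemma-one} and~\ref{lemma:m-tape-aut-acyc} carefully to verify that the state space of $\A(R,S,I)$ stays polynomial in $|G|$ (absorbing all exponential blowups into the fixed query), so that on-the-fly reachability uses only $O(\log |G|)$ workspace and gives \nlog\ rather than the \pspace\ bound one would naively inherit from the combined case.
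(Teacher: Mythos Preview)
Your overall strategy matches the paper's, but the combined-complexity upper bound has a real gap. You invoke Lemma~\ref{crpqs-lemma-one} as though its \dlog\ reduction takes both $\phi$ and $G$ as input; the lemma, however, \emph{fixes} $\phi$, and its proof enumerates all tuples $\bar b$ of length $|\bar y|$, producing a recognizable relation $R=\bigcup_{\bar b} R_{\bar b}$ with $|V|^{|\bar y|}$ summands. With $\phi$ part of the input this output is exponential, and composing with the \pspace\ bound of Theorem~\ref{acyclic-thm} (which is polynomial space in $|R|$) yields only an \textsc{ExpSpace} bound on the original input. The paper closes this gap by first \emph{guessing} the assignment $\bar b$ for the existential variables $\bar y$ (free in nondeterministic \pspace), and only then constructing the single polynomial-size product $R_{\bar b}$ and applying Theorem~\ref{acyclic-thm} to it. An alternative fix is the one you already use for data complexity: argue that even when $R$ has exponentially many summands, each state of $\A(R,S,I)$---a choice of summand index $\bar b$ plus one state per factor NFA and per copy of $S$---remains polynomial in $|\phi|+|G|$, so on-the-fly reachability runs in \pspace. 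You carry out this analysis only for fixed $\phi$, not for the combined case.

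Your data-complexity argument and your explicit lower-bound constructions are correct and agree with the paper (which leaves the lower bounds implicit).
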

\begin{proof}
We provide a nondeterministic \pspace\ algorithm that solves the query
evaluation problem when we assume the query to be part of the input
(i.e. combined complexity). Then the result will follow from Savitch's
theorem, that states that \pspace\ equals nondeterministic \pspace.

Given a graph $G$, a tuple $\bar a$ of nodes, and acyclic \crpq($S$)
query of the form
$$\phi(\bar x) \ =\ \exists \bar y\ \Big( \bigwedge_{i=1}^m (u_i
\Ltoo{\rho_i:L_i} u_i') \ \ \wedge \ \ \bigwedge_{(i,j)\in I}
S(\rho_i,\rho_j)\Big),$$ the algorithm starts by guessing a polynomial
size assignment $\bar b$ for the existentially quantified variables of
$\phi(\bar x)$, that is, the variables in $\bar y$. It then checks
that $G \models \psi(\bar a,\bar b)$, assuming that $\psi(\bar x,\bar
y)$ is the \crpq($S$) formula $$\Big(\bigwedge_{i=1}^m (u_i
\Ltoo{\rho_i:L_i} u_i') \ \ \wedge \ \ \bigwedge_{(i,j)\in I}
S(\rho_i,\rho_j)\Big).$$ If this is the case the algorithm accepts and
declares that $G \models \phi(\bar a)$. Otherwise it rejects and
declares that $G \not\models \phi(\bar a)$.

By using essentially the same techniques as in the proof of Lemma
\ref{crpqs-lemma-one}, one can show that there is a polynomial time
translation that, given $G$ and $\psi(\bar a,\bar b)$, constructs an
acyclic instance of $\GENINT_S(\REC)$ such that the answer to this
instance is `yes' iff $G\models\psi(\bar a,\bar b)$. From Theorem
\ref{acyclic-thm} we know that acyclic instances of $\GENINT_S(\REC)$ can be
solved in \pspace, and hence that the algorithm described above can be
performed in nondeterministic \pspace.

\medskip 

With respect to the data complexity, we start with the following
observation. Acyclic instances of $\GENINT_S(\REC)$ can be solved
in \nlog\ for $m$-ary relations in $\REC$, if we assume $m$ to be
fixed. The proof of this fact mimicks the proof of the \pspace\ upper
bound in Theorem \ref{acyclic-thm}, but this time we assume the arity
of $R$ to be fixed. In such case $\A(R,S,I)$ is of polynomial size,
and each one of its states is of logarithmic size. We can easily check
$\A(R,S,I)$ for nonemptiness in \nlog\ in this case, by 
performing a standard
``on-the-fly'' reachability analysis.

We provide an \nlog\ algorithm that solves the query evaluation
problem when we assume the query to be fixed (i.e. data complexity). 
Consider a fixed acyclic
\crpq($S$) query of the form
$$\phi(\bar x) \ =\ \exists \bar y\ \Big( \bigwedge_{i=1}^m (u_i
\Ltoo{\rho_i:L_i} u_i') \ \ \wedge \ \ \bigwedge_{(i,j)\in I}
S(\rho_i,\rho_j)\Big).$$ Given a graph $G$ and tuple $\bar a$ of
nodes, the algorithm constructs (using the proof of Lemma \ref{crpqs-lemma-one}) in
deterministic logarithmic space an acyclic instance of
$\GENINT_S(\REC)$, given by recognizable relation $R$ of {\em fixed}
arity $m$ (this follows from the fact that $\phi(\bar x)$ is fixed),
and fixed $I \subseteq [m] \times [m]$, such that the answer to this
instance is `yes' iff $G \models \phi(\bar a)$.  Since the arity of
$R$ is fixed, our previous observation tells us that we can solve the
instance of $\GENINT_S(\REC)$ given by $R$ and $I$
in \nlog. But \nlog\ reductions compose, and hence the data
complexity of the query evaluation problem for \crpq($S$) queries is
also \nlog.
\end{proof}

Thus, we get not only the possibility of extending \crpq s with
rational relations but also a good complexity of query evaluation. The
\nlog-data complexity matches that of RPQs, \crpq s, and \ecrpq s
\cite{CM90,CMW87,pods10}, and the combined complexity matches that of
first-order logic, or \ecrpq s without extra relations. 

The next natural question is whether we can recover decidability for
weaker syntactic conditions by putting restrictions on a class of
relations $S$. The answer to this is positive if we consider {\em
  directed} acyclicity of $I$, rather than acyclicity of the underlying
undirected graph of $I$. Then we get 
decidability for the class of 
$\SCR$ relations. In fact, we have a
dichotomy similar to that of Theorem \ref{acyclic-thm}.

\begin{thm}
\label{scr-dichotomy}
\hfill
\begin{iteMize}{$\bullet$}\itemsep=0pt
\item Let $S$ be a relation from $\SCR$. Then $\INTPI \REC S I$
  is decidable in \nexp\ if $I$ is a directed acyclic graph.
\item There is a relation $I$ with a directed cycle and $S\in\SCR$
  such 
that 
$\INTPI \REC S I$ is undecidable.
\end{iteMize}
\end{thm}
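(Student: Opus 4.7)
The plan is to prove the two halves of Theorem \ref{scr-dichotomy} independently: an upper bound by a product‑automaton construction exploiting the $\SCR$ structure along a topological order of $I$, and a lower bound by a reduction from PCP that uses the directed cycle to route information in both directions.

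\emph{Upper bound sketch.} Assume $R \in \REC_m$ is presented as a finite union of products $L_1^{(r)}\times\cdots\times L_m^{(r)}$, and $S \in \SCR$ is presented as a subsequence‑closed automaton $\A_S$ (Proposition \ref{scr-prop}(3)). First nondeterministically pick a term of the union, so we may assume $R = L_1\times\cdots\times L_m$. The plan is then to generalize the $m$‑tape automaton construction in Lemma \ref{lemma:m-tape-aut-acyc}: for each $i$ take the NFA for $L_i$, for each edge $(i,j)\in I$ take a copy of $\A_S$, and synchronize them by processing $I$ in topological order. The obstacle in the generalization is that the underlying undirected graph of $I$ may now contain cycles, so naive synchronization of the 2‑tape $S$‑automata with the $L_i$‑automata can deadlock (two different $S$‑automata sharing a tape might want to advance their heads asynchronously). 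The $\SCR$ property fixes this: whenever an $S$‑constraint on an edge $(i,j)$ must advance its $j$‑head while its $i$‑head stays, the definition of subsequence‑closed automaton guarantees a matching $(\bot,b)$‑transition, so the synchronization goes through. The resulting product has at most exponentially many states, and nonemptiness on it can be decided within $\nexp$.

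\emph{Lower bound sketch.} Take $I = \{(1,2),(2,1)\}$, the smallest directed cycle. I will reduce the PCP to $\INTPI \REC S I$ for a suitably fixed $\SCR$ relation $S$. Given an instance $(u_1,v_1),\dots,(u_n,v_n)$ over $\Gamma$, work with the enriched alphabet $\Gamma \cup [n] \cup \{\#\}$. The idea is that $w_1$ encodes an index sequence $i_1\cdots i_k$ together with the $u$‑side reading $u_{i_1}\cdots u_{i_k}$, while $w_2$ encodes the same sequence together with the $v$‑side reading $v_{i_1}\cdots v_{i_k}$; the regular scaffolding (alternation of index tokens and $\Gamma$‑letters, separators) is carried by $L_1,L_2$ in $R = L_1\times L_2$, while the two SCR constraints $(w_1,w_2)\in S$ and $(w_2,w_1)\in S$ jointly force the two readings to coincide letter by letter.

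\emph{Main obstacle.} The delicate step is designing $S = T_\subseq$ for a rational $T$ so that a \emph{pair} of SCR constraints running in opposite directions around the 2‑cycle simulates equality, even though $S$ is only downward‑closed in its first projection. The plan is to reuse the padding idea from Proposition \ref{prop:reducRPEP-inter-REG-subseq}: insert controlled blocks of $\#$'s into each component so that the subsequence closure of one side can absorb only the $\#$‑padding of the other, forcing the non‑padding residues to match exactly. Applying this trick symmetrically in both directions of the 2‑cycle gives the desired bidirectional equality constraint, yielding a PCP solution precisely when $R \cap_I S \neq \emptyset$.
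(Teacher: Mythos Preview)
Your upper bound approach diverges from the paper's and, as sketched, has a real gap. The paper does \emph{not} extend the product $m$-tape automaton of Lemma~\ref{lemma:m-tape-aut-acyc}; it instead proves a small-model property by a cutting argument along a topological order of $I$: starting from any witness $(u_1,\dots,u_m)$, one inductively shortens each $u_{\ell+1}$ to exponential size by pumping while (i) using the $\SCR$ closure to preserve all \emph{outgoing} constraints $(\ell{+}1,j)$ (since a subsequence of the first argument stays in $S$), and (ii) carefully marking the positions of $u_{\ell+1}$ where incoming $S$-runs $(j,\ell{+}1)$ move their $j$-head, so that the cut respects those runs. A \nexp\ guess-and-check then suffices. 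Your claim that the $(\bot,b)$-transitions of a subsequence-closed automaton ``fix'' the deadlock in the product construction is not right as stated: that transition lets you advance the \emph{second} head of an edge $(i,j)$ while the first stays put, but the synchronization failure in an undirected cycle also involves situations where some automaton must advance its \emph{first} head (tape $i$) while another sharing tape $i$ as a \emph{second} head is not ready. $\SCR$ gives no slack there. The paper in fact remarks explicitly (just after Corollary~\ref{crpq-subsec-cor}'s antecedent discussion) that the witness set is no longer guaranteed to be rational once undirected acyclicity is dropped, which is why the argument switches techniques.

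Your lower bound is also different and, as written, too thin to be convincing. The paper does not use the $2$-cycle $I=\{(1,2),(2,1)\}$; it takes a \emph{ternary} $R$ and $I=\{(1,2),(2,1),(1,3),(3,1),(2,3),(3,2)\}$, together with an $\SCR$ relation $S$ that reads leading markers $\star,\dagger$ to dispatch to one of three sub-relations: a subsequence test on the index alphabet $\{1,\dots,n\}$ between the two $\dagger$-words, and subsequence tests on $\Sigma$ between the $\star$-word and each $\dagger$-word. The bidirectional pairs then force equality of index sequences and equality of $\Sigma$-contents separately, encoding PCP. Your $2$-cycle plan has to get both the index-matching and the $\Sigma$-matching out of a \emph{single} $S$ applied in both directions; because $S\in\SCR$ is downward closed in its first argument, each constraint $(w_1,w_2)\in S$ says almost nothing about $w_1$ (in particular $(\varepsilon,w_2)\in S$), so it is not clear how ``padding with $\#$'' yields a rigid equality here---Proposition~\ref{prop:reducRPEP-inter-REG-subseq} is about producing a $\REG$ relation whose intersection with $\subseq$ is nonempty, which is a different shape of statement. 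If you want to keep a $2$-cycle, you would need to exhibit the rational $T$ and argue concretely that $(w_1,w_2)\in T_\subseq$ and $(w_2,w_1)\in T_\subseq$ together with $w_i\in L_i$ pin down a PCP solution; the sketch does not do this.
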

\newcommand{\SCRpo}{{\sf SCR}_{\prec}}

\begin{proof}
We start by proving the first item. In order to do that, we first prove
a small model property for the size of the witnesses of the instances
in $\INTPI \REC S I$, when $S$ is a relation in $\SCR$ and $I$ is a DAG. Let
$R$ be an $m$-ary recognizable relation, $m > 0$, and $I \subseteq [m]
\times [m]$ that defines a DAG. Assume that
both $R$ and $S$ are over $\Sigma$. Then the following holds: Assume
$R \cap_I S \neq \emptyset$. There is $\bar w = (w_1,\dots,w_m) \in
(\Sigma^*)^m$ of at most exponential size that is accepted by $R$ and
such that $(w_i,w_j) \in S$, for each $(i,j) \in I$. We prove this
small model property by applying usual cutting techniques.

Assume that $R$ is given as $$\bigcup_i \NN_{i_1} \times \cdots \times
\NN_{i_m},$$ where each $\NN_{i_j}$ is an NFA over $\Sigma$. Further,
assume that $S$ is given as one of the 2-tape NFAs used in the
\pspace\ upper bound of Theorem \ref{acyclic-thm}. That is, $S$
defined by the tuple $(Q_S,\Sigma,Q^0_S,\delta_S,Q_S^F)$, where $Q_S$
is the set of states, the set of initial states is $Q^0_S$, the set of
final states is $Q^F_S$, and $\delta_S : Q_S \times (\Sigma \cup
\{\e\}) \times (\Sigma \cup \{\e\}) \to 2^{Q \times (\{1,2\} \cup
\{\{1,2\}\})}$ is the transition function.  Assume also that there is
$\bar u = (u_1,\dots,u_m) \in (\Sigma^*)^m$ that is accepted by $R$
such that $(u_i,u_j) \in S$, for each $(i,j) \in I$. Then $\bar u$ is
accepted by $\NN_{i_1} \times \cdots \times \NN_{i_m}$, for some $i$.

Since $I$ is a DAG it has a topological order on $[m]$. 
We assume without loss of generality that such
 topological order is precisely the linear order on $[m]$. 
We prove
 the following invariant on $1 \leq \l \leq m$: There exists $\bar w =
 (w_1,\dots,w_m) \in (\Sigma^*)^m$ such that (1) $\bar w$ is accepted
 by $R$, (2) $(w_j,w_k) \in S$, for each $(j,k) \in I$, and 
(3) each $w_{\l'}$ with $\l' \leq \l$ is of at
most exponential size. Clearly this proves our small model property on
$\l = m$. The proof is by induction.

The basis case is $\l = 1$. We start from $\bar u$ and ``cut" its
first component in order to satisfy the invariant.  By
using standard pumping techniques it is possible to show that there is
a subsequence $w_1$ of $u_1$ of size at most $O(|\NN_{i_1}|)$ that is accepted
by $\NN_{i_1}$. Clearly the tuple $(w_1,u_2,\dots,u_m)$ belongs to
$R$. Further, for each pair of the form $(1,j)$ in $I$ it is the case
that $(w_1,u_j) \in S$. This is the case because $(u_1,u_j) \in S$,
$u_1 \subseq w_j$ and $S \in \SCR$. Notice that we do
not need to consider pairs of the form $(j,1)$ since we are assuming
that the linear order on $[m]$ is a topological order of $I$. This
implies that $(w_1,u_2,\dots,u_m)$ satisfies our invariant on $\l =
1$.

Assume now that the invariant holds for $\l < m$. Then there exists $\bar w =
 (w_1,\dots,w_m) \in (\Sigma^*)^m$ such that (1) $\bar w$ is accepted
 by $R$, (2) $(w_j,w_k) \in S$, for each $(j,k) \in I$, and 
(3) each $w_{\l'}$ with $\l' \leq \l$ is of at
 most exponential size.  We proceed to ``cut" $w_{\l+1}$ while
 preserving the invariant. Let $I(\l+1)$ be $\{ 1 \leq j \leq \l \mid
 (j,\l+1) \in I\}$. Let $\rho_j$ be an accepting
 run of $S$ over
 $(w_j,w_{\l+1})$, for each $j \in I(\l+1)$. Further, let $\cal{P}$ be
 the set of all positions $1 \leq k \leq |w_{\l+1}|$ such that for
 some $j \in I(\l+1)$ the accepting run $\rho_j$ contains a subword of
 the form $q \, (k',k) \, q' \, (k'+1,k)$, where $q,q' \in Q_S$ and $1 \leq k'
 \leq |w_j|$. That is, $\cal{P}$ defines the set of positions over
 $w_{\l+1}$, in which the accepting run $\rho_j$ of $S$ over
 $(w_j,w_{\l+1})$, for some $j \in I(\l+1)$, makes a move on the head
 positioned over $w_j$.
 Intuitively, these are the positions of $w_{\l+1}$ that should not be
 ``cut" in order to maintain the invariant. Notice that the size of
 $\cal{P}$ is
 bounded by $s := \Sigma_{1 \leq \l' \leq \l} |w_{\l'}|$, and hence from the
 inductive hypothesis the size of $\cal{P}$ is exponentially bounded.

By using standard pumping techniques it is possible to show that there
is a subsequence $w'_{\l+1}$ of $w_{\l+1}$ of size at most
$|\NN_{i_{\l+1}}| \cdot |\cal{P}| \cdot 
|I(\l+1) \cdot |Q_S| \cdot |\Sigma| + 2$, such that
$w'_{\l+1}$ is accepted by $\NN_{i_{\l+1}}$ and $(w_j,w'_{\l+1})$ is
accepted by $S$, for each $j \in I(\l+1)$. 
Assume this is not the case, and 
that the shortest subsequence $w'_{\l+1}$ of $w_{\l+1}$ that satisfies this condition 
is of length strictly bigger than $|\NN_{i_{\l+1}}| \cdot |\cal{P}| \cdot 
|I(\l+1)| \cdot |Q_S| \cdot |\Sigma| + 2$. Then there exist two positions $1 \leq i <
j \leq |w_{\l+1}|$ such that (i) $k \not \in \cal{P}$, for each $i
\leq k \leq j$, (ii) the labels of $i$ and $j$ in $w_{\l+1}$ coincide,
(iii) the run $\rho_s$ assigns the same state to both $i$ and $j$, for
each $s \in I(\l+1)$, and (iv) some accepting run of $\cal{N}_{\l+1}$
assigns the same state to both $i$ and $j$. Let $w''_{\l+1}$ be the
subsequence of $w'_{\l+1}$ that is obtained by cutting all positions
$i \leq k \leq j-1$. Clearly, $w''_{\l+1}$ is shorter than $w'_{\l+1}$
and is accepted by $\cal{N}_{\l+1}$. Further, $(w_s,w''_{\l+1})$ is
accepted by $S$, for every $s \in I(\l+1)$. This is because 
$(w_s,w'_{\l+1})$ is invariant with respect to the accepting run
$\rho_s$, for each $s \in I(\l+1)$, as the cutting does not include
elements in $\cal{P}$ (that is, we only cut elements in which $\rho_s$
does not need to synchronize with the head positioned over $w_s$) and
$\rho_s$ assigns the same state to both $i$ and $j$, which have, in addition,
the same label. This is a contradiction. 

We claim that $\bar w' =
(w_1,\dots,w_\l,w'_{\l+1},w_{\l+2},\cdots,w_m) \in (\Sigma^*)^m$
satisfies the invariant. Clearly, $\bar w'$ is accepted by $R$ since
$w'_{\l+1}$ is accepted by $\NN_{i_{\l+1}}$ and, by inductive
hypothesis, $w_j$ is accepted by $\NN_{i_j}$, for each $j \in [m]
\setminus \{\l+1\}$. Further, simply by definition it is the case that
$(w_j,w'_{\l+1}) \in S$, for each $j \in I(\l+1)$. Moreover,
$(w'_{\l+1},w_j) \in S$, for each $(\l+1,j) \in I$, simply because
$w'_{\l+1} \subseq w_{\l+1}$ and $S \in \SCR$. The remaining pairs
in $I$ are satisfied by induction hypothesis. Finally, $w'_{\l+1}$ is
of size at most $O(|\NN_{i_{\l+1}}| \cdot |\cal{P}| \cdot |I(\l+1)|
\cdot |Q_S| \cdot |\Sigma|)$, and hence, by inductive hypothesis, it
is of size at most exponential. By inductive hypothesis, each
$w_{\l'}$ with $\l' \leq \l$ is of size at most exponential.

\smallskip 

It is now simple to prove the first part of the theorem using the
small model property. In fact, in order to check whether $R \cap_I S
\neq \emptyset$, for $S \in \SCR$, we only need to guess an
exponential size witness $\bar w$, and then check in polynomial time
that it satisfies $R$ and each projection in $I$ satisfies $S$. This
algorithm clearly works in nondeterministic exponential time.

\medskip 

Now we prove the second item. We reduce from the PCP problem. Assume
that the input to PCP are two equally long lists $a_1,a_2,\dots,a_n$
and $b_1,b_2,\dots,b_n$ of strings over alphabet $\Sigma$. Recall that
we want to decide whether there exists a solution for this input, that
is, a sequence of indices $i_1,i_2,\dots,i_k$ such that $1 \leq i_j
\leq n$ ($1 \leq j \leq k$) and $a_{i_1} a_{i_2} \cdots a_{i_k} =
b_{i_1} b_{i_2} \cdots b_{i_k}$.

Assume without loss of generality that $\Sigma$ is disjoint from
$\mathbb{N}$. Corresponding to every input $a_1,a_2,\dots,a_n$ and
$b_1,b_2,\dots,b_n$ of PCP over alphabet $\Sigma$, we define the
following: 

\begin{iteMize}{$\bullet$}

\item An alphabet $\Sigma(n) := \Sigma \cup
\{1,2,\dots,n\}$;  

\item a regular language $R_{a,n} := (\bigcup_{1 \leq i \leq n} a_i
\cdot i)^*$; 

\item a regular language $R_{b,n} := (\bigcup_{1 \leq j \leq n} b_j
\cdot j)^*$. 

\end{iteMize}

Consider a ternary recognizable relation $R$ over alphabet $\Sigma(n)
\cup \{\star,\dagger\}$, where $\star$ and $\dagger$ are symbols not
appearing in $\Sigma(n)$, defined as $$\big(\star \cdot \Sigma^*\big)
\, \times \, \big(\dagger \cdot R_{a,n}\big) \, \times \, \big(\dagger
\cdot R_{b,n}\big).$$ Further, consider a binary relation $S$ over
$(\Sigma(n) \cup \{\star,\dagger\})^*$ defined as the union of the
following sets:

\begin{enumerate}[(1)]

\item $\{(w,w') \in (\dagger \cdot (\Sigma(n))^*) \times (\dagger \cdot
(\Sigma(n))^*) \, \mid \, \text{$w_{\{1,\dots,n\}} \subseq w'_{\{1,\dots,n\}}$}\}$. 


\item $\{(w,w') \in (\dagger \cdot (\Sigma(n))^*) \times (\star \cdot
\Sigma^*) \, \mid \, \text{$w_{\Sigma} \subseq w'_\Sigma$\}}$.

\item $\{(w,w') \in (\star \cdot \Sigma^*) \times (\dagger \cdot
(\Sigma(n))^*) \, \mid \, \text{$w_{\Sigma} \subseq w'_\Sigma$\}}$.

\end{enumerate} 
The intuition is that $S$ takes care that indices in 
the sequences are consistent. 
It is easy to see that $S$ is a rational
relation, which implies that $S_\subseq$ is in $\SCR$.

From input $a_1,\dots,a_n$ and $b_1,\dots,b_n$ to the PCP problem, we
construct an instance of $\GENINT_{S_\subseq}(\REC)$ defined by the recognizable
relation $R$ and $$I \ = \
\{(1,2),(2,1),(1,3),(3,1),(2,3),(3,2)\}.$$ We claim that
$R \cap_I S \neq \emptyset$ if and only if the PCP instance given by
lists $a_1,\dots,a_n$ and $b_1,\dots,b_n$ has a solution. 

\medskip 

Assume first that $R \cap_I S \neq \emptyset$. Hence there are words
$w_1 \in (\star \cdot \Sigma^*)$, $w_2 \in (\dagger \cdot R_{a,n})$
and $w_3 \in (\dagger \cdot R_{b,n})$, such that $(w_i,w_j)$ belongs
to $S_{\subseq}$, for each $(i,j) \in I$. Since $(2,3) \in I$, it must
be the case that $(w_2,w_3)$ belongs to $S_\subseq$. Thus, since
the first symbol of both $w_2$ and $w_3$ is $\dagger$, it must be the
case that $(w_2)_{\{1,\dots,n\}} \subseq (w_3)_{\{1,\dots,n\}}$. For
the same reasons, and given that $(3,2) \in I$, it must be the case
that $(w_3)_{\{1,\dots,n\}} \subseq (w_2)_{\{1,\dots,n\}}$. We
conclude that $(w_2)_{\{1,\dots,n\}} = (w_3)_{\{1,\dots,n\}}$. 

Since $(1,2) \in I$, it must be the case that $(w_1,w_2)$ belongs to
$S_\subseq$. Thus, since the first symbol of $w_1$ is $\star$ and the
first symbol of $w_2$ is $\dagger$, it must be the case that
$(w_1)_{\Sigma} \subseq (w_2)_{\Sigma}$. For the same reasons, and
given that $(2,1) \in I$, it must be the case that $(w_2)_{\Sigma}
\subseq (w_1)_{\Sigma}$. We conclude that $(w_1)_{\Sigma} =
(w_2)_{\Sigma}$.

Mimicking the same argument, but this time using the fact that
$\{(1,3),(3,1)\} \subseteq I$, we conclude that $(w_1)_{\Sigma} =
  (w_3)_{\Sigma}$. 
But then $(w_2)_{\Sigma} = (w_3)_{\Sigma}$ (because
  $(w_1)_{\Sigma} = (w_2)_{\Sigma}$). 

Assume $(w_2)_{\{1,\dots,n\}} = (w_3)_{\{1,\dots,n\}} = i_1 i_2
\cdots i_n$, where each $i_j \in [n]$. Then from the fact
that $(w_2)_{\Sigma} = (w_3)_{\Sigma}$ we conclude that $a_{i_1}
a_{i_2} \cdots a_{i_n} = b_{i_1} b_{i_2} \cdots b_{i_n}$, and hence
that the instance of the PCP problem given by $a_1,\dots,a_n$ and
$b_1,\dots,b_n$ has a solution.

\medskip 

The other direction, that is, that the fact that the instance of the
PCP problem given by $a_1,\dots,a_n$ and $b_1,\dots,b_n$ has a
solution implies that $R \cap_I S \neq \emptyset$, can be proved using
the same arguments. 
\end{proof} 

\smallskip 

In particular, if we have a \crpq($S$) query of the form 
$$\exists \bar y\ \Big( \bigwedge_{i=1}^m (u_i
\Ltoo{\chi_i:L_i} u_i') \ \  \wedge \ \  \bigwedge_{(i,j)\in I}
S(\chi_i,\chi_j)\Big),$$ where
$I$ is acyclic (as a directed graph) and $S\in\SCR$, then query
evaluation has \nexp\ combined complexity. 

The proof of this result is quite different from the upper bound proof
of Theorem \ref{acyclic-thm}, since the set of witnesses for the
generalized intersection problem is no longer guaranteed to be rational
without the undirected acyclicity condition. Instead, here we establish the
finite-model property, which implies the result. 

\medskip 

Also, as a corollary to the
proof of Theorem \ref{scr-dichotomy}, we get the following result:

\begin{prop}
\label{po-scr-prop}
Let $S\in\SCR$ be a partial order. Then $\GENINT_S(\REC)$ is decidable
in \nexp. 
\end{prop}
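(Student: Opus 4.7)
The plan is to reduce $\GENINT_S(\REC)$ to its restriction on DAG instances, which is in $\nexp$ by the first item of Theorem~\ref{scr-dichotomy}. Given an instance consisting of an $m$-ary recognizable relation $R$ and $I \subseteq [m]^2$, I would first compute the strongly connected components $C_1,\dotsc,C_p$ of the directed graph $(V,I)$ with $V=[m]$. The crucial observation is that within each SCC, the partial order structure of $S$ forces equality: if $i,j \in C_k$, then $I$ contains directed paths from $i$ to $j$ and from $j$ to $i$, so for any tuple $(w_1,\dotsc,w_m) \in R \cap_I S$, transitivity of $S$ gives $S(w_i,w_j)$ and $S(w_j,w_i)$, and then antisymmetry forces $w_i=w_j$.

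Next, I would build a reduced instance on the quotient. Writing $R = \bigcup_\ell L_{\ell,1} \times \dotsb \times L_{\ell,m}$, define the $p$-ary recognizable relation $R' = \bigcup_\ell \prod_{k=1}^p \bigcap_{i \in C_k} L_{\ell,i}$, constructible in polynomial time since intersections of NFAs are polynomial. Let $I' \subseteq [p]^2$ contain $(k,k')$ with $k \neq k'$ whenever some $(i,j) \in I$ satisfies $i \in C_k$ and $j \in C_{k'}$. By the defining property of the condensation, the directed graph $([p], I')$ is acyclic. Intra-SCC constraints in $I$ (including self-loops $(i,i)$) are automatically discharged by reflexivity of $S$ once one enforces equality of all coordinates within a common SCC, which is exactly what the intersection in $R'$ encodes.

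A short verification then shows $R \cap_I S \neq \emptyset$ iff $R' \cap_{I'} S \neq \emptyset$: the forward direction follows by projecting any witness onto one representative per SCC, using the equality just proved; conversely, any witness $(w_1,\dotsc,w_p) \in R' \cap_{I'} S$ expands to $(u_1,\dotsc,u_m)$ by setting $u_i = w_k$ for $i \in C_k$, which lies in $R$ by the choice of intersections and satisfies every $(i,j) \in I$ either by reflexivity (when $i,j$ share an SCC) or by the corresponding cross-SCC constraint in $I'$. Applying the first item of Theorem~\ref{scr-dichotomy} to $(R', I')$ delivers the $\nexp$ bound. The only subtlety worth checking is that the reduction stays polynomial despite the nested intersections, which is immediate since each $\bigcap_{i \in C_k} L_{\ell,i}$ is an intersection of at most $m$ NFAs and therefore has polynomial-size representation; the main conceptual ingredient is simply the combined use of antisymmetry and transitivity to collapse the cyclic part of $I$.
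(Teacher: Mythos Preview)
Your overall strategy---collapsing strongly connected components using antisymmetry and transitivity of $S$, then reducing to the DAG case of Theorem~\ref{scr-dichotomy}---is exactly what the paper does. The equivalence $R \cap_I S \neq \emptyset \Leftrightarrow R' \cap_{I'} S \neq \emptyset$ is correctly argued.

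However, there is a genuine gap in your complexity analysis. You claim that ``each $\bigcap_{i \in C_k} L_{\ell,i}$ is an intersection of at most $m$ NFAs and therefore has polynomial-size representation.'' This is false: the standard product construction for the intersection of NFAs of sizes $n_1,\dotsc,n_{|C_k|}$ yields an NFA of size $\prod_i n_i$, which is in general exponential in $|C_k|$ and hence in the input. The paper explicitly flags this: ``Notice that this new input may be of exponential size in the size of $R$.'' If you then invoke the first item of Theorem~\ref{scr-dichotomy} as a black box on an instance that is already exponentially larger, you obtain only a doubly-exponential upper bound, not \nexp.

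The paper's fix is not to treat Theorem~\ref{scr-dichotomy} as a black box but to revisit its inductive small-model argument and verify that, even when the component NFAs $\M_i^{[j]_{\mathcal{E}_I}}$ are exponentially large, the witness-size bound remains singly exponential in the original input. That step---re-running the cutting argument with the blown-up automata and checking that the arithmetic still gives a singly-exponential witness---is precisely what is missing from your write-up. Without it, your reduction is structurally correct but does not deliver the stated complexity.
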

\begin{proof}  
As in the previous proof, we start
by proving a small model property for the size of the witnesses of the
instances in $\GENINT_S(\REC)$, for $S$ a partial order in $\SCR$. Let
$R$ be an $m$-ary recognizable relation, $m > 0$, and $I \subseteq [m]
\times [m]$. Assume that both $R$ and $S$ are over $\Sigma$. Then the
following holds: Assume $R \cap_I S \neq \emptyset$. There is $\bar w
= (w_1,\dots,w_m) \in (\Sigma^*)^m$ of at most exponential size that
is accepted by $R$ and such that $(w_i,w_j) \in S$, for each $(i,j)
\in I$. We prove this small model property by applying usual cutting
techniques.

Assume that $R$ is given as $$\bigcup_i \NN_{i_1} \times \cdots \times
\NN_{i_m},$$ where each $\NN_{i_j}$ is an NFA over $\Sigma$. Further,
assume that $S$ is given as the 2-tape transducer $S$ defined by the
tuple $(Q_S,\Sigma,Q^0_S,\delta_S,Q_S^F)$, where $Q_S$ is the set of
states, the set of initial states is $Q^0_S$, the set of final states
is $Q^F_S$, and $\delta_S : Q_S \times (\Sigma \cup \{\e\})
\times (\Sigma \cup \{\e\}) \to 2^{Q \times (\{1,2\} \cup
\{\{1,2\}\})}$ is the transition function.
Assume also that there is $\bar u = (u_1,\dots,u_m) \in (\Sigma^*)^m$
that is accepted by $R$ and such that $(u_i,u_j) \in S$, for each
$(i,j) \in I$. Then $\bar u$ is accepted by $\NN_{i_1} \times \cdots
\times \NN_{i_m}$, for some $i$.

Let $I^+$ be the transitive closure of $I$. Notice, since $S$ defines
a partial order over $\Sigma^*$, that $(u_j,u_k) \in S$, for each
$(j,k) \in I^+$. Further, for every pair $(j,k) \in [m] \times [m]$
such that $\{(j,k),(k,j)\} \subseteq I^+$ we must have that $u_j =
u_k$. We need to maintain such equality when applying our cutting
techniques over $\bar u$. In order to do that we define an equivalence
relation $\mathcal{E}_I$ over $[m]$ as follows: $$\mathcal{E}_I \ := \
\{(j,k) \in [m] \times [m] \mid j = k \text{ or } \{(j,k),(k,j)\}
\subseteq I^+\}.$$ Hence $\mathcal{E}_I$ contains all pairs $(j,k) \in
[m] \times [m]$ such that $I$ implies $u_j = u_k$. Take the quotient
$[m]/\mathcal{E}_I$, and consider the restriction
$I([m]/\mathcal{E}_I)$ of $I$ over $[m]/\mathcal{E}_I$, defined in the
expected way: $([j]_{\mathcal{E}_I},[k]_{\mathcal{E}_I}) \in
I([m]/\mathcal{E}_I)$ if and only if $(j',k') \in I$, for some $j'\in
[j]_{\mathcal{E}_I}$ and $k' \in [k]_{\mathcal{E}_I}$. Notice that
$I([m]/\mathcal{E}_I)$ defines a DAG over $[m]/\mathcal{E}_I$.

Consider now a new input to $\GENINT_S(\REC)$, given this time by
$I([m]/\mathcal{E}_I) \subseteq ([m]/\mathcal{E}_I) \times
([m]/\mathcal{E}_I)$, and the recognizable relation $R'$ defined as
$$\prod_{[j]_{\mathcal{E}_I} \in [m]/\mathcal{E}_I}
\M_{i}^{[j]_{\mathcal{E}_I}},$$ where $\M_{i}^{[j]_{\mathcal{E}_I}} =
\bigcap_{k \in [j]_{\mathcal{E}_I}} \NN_{i_k}$.  Notice that this new
input may be of exponential size in the size of $R$.  

Assume that
$[m]/\mathcal{E}_I$ consists of $p \leq m$ equivalence classes and,
without loss of generality, that these correspond to the first $p$
indices of $[m]$. Hence each product in $R'$ is of the form $\prod
\M_{i_1} \times \cdots \times \M_{i_p}$, where $\M_{i_j}$ is defined
as the intersection of all NFAs in the equivalence class
$[j]_{\mathcal{E}_I}$. Also, $I([m]/\mathcal{E}_I)$ is the restriction
of $I$ to $[p] \times [p]$.  Then it must be the case that
 $(u_1,\dots,u_{p}) \in (\Sigma^*)^p$ belongs to $R'$ and 
 $(u_j,u_k) \in S$, for each
 $(j,k) \in I([m]/\mathcal{E}_I)$. Further, from
 every witness to the fact that $R' \cap_{I([m]/\mathcal{E}_I)} S \neq
 \emptyset$ we can construct in polynomial time a witness to the fact
 that $R \cap_I S \neq \emptyset$. Hence, in order to prove our small
 model property it will be enough to prove the following: There is
 $\bar w = (w_1,\dots,w_p)
 \in (\Sigma^*)^p$ of at most exponential size (in $R$) that is accepted by
 $R'$ and such that $(w_j,w_k) \in S$, for each $(j,k) \in
 I([m]/\mathcal{E}_I)$.

The latter can be done by mimicking the inductive proof of the first
part of Theorem \ref{scr-dichotomy}. We only have to deal now with the
issue that some of the NFAs that define $R'$ may be exponential in the
size of $R$. However, by following the inductive proof one observes
that this is not a problem, and that the same exponential bound holds
in this case.  

\smallskip 

It is now simple to prove the first part of the theorem using the
small model property. In fact, in order to check whether $R \cap_I S
\neq \emptyset$, for $S$ a partial order in $\SCR$, we only need to
guess an exponential size witness $\bar w$, and then check in
exponential time that it satisfies $R$ and each projection in $I$
satisfies $S$. This algorithm clearly works in nondeterministic
exponential time.  
\end{proof}

By applying similar techniques to those in the proof of 
Theorem \ref{crpqs-thm} we obtain the following.  

\begin{cor}
\label{crpq-subsec-cor} If $S\in\SCR$ is a partial order, then 
\crpq($S$) queries can be evaluated with
\nexp\ combined complexity. In particular, 
 \crpq($\subseq$) queries have 
\nexp\ combined complexity.
\end{cor}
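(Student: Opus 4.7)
The plan is to combine Proposition \ref{po-scr-prop} with the reduction from \crpq($S$) evaluation to the generalized intersection problem already developed in the proofs of Lemma \ref{crpqs-lemma-one} and Theorem \ref{crpqs-thm}. The key observation is that $\subseq$ is both a partial order and an $\SCR$ relation (the latter is noted right after the definition of $\SCR$), so the second statement will follow from the first.

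For the first statement, given an input consisting of a graph $G$, a tuple $\bar v$, and a \crpq($S$) query $\phi(\bar x)$ of the form (\ref{crpqs-eq}) where $S\in\SCR$ is a partial order, I would proceed along the lines of the combined-complexity algorithm in the proof of Theorem \ref{crpqs-thm}: first, nondeterministically guess a tuple $\bar b$ of nodes of $G$ interpreting the existentially quantified variables $\bar y$ (this is a guess of polynomial size in the input), and then verify the quantifier-free matrix. The verification reduces to the generalized intersection problem as follows: using the construction in the proof of Lemma \ref{crpqs-lemma-one}, I would build in polynomial time a single $m$-ary recognizable relation $R_{\bar b} = \prod_{i=1}^m (\LL(\A(G,n_i,n_i'))\cap L_i)$, where each $\A(G,n_i,n_i')$ is the \NFA obtained from $G$ by fixing the initial and final states according to the assignment, and the set $I$ is the one directly read off from the atoms $S(\chi_i,\chi_j)$ in $\phi$. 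Then $G\models\phi(\bar v)$ witnessed by $\bar b$ iff $R_{\bar b}\cap_I S\neq\emptyset$.

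Now I would invoke Proposition \ref{po-scr-prop}: since $S$ is a partial order in $\SCR$, the problem $\GENINT_S(\REC)$ is decidable in \nexp, so I can decide $R_{\bar b}\cap_I S\neq\emptyset$ in nondeterministic exponential time. Composing the nondeterministic polynomial guess of $\bar b$ with the nondeterministic exponential verification keeps the overall algorithm in \nexp, yielding the combined complexity bound. The second statement then follows immediately, since $\subseq$ is a partial order (reflexive, transitive, antisymmetric on $\sigmas$) and belongs to $\SCR$ by Proposition \ref{scr-prop} and the discussion before it, so \crpq($\subseq$) inherits the \nexp\ combined complexity.

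I do not anticipate a significant obstacle here, as all the heavy lifting has already been done: the small-model property proved inside Proposition \ref{po-scr-prop} yields exponential-size witnesses for the generalized intersection problem, and the reduction from \crpq($S$) evaluation to $\GENINT_S(\REC)$ is a direct product construction that does not blow up the input. The only mild subtlety is ensuring that guessing the assignment $\bar b$ and then invoking an \nexp-oracle composes to an \nexp\ algorithm rather than something larger, but this is standard since \nexp\ is closed under polynomial-time nondeterministic pre-processing.
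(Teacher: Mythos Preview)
Your proposal is correct and follows essentially the same approach as the paper: the paper simply states that the corollary follows ``by applying similar techniques to those in the proof of Theorem~\ref{crpqs-thm}'', which is exactly the guess-$\bar b$-then-reduce-to-$\GENINT_S(\REC)$ argument you spell out, combined with the \nexp\ bound from Proposition~\ref{po-scr-prop}. Your treatment is in fact more explicit than the paper's one-line justification, and your observation about $\subseq$ being a partial order in $\SCR$ correctly handles the ``in particular'' clause.
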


We do not have at this point a matching lower bound for the complexity
\crpq($\subseq$) queries. Notice that an easy \pspace\ lower bound
follows by a reduction from the intersection problem for NFAs, as the 
one presented in the proof of Theorem \ref{acyclic-thm}.  

\smallskip 

The last question is whether these results can be extended to other
relations considered here, such as subword and suffix. We do not know
the result for subword (which appears to be hard), but we do have a
matching complexity bound for the suffix relation.

\begin{prop}
\label{crpq-suff-prop}
The problem
$\GENINT_{\suff}(\REC)$ is decidable in \nexp. In particular, \crpq($\suff$)
queries can be evaluated with \nexp\ combined complexity. 
\end{prop}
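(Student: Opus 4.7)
The plan is to establish a singly-exponential small-model property for $\GENINT_{\suff}(\REC)$, so that a standard guess-and-verify yields \nexp; the \crpq($\suff$) combined-complexity consequence then follows by the template of Theorem \ref{crpqs-thm} and Corollary \ref{crpq-subsec-cor}, namely guessing node assignments for the existentially quantified variables and reducing via Lemma \ref{crpqs-lemma-one} to a $\GENINT_{\suff}(\REC)$ instance.

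First I would reduce to the prefix relation: since $w \suff w'$ iff $w^R \pref (w')^R$ and \REC\ is closed under componentwise reversal of its projections, it suffices to bound witness size for $\GENINT_{\pref}(\REC)$. After nondeterministically choosing one disjunct of the union $R = \bigcup_i \NN_{i_1} \times \cdots \times \NN_{i_m}$, I may assume $R = \NN_1 \times \cdots \times \NN_m$ is a single product.

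The core idea is a trie-based pumping. Given any witness $(w_1,\dots,w_m)$, consider the smallest subtree $T$ of the trie of $\Sigma^*$ containing $\{w_1,\dots,w_m\}$: its nodes are the $w_i$'s together with their pairwise longest common prefixes, yielding $O(m)$ nodes and $O(m)$ edges, each labeled by a nonempty string. The constraint $w_i \pref w_j$ holds precisely when $w_i$ is an ancestor of $w_j$ in $T$, a property of the shape of $T$ and not of edge lengths. Hence I shrink the witness by uniformly shortening edge labels. For an edge from $u$ to $v$ labeled $\ell$, let $D(v) = \{k : v \pref w_k\}$ be the indices whose words traverse it; each $\NN_k$ with $k \in D(v)$ runs on $\ell$ from some state $q_k^u$ to some $q_k^v$, so the joint walk $(q_k^u)_k \to (q_k^v)_k$ lives in the product automaton $\prod_{k \in D(v)} \NN_k$, of singly exponential size. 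A pigeonhole cycle excision on this joint walk replaces $\ell$ by an $\ell'$ of length at most the product's size, preserving the joint endpoints. Substituting $\ell'$ for $\ell$ simultaneously in every $w_k$ with $k \in D(v)$ preserves each $\NN_k$'s acceptance, and a short case analysis on whether each side of a constraint $(i,j) \in I$ lies in $D(v)$ preserves every prefix relation --- the crucial observation being that $i \in D(v), j \notin D(v)$ is impossible, since $v \pref w_i \pref w_j$ forces $j \in D(v)$. Summing the new edge lengths over the $O(m)$ edges of $T$ gives a singly exponential bound on each $|w_j|$.

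The \nexp\ algorithm then guesses a witness of exponential length in binary and verifies NFA memberships and prefix constraints in exponential time. The main obstacle will be making the trie pumping step fully rigorous: one must verify that the simultaneous substitution of $\ell$ by $\ell'$ across all $w_k$ with $k \in D(v)$ is well-defined and interacts cleanly with all the other $w_i$'s, which uses the defining property of $T$ that no $w_i$ lies strictly inside the interior of any edge, so all words traversing the edge agree on its positions and the pumping happens uniformly.
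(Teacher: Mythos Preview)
Your proposal is correct and takes a genuinely different route from the paper's proof.

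The paper works directly with $\suff$: after first reducing to the case where $I$ is a DAG (using that $\suff$ is a partial order, as in Proposition~\ref{po-scr-prop}), it fixes a witness $(u_1,\dots,u_m)$ and defines in each $u_k$ a set $\M_k$ of \emph{marked positions}---the offsets at which some $u_j$ with $(j,k)\in I^+$ begins as a suffix---and shows $|\M_k|$ is polynomial in $m$. It then argues that between any two consecutive marks one can pump: using equivalence relations $\LHo{r}$ that group together strings sharing a position at distance $r$ from the end, it finds two positions $r>r'$ where the symbols, the automaton states, and the $\LHo{\cdot}$-class all coincide across every string in the class, and cuts all of them simultaneously by $r-r'$. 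A case analysis then shows suffix constraints survive.

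Your approach avoids both the DAG reduction and the marked-position machinery. Reversing to $\pref$ lets you read the witness as a compressed trie with $O(m)$ nodes; the prefix constraints become purely structural (ancestorship), so replacing each edge label by a short one that realises the same joint state transition in $\prod_{k\in D(v)}\NN_k$ preserves both membership and all constraints at once. This is cleaner and more conceptual: the trie makes the ``simultaneous cut'' structure explicit rather than reconstructing it via the $\LHo{r}$ relations, and it handles cyclic $I$ for free (since $(i,j),(j,i)\in I$ just forces $n_i=n_j$ in the trie). The paper's argument, by contrast, stays closer to the string level and needs more bookkeeping, but does not rely on any auxiliary tree structure. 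Both yield the same singly-exponential bound and the same \nexp\ algorithm. One small point worth making explicit in your write-up: the pumping is best phrased as replacing \emph{all} edge labels simultaneously (fixing one accepting run $\rho_k$ per $k$ and reading off the joint states at each trie node), so you never have to track how the trie evolves under successive single-edge pumps.
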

\begin{proof}
We only prove that $\GENINT_{\suff}(\REC)$ is decidable
in \nexp. The fact that \crpq($\suff$)
queries can be evaluated with \nexp\ combined complexity follows
easily from this by applying the same techniques as in the proof of
Theorem \ref{crpqs-thm}.   

We start by proving a small model property for the size of the
witnesses of the instances in $\GENINT_{\suff}(\REC)$. Let $R$ be an
$m$-ary recognizable relation, $m > 0$, and $I \subseteq [m] \times
[m]$. Assume that both $R$ and
$\suff$ are over $\Sigma$. Then the following holds: Assume it is the
case that  
$R \cap_I \{\suff\}
\neq \emptyset$. There is $\bar w = (w_1,\dots,w_m) \in (\Sigma^*)^m$
of at most exponential size that is accepted by $R$ and such that
$w_i \suff w_j$, for each $(i,j) \in I$. We prove this small model
property by applying cutting techniques.

Assume that $R$ is given as $$\bigcup_i \NN_{i_1} \times \cdots \times
\NN_{i_m},$$ where each $\NN_{i_j}$ is an NFA over $\Sigma$. We assume,
without loss of generality, that $I$ defines a DAG over $[m] \times
[m]$. In fact, assume otherwise; that is, $I$ does not define a DAG over $[m]
\times [m]$. Since $\suff$ defines a partial order over $\Sigma^*$, we
can always reduce in polynomial time the instance of
$\GENINT_{\suff}(\REC)$ given by $R$ and $I$ to an ``equivalent''
instance of $\GENINT_{\suff}(\REC)$ given by recognizable relation
$R'$ of arity $m' \leq m$ and $I' \subseteq [m'] \times [m']$ such
that $I'$ defines a DAG. We already showed how to do this for an
arbitrary partial order over $\Sigma^*$ in the proof of Proposition 
\ref{po-scr-prop}, so we prefer not to repeat the argument here, and simply
assume that $I$ defines a DAG over $[m] \times [m]$. Since
$I$ defines a DAG it has a topological order over $[m]$. We assume
without loss of generality that such topological order is precisely
the linear order on $[m]$.

Assume then that there is $\bar u = (u_1,\dots,u_m) \in (\Sigma^*)^m$
that is accepted by $R$ and such that $u_i \suff u_j$, for each $(i,j)
\in I$. Then $\bar u$ is accepted by $\NN_{i_1} \times \cdots \times
\NN_{i_m}$, for some $i$. Assume that the length of $u_j$ is $p_j \geq
0$, for each $1 \leq j \leq m$. Our goal is to ``cut'' $\bar u$ in
order to obtain an exponential size witness to the fact that $R \cap_I
\{\suff\} \neq \emptyset$. 

We recursively define the set $\M_k$ of {\em marked}
positions in string $u_k$, $1 \leq k \leq m$, as follows:

\begin{iteMize}{$\bullet$}

\item No position in $u_1$ is marked. 

\item For each $1 < k \leq m$ the set $\mathcal{M}_k$ of marked positions in
  $u_k$ is defined as the union of the marked positions in $u_k$ {\em with
  respect to $j$}, for each $j < k$ such that $(j,k) \in I$, where the
  latter is defined as follows.  Assume that $\mathcal{M}_j$ is the set of marked
  positions in $u_j$.  Then the set $\mathcal{M}_k$ 
of positions $1 \leq \l \leq p_k$
  that are marked in $u_k$ with respect to $j$ is $\{r + p_k - p_j
  \mid \text{$r = 1$ or $r \in \mathcal{M}_j$}\}$. (Notice that $p_k - p_j \geq 0$ since $u_j
  \suff u_k$, and hence $1 \leq r + p_k - p_j \leq p_k$ for each $r
  \in \M_j$ and for $r = 1$).

\end{iteMize} Intuitively, $\M_k$ consists of those positions $1 \leq
\l \leq p_k$ such that for some $j < k$ with $(j,k) \in I^+$, where
$I^+$ is the transitive closure of $I$, it is the case that that $u_k
= u_k[1,\l-1] \cdot u_j$. Or, in other words, the fact that $u_j \suff
u_k$ starts to be ``witnessed'' at position $\l$ of $u_k$. We assume
the $\M_k$'s to be linearly ordered by the restriction of the linear
order $1 < 2 < \cdots < m$ to $\M_k$.  By a simple inductive argument
it is possible to prove that the size of $\M_k$ is polynomially
bounded in $m$, for each $1 \leq k \leq m$.

\newcommand{\Lho}[1]{\stackrel{#1}{\rightharpoonup}}
\newcommand{\LHo}[1]{\stackrel{#1}{\rightleftharpoons}}

Since $u_j \suff u_k$, for each $(j,k) \in I$, this implies that the
labels in some positions of $u_j$ are preserved in the respective
positions of $u_k$ that witness the fact that $u_j \suff u_k$.  The
important thing to notice is that, since we are dealing with $\suff$,
the following holds: For each position $p$ that is ``copied'' from
$u_j$ into $u_k$ in order to satisfy $u_j \suff u_k$, the distance
from $p$ to the last element of $u_j$ equals the distance from the
copy of $p$ in $u_k$ to the last position of $u_k$. That is, distances
to the last element of the string are preserved when copying positions
(and labels) in order to satisfy $I$. 
We need to take
care of this information when ``cutting'' $\bar u$ in order to obtain
an exponential size witness for the fact that $R \cap_I \{\suff\} \neq
\emptyset$. In order to do this we define for each 
$0 \leq r \leq \max {\{p_k \mid 1 \leq k \leq m\}}$, a binary relation $\Lho{r}$ on
$\{u_1,\dots,u_m\}$ such that $u_j \Lho{r} u_{k}$ if $p_j - r > 0$ 
and $(j,k) \in I$. This implies that position $p_j - r$ of
$u_j$ is ``copied'' as position $p_{k} - r$ of $u_{k}$ in order to
satisfy the fact that $u_j \suff u_{k}$. 

But in order to consistently ``cut'' $\bar u$, we need to preserve the
suffix relation both with respect to forward and backward edges of the
graph defined by $I$.  In order to do that we define $\LHo{r}$ as
$(\Lho{r} \cup \, (\Lho{r})^{-1})$.  Further, since $\suff$ is a
partial order over $\Sigma^*$, and hence it defines a transitive
relation, it is important for us also to consider the transitive
closure $(\LHo{r})^+$ of the binary relation $\LHo{r}$. Intuitively,
$u_j (\LHo{r})^+ u_k$, for $1 \leq j,k \leq m$, if position $p_j - r$
of $u_j$ has to be ``copied'' into position $p_k - r$ of $u_k$ in
order for $\bar u$ to satisfy the pairs in $I$ with respect to
$\suff$.

Let $t := |\NN_{i_1}| \cdot |\NN_{i_2}| \cdots |\NN_{i_m}|$ and $s :=
(\sum_{1 \leq k \leq m} |\M_k|) + 1$. We claim the following: 
There is
$\bar w = (w_1,\dots,w_m) \in (\Sigma^*)^m$ such that: (1) $\bar w$ is
accepted by $R$, (2) $w_i \suff w_j$, for each $(i,j) \in I$, and (3)
for each $1 \leq k \leq m$
the number of positions in $w_k$ between any two consecutive positions in
$\M_k$ is bounded by $s \cdot t \cdot 2^m \cdot |\Sigma|^m$. 
This clearly implies our small model property.

Assume that $\bar u$ does not satisfy this. Then there exists $1 \leq
j \leq m$ and two consecutive positions $p$ and $p'$ in $\M_j$, such
that the number of positions in $u_j$ between $p$ and $p'$ is bigger
than $s \cdot t \cdot 2^m \cdot |\Sigma|^m$. But this implies that
there are two positions $p_j - r$ and $p_j - r'$ ($r > r'$) 
between $p$ and $p'$ in $u_j$ such that the following hold:

\begin{enumerate}[(1)]

\item $\{1 \leq k \leq m \mid u_j (\LHo{r})^+ u_k\} = \{1 \leq k \leq
  m \mid u_j (\LHo{r'})^+ u_k\}$. Intuitively, this says that the set
  of strings in which position $p_j - r$ of $u_j$ is ``copied''
  coincides with the set of strings in which position $p_j - r'$ of
  $u_j$ is ``copied''. 

\item For each $k$ such that $u_j (\LHo{r})^+ u_k$ it is the case that
  neither $p_k - r$ nor $p_{k} - r'$ is a marked position in
  $\M_k$, and there is no marked position in $\M_k$ in between $p_k -
  r$ and $p_{k} - r'$ in $u_k$.

\item The state assigned by the accepting run of $\NN_{i_j}$ over $u_j$
  to
  position $p_j - r$ of $u_j$ is the same than the one assigned to
  position $p_j - r'$. 

\item The state assigned by the accepting run of $\NN_{i_k}$ over $u_k$
  to
  the ``copy'' $p_k - r$ of position $p_j - r$ over $u_k$, for each 
$k$ such that $u_j (\LHo{r})^+ u_k$, is the 
 same than the one assigned to the ``copy'' $p_k - r'$ of position $p_j -
 r'$ over $u_k$.

\item The symbol in
  position $p_j - r$ of $u_j$ is the same as the symbol in 
  position $p_j - r'$ of $u_j$.   

\item For each $k$ such that $u_j (\LHo{r})^+ u_k$ it is the case that
  the symbol in position $p_k - r$ of $u_k$ is the same as the
  symbol in position $p_k - r'$ of $u_k$. 

\end{enumerate}
Intuitively, this states that if we ``cut'' the string $u_j$ from
position $p_j - r + 1$ to $p_j - r'$, and string $u_k$ from position $p_k
- r + 1$ to $p_k - r'$, for each $k$ such that $u_j (\LHo{r})^+ u_k$, then
the resulting $\bar u' = (u'_1,\dots,u'_m) \in (\Sigma)^m$ satisfies
the following: (1) $\bar u'$ is accepted by $R$, and (2) for each
$(j,k) \in I$ it is the case that $u'_j \suff u'_k$. We formally
prove this below. Notice for the time being that this implies our
small model property. Indeed, if we recursively apply this procedure
to $\bar u$ we will end up with $\bar w = (w_1,\dots,w_m) \in
(\Sigma^*)^m$ such that: (1) $\bar w$ is accepted by $R$, (2) $w_j
\suff w_k$, for each $(j,k) \in I$, and (3) for each $1 \leq k \leq m$
the number of positions in $w_k$ between any two consecutive positions
in $\M_k$ is bounded by $s \cdot t \cdot 2^m \cdot |\Sigma|^m$.

Let $\bar u' = (u'_1,\dots,u'_m) \in (\Sigma)^m$ be the result of
applying once the cutting procedure described above to $\bar u =
(u_1,\dots,u_m)$, starting from string $\bar u_j$ by cutting positions
from $p_j - r + 1$ to $p_j - r'$ ($r > r'$). It is not hard to see that $\bar u'$ is
accepted by $R$, since each $u_k$ has been cut in a way that is
invariant with respect to the accepting run of $\NN_{i_k}$ over
$u_k$. Assume that $(\l,k) \in I$. We need to prove that $u'_\l \suff
u'_k$. If $u_\l = u'_\l$ and $u_k = u'_k$ then $u'_\l \suff u'_k$ by
assumption. Assume then that at least one of $u_\l$ and $u_k$ has been
cut. Suppose first that $u_\l$ has been cut from position $p_\l - r + 1$
to position $p_\l - r'$ in order to obtain $u'_\l$. Then $u_j
(\LHo{r})^+ u_\l$ and $u_j (\LHo{r'})^+ u_\l$. Clearly, it is also the
case that $u_\l \LHo{r} u_k$ and $u_\l \LHo{r'} u_k$, which implies
that $u_j (\LHo{r})^+ u_k$ and $u_j (\LHo{r'})^+ u_k$. Thus,
$u_k$ is also cut from position $p_k - r + 1$ to $p_k - r'$ in order
to obtain $u'_k$, and hence $u'_\l \suff u'_k$. Suppose, on the
other hand, that $u_\l$ has not been cut but $u_k$ has been cut from
position $p_k - r + 1$ to position $p_k - r'$ in order to
obtain $u'_k$. We consider three cases:

\begin{enumerate}[(1)]

\item $r' > p_j - 1$. Then clearly $u'_k \suff u'_j$.  

\item $r' \leq p_j - 1$ and $r > p_j - 1$. This cannot be the case
  since then either $p_k - r'$ is a marked position in $\M_k$ (when
  $r' = p_j - 1$), or $p_k - r$ and $p_k - r'$ have a marked position
  in $\M_k$ in between (namely, $p_k - p_j + 1$). Any of these 
  contradicts the fact that a cutting of $u_k$ could be applied from
  position $p_k - r$ to position $p_k - r'$ in order to obtain $u'_k$. 

\item $r' < p_j - 1$ and $r \geq  p_j - 1$. Similar to the previous
  one. 

\item $r < p_j - 1$. But then clearly $u_\l \LHo{r} u_k$ and $u_\l
  \LHo{r'} u_k$, which implies that $u_j (\LHo{r})^+ u_\l$ and $u_j
  (\LHo{r'})^+ u_\l$. This implies that $u_\l$ should have also been
  cut from position $p_\l - r$ to position $p_\l - r'$ in order to
  obtain $u'_\l$, which is a contradiction.

\end{enumerate} 

\medskip

\noindent We can finally prove the theorem using the small model property. In
fact, in order to check whether $R \cap_I \{\suff\} \neq \emptyset$ we
only need to guess an exponential size witness $\bar w$, and then
check in polynomial time that it satisfies $R$ and each projection in
$I$ satisfies $\suff$. This algorithm clearly works in
nondeterministic exponential time.
\end{proof}


\section{Conclusions}
\label{concl:sec}

\begin{figure*}
\begin{center}
{
\begin{tabular}{|c|c|c|c|} \hline
& $R\in\REC$ &  $R\in\REG$ &  $R\in\RAT$ \\ \hline 
$\INT R {\mbox{$\subw$}}$ &  & undecidable
& undecidable \\ \cline{1-1}\cline{3-4} 
$\INT R {\mbox{$\suff$}}$ & \ptime\ (cf.~\cite{berstel}) & {undecidable}
& undecidable \\ \cline{1-1}\cline{3-4} 
$\INT R {\mbox{$\subseq$}}$ &  & decidable,
NMR &  decidable,
NMR \cite{CS-fsttcs07} \\ \hline\hline
$\INTPI R {\mbox{$\subw$}} I$ & ? & undecidable
&  \\ \cline{1-3}
$\INTPI R {\mbox{$\suff$}} I$ & \nexp & undecidable
& undecidable \\ \cline{1-3}
$\INTPI R {\mbox{$\subseq$}} {I_{}}$ & \nexp & decidable,
NMR &   \\ \hline
\end{tabular}

\vspace{7mm}

\begin{tabular}{|c|c|c|c|c|} \hline
& $S\ =\ \subseq$ &  $S\ = \ \suff$ & 
$S\ =\ \subw$ &  $S$ arbitrary in $\RAT$ \\ \hline 
\ecrpq($S$) &  decidable, NMR & undecidable
& undecidable & undecidable \\ \hline
\crpq($S$) & \nexp & \nexp & ? & undecidable \\ \hline
acyclic \crpq($S$) & \pspace & \pspace & \pspace & \pspace \\ \hline
\end{tabular}
}
\caption{Complexity of the intersection and generalized intersection
problems, and combined complexity of graph queries for subword
($\subw$), suffix ($\suff$), and 
subsequence ($\subseq$) relations. NMR stands for non-multiply-recursive lower bound.}
\label{summary-fig}
\end{center}
\end{figure*}

Motivated by problems arising in studying logics on graphs
(as well as some verification problems), we studied the intersection
problem for rational relations with recognizable and regular relations
over words. We have looked at rational relations such as subword
$\subw$, suffix $\suff$, and subsequence $\subseq$, which are often
needed in graph 
querying tasks. The main results on the complexity of the intersection
and generalized intersection problems, as well as the combined
complexity of evaluating different classes of logical queries over
graphs are summarized in Fig.~\ref{summary-fig}. Several results
generalizing those (e.g., to the class of $\SCR$ relations) were also
shown. Two problems related to the interaction of the subword relation
with recognizable relations remain open and
appear to be hard.

\OMIT{
While problems $\INT R S$ and $\INT R S$ are known to
be decidable in polynomial time for every rational $S$ when $R$ ranges
over $\REC$ \cite{berstel}, and $\INT R {\mbox{$\subseq$}}$ is known
to be decidable in non-multiply-recursive time when $R$ ranges over
$\RAT$ \cite{CS-fsttcs07}, we completed the picture by showing that
$\INT R {\mbox{$\subw$}}$ is undecidable when $R$ ranges over $\REG$
or $\RAT$, and $\INT R {\mbox{$\subseq$}}$ is non-multiply-recursive
even when $R$ ranges over $\REG$. Moreover, these results extend to
classes of relations $S$ that share structural properties with $\subw$
and $\subseq$. 

We then looked at the generalized version of the intersection
problem. The undecidability results for subword carry over to it; for
$\GENINT_{\subseq}(\R)$ results differ depending on the class $\R$: if
it is $\REG$, the problem is decidable, if it is $\RAT$, the problem
is undecidable. 

In terms of graph logics, this tells us that \ecrpq($\subw$) is
undecidable, and \ecrpq($\subseq$) is decidable, although
non-multiply-recursive. We then looked at languages \crpq($S$), and
showed that queries in which $S$-comparisons are acyclic, can always
be answered with \nlog\ data and \pspace\ combined
complexity. Non-acyclic queries here can be undecidable. 

}

{From} the practical point of view, as  
rational-relation comparisons are demanded by many
applications of graph data, our results essentially say that such
comparisons should not be used together with regular-relation
comparisons, and that they need to form acyclic patterns (easily
enforced syntactically) for efficient evaluation.

So far we dealt with the classical setting of graph data
\cite{AG-survey,CGLV00,CGLV00b,CM90,CMW87} in which the model of data
is that of a graph with labels from a finite alphabet. In
both graph data and verification problems it is often necessary to
deal with the extended case of infinite alphabets (say, with graphs
holding data values describing its nodes), and languages that query
both topology and data have been proposed recently
\cite{1-in-3-bitch,icdt12}. A natural question is to extend the
positive results shown here to such a setting. 


\small{

}  



\end{document}